\documentclass[12pt]{article}
\usepackage[left=1in,right=1in,top=1in,bottom=1in]{geometry}
\usepackage{setspace}
\usepackage{pdflscape}
\usepackage{booktabs}
\usepackage{soul,xcolor}
\definecolor{seagreen}{RGB}{46, 139, 87}
\definecolor{coral}{RGB}{234,112,112}

\usepackage[reqno]{amsmath}
\usepackage{mathtools}
\usepackage{hyperref}
\definecolor{darkblue}{rgb}{0.0, 0.0, 0.55}
\definecolor{crimsonred}{rgb}{0.54, 0.08, 0.13}
\hypersetup{
    colorlinks=true,
    linkcolor=crimsonred,
    citecolor=darkblue,
    urlcolor=crimsonred
}
\usepackage{amsthm}
\usepackage{amssymb,enumerate}
\usepackage{tikz}
\usetikzlibrary{fit,positioning,shapes,decorations.pathmorphing,decorations.pathreplacing,arrows}

\newcommand\E{\mathbb{E}}

\usepackage{titletoc}
\usepackage{framed}
\usepackage{caption}
\newtheorem{theorem}{Theorem}
\newtheorem{assumption}{Assumption}
\newtheorem{definition}{Definition}
\newtheorem{proposition}{Proposition}
\newtheorem{corollary}{Corollary}
\usepackage[authoryear,sort]{natbib}
\usepackage{bibunits}

\title{\singlespacing Let Time Tell: Identification and Gaussian Process Estimation for Interrupted Time Series \thanks{The author thanks Jeffrey Lewis and Chad Hazlett for their invaluable advice and support throughout this project. The author is also grateful to Kosuke Imai, Ian Lundberg, Andrew Bertoli, Chris Felton, Ryan Baxter-King for helpful comments and discussions relevant to this project, and Jack Kappelman for both discussion and sharing National Instant Criminal Background Check System (NICS) data. This research benefited from feedback in presentations at the PolMeth 2024. This research was supported in part by a UCLA Dissertation Year Fellowship. The content is solely the responsibility of the author.}}

\author{Soonhong Cho\footnote{Department of Political Science, University of California, Los Angeles. Email: \href{mailto:soonhongcho@ucla.edu}{soonhongcho@ucla.edu}, URL: \href{https://soonhong-cho.github.io/}{https://soonhong-cho.github.io/} }}

\begin{document}

\maketitle
\thispagestyle{empty}
\vspace{-.5in}

\begin{abstract}
We study causal inference in interrupted time series designs where a treatment affects every unit simultaneously, so that the contemporaneous controls used by difference-in-differences and synthetic control are unavailable and the counterfactual must be extrapolated from a unit's own pre-treatment history. We establish identification within the potential outcomes framework and estimate the counterfactual by Gaussian process regression. Rather than committing to a single best-fitting trend, the estimator retains the functions consistent with the pre-treatment series and widens its intervals where extrapolation magnifies their divergence. Connecting it to reproducing kernel Hilbert space theory, we derive a bias decomposition that isolates the component extrapolation inflates and a worst-case bound on that component, justifying the Gaussian process estimator's posterior variance as extrapolation-aware uncertainty quantification. In closed form, the band equals the worst-case divergence the model class permits among functions consistent with the pre-treatment data. The method is illustrated with calibrated simulations and an analysis of handgun purchases after the Supreme Court's \textit{Heller} decision, a universal treatment whose practical effect concentrates in a single jurisdiction. An \texttt{R} package, \texttt{gpss}, implements the approach.
\end{abstract}
\noindent
\small{\textit{Keywords}: causal inference, extrapolation, Gaussian process regression, interrupted time series, machine learning, reproducing kernel Hilbert space}

\newpage
\setcounter{page}{1}

\begin{bibunit}[apalike]

\section{Introduction}

A recurring challenge for causal inference is to estimate the effect of an event that reaches an entire population at once. Supreme Court decisions transform legal landscapes nationwide overnight, with \textit{Citizens United} reshaping campaign finance and \textit{Heller} altering the national interpretation of individual gun rights. Wars engulf entire societies and federal policies bind every jurisdiction simultaneously. The COVID-19 pandemic is the starkest case of all, a global shock that altered behaviour and institutions everywhere. Such events are \textit{universal treatments}, single interventions that reach every unit simultaneously, and they pose a distinct challenge for causal inference. With no units left untreated, comparing treated to control units within the same post-treatment period is infeasible, and the practical effect is often concentrated: the precedent in \textit{Heller} binds every state, yet the ruling reaches only the handgun ban in Washington, D.C., so its immediate consequences fall on a single jurisdiction.

Standard identification strategies rely on comparisons that are unavailable here. Difference-in-differences (DiD) requires untreated comparison units \citep{card1993minimum}, often with variation in treatment timing \citep{callaway2021difference, sun2021estimating}, and the synthetic control (SC) method requires a donor pool that remains untreated \citep{abadie2003economic, abadie2010synthetic}. Researchers must instead turn to temporal variation within treated units, asking not how treated units differ from controls but how outcomes deviate from what would have occurred absent treatment.

One natural strategy for this setting is the interrupted time series (ITS) design \citep{campbell1963experimental, box1975intervention}. Pre-treatment patterns encode structural relationships, such as economic, institutional, and behavioural regularities, that would persist absent the intervention; an ITS learns these patterns and projects them forward to form the counterfactual trajectory, whose contrast with the realised outcomes measures the effect. Identification rests on a stability condition, that the conditional expectation of the untreated outcome given covariates remains the same function before and after the treatment onset. The condition can fail under concurrent shocks or structural breaks, but it states a clear requirement, that no force other than the intervention alters how the trajectory would otherwise have evolved.

Existing ITS approaches carry two limitations. First, they impose restrictive functional forms. Segmented regression, the workhorse of ITS analysis, fits piecewise linear or polynomial models whose coefficients recover the level/slope change attributed to treatment \citep{bernal2017interrupted}, risking substantial bias when the true process exhibits nonlinearities beyond the support of the observed data. Second, and more importantly, they are not designed to quantify \textit{extrapolation uncertainty}. Many functions in a given class can fit the pre-treatment record equally well yet diverge at a post-treatment target; the worst-case divergence is a property of the class itself and grows with temporal distance. Standard methods commit to a single specification and condition the interval on that choice, so the interval does not widen as the counterfactual extends beyond the data.

We propose Gaussian process (GP) regression for the counterfactual estimation, a nonparametric method that carries uncertainty over the whole function rather than fixing a single fit \citep{rasmussen2006gaussian}. The kernel sets the available function space, and conditioning on the pre-treatment series concentrates it on the compatible functions. For the kernels of Section~\ref{sec:estimation}, the predictive variance interval widens with extrapolation distance, so the reported uncertainty reflects how far the counterfactual reaches beyond the training data \citep{cho2026inference}.

This article's central contribution is an ITS estimator with extrapolation-aware uncertainty quantification. The GP posterior mean estimates the counterfactual, and a reproducing kernel Hilbert space (RKHS) reading of the posterior variance justifies it as the measure of uncertainty about that estimate. A bias decomposition separates the estimation error into identification, approximation, learning, and noise components and names the assumption or condition that controls each. A worst-case bound then shows that the posterior variance controls the learning error over functions within a fixed complexity budget, and the bound is not loose: a least favourable counterfactual attains it, no estimator using the same data improves on it, and the band carries the extrapolation uncertainty of the model class itself, in closed form.

Two further contributions support the central one. We establish identification of the ITS estimand within the potential outcomes framework \citep{neyman1923application, rubin1974estimating}, resting on a conditional-mean stability condition weaker than conditional independence, and we propose placebo checks and supplementary diagnostics that probe the assumptions.

The rest of the paper proceeds as follows. Section~\ref{sec:identification} formalises identification, Section~\ref{sec:estimation} develops the estimator and placebo checks, Section~\ref{sec:theory} establishes the bias decomposition and worst-case bounds through RKHS theory. Section~\ref{sec:simulation} examines finite-sample properties of the proposed estimator in calibrated simulations, Section~\ref{sec:application} applies the method to handgun purchases after \textit{Heller} with further examples in Supplementary Material~\ref{app:additional_examples}, and Sections~\ref{sec:discussion}--\ref{sec:conclusion} situate the estimator among existing approaches and conclude.

\section{Identification in Interrupted Time Series Design}\label{sec:identification}

To gain some intuition that informs our approach, suppose we observe a sequence of untreated outcomes before an event of interest followed by treated outcomes. Without contemporaneous controls, a researcher predicts what would have happened absent treatment by learning from pre-treatment history. If pre-treatment outcomes follow a steady trend with a stable cycle, that pattern should continue absent treatment, and post-treatment deviations from the projection are attributed to the causal effect (Figure~\ref{fig:illustration}). This section formalises the strategy, showing that stability in the outcome--observable relationship substitutes for controls and stating the conditions under which pre-treatment patterns identify counterfactual outcomes.

\begin{figure}[tbh!]
    \centering
    \includegraphics[width=1\textwidth]{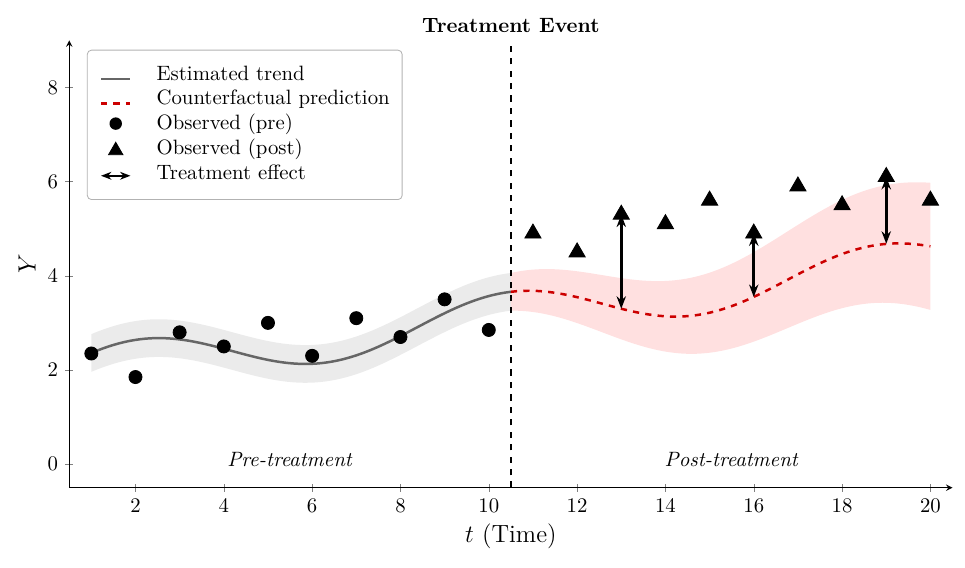}
    \caption{Illustration of Interrupted Time Series Design. \textit{Note:} A flexible trend (solid grey) fitted to pre-treatment outcomes (black dots) is extrapolated as the counterfactual (dashed red), with 95\% prediction intervals (shaded) widening with distance from the data. Treatment effects (black arrows) are the gaps between observed outcomes (black triangles) and the counterfactual; the vertical dashed line marks treatment onset.}    
    \noindent\textbf{Alt text:} Schematic time series plot with a fitted pre-treatment trend extended past the intervention as a dashed counterfactual, its shaded interval widening with distance, and arrows marking gaps to observed points.
    \label{fig:illustration}
\end{figure}

\paragraph{Setup.}

Consider a balanced panel of $N$ units observed over $T$ periods. For each unit $i$ and time $t$, let $Y_{it}$ denote the scalar observed outcome, $D_{it} \in \{0,1\}$ treatment status, and $\mathbf{X}_{it}$ a vector of observed covariates that includes time $t$ itself alongside any additional characteristics. Unobserved time-varying attributes that may affect outcomes are collected in $\mathbf{U}_{it}$. Under the potential outcomes framework \citep{neyman1923application, rubin1974estimating}, let $Y_{it}(1)$ and $Y_{it}(0)$ denote the potential outcomes under treatment and control, respectively. Treatment begins simultaneously for all units at time $t_0$ and persists thereafter: $D_{it} = \mathbf{1}(t \geq t_0)$.

We link observed to potential outcomes through the standard \textit{consistency} assumption:
\begin{equation}
    Y_{it} = D_{it}\,Y_{it}(1) + (1 - D_{it})\,Y_{it}(0). \label{eq:consistency}
\end{equation}
For consistency to deliver untreated outcomes before $t_0$, treatment must not act in advance of its onset (\textit{no anticipation}):
\begin{equation}
    Y_{it}(1) = Y_{it}(0) \quad \text{and} \quad 
    \mathbf{X}_{it}(1) = \mathbf{X}_{it}(0), 
    \quad t < t_0.
    \label{eq:no_anticipation}
\end{equation}
Equation~\eqref{eq:no_anticipation} pairs two conditions. No anticipation proper is the outcome equality $Y_{it}(1) = Y_{it}(0)$ for $t < t_0$, which with consistency gives $Y_{it} = Y_{it}(0)$ before $t_0$. Because treatment is simultaneous and universal, spillovers from treated to untreated units cannot arise by design; the causal quantities below are total effects inclusive of any interactions among treated units \citep{cerqua2023causal}. The counterfactual $Y_{it}(0)$ encodes the continuation of the pre-treatment regime, so expectations held at a stable level are absorbed into the conditional mean the method learns; what the equality rules out is anticipatory drift, the pattern the placebo checks of Section~\ref{subsec:placebo_check} are built to detect, and Supplementary Material~\ref{app:assumptions} assesses it in each application. At post-treatment periods the covariate equality $\mathbf{X}_{it^*}(1) = \mathbf{X}_{it^*}(0)$ is additionally required; it holds by construction when $\mathbf{X}_{it}$ consists of time and calendar indicators (e.g., quarter, month, etc.).

We partition at $t_0$: $Y_{i,\text{pre}} = (Y_{i1}, \ldots, Y_{i,t_0-1})^\top$ and $\mathbf{X}_{i,\text{pre}} = (\mathbf{X}_{i1}, \ldots, \mathbf{X}_{i,t_0-1})^\top$ are the pre-treatment outcomes and covariates, and $Y_{it^*}$, $\mathbf{X}_{it^*}$ denote the observed outcome and covariates at a post-treatment period $t^* \geq t_0$. In the estimation procedure of Section~\ref{sec:estimation}, $(\mathbf{X}_{i,\text{pre}}, Y_{i,\text{pre}})$ are training data and $\mathbf{X}_{it^*}$ the test inputs at which counterfactual predictions are formed.

\paragraph{Structural Model.}
For each unit $i$ at time $t$, the untreated potential outcome is governed by
\begin{align}
    Y_{it}(0) = h_i(\mathbf{X}_{it}, \mathbf{U}_{it}) + \varepsilon_{it}, 
    \quad E[\varepsilon_{it} \mid \mathbf{X}_{it}, \mathbf{U}_{it}] = 0,
    \label{eq:structural}
\end{align}
where $\mathbf{X}_{it}$ is the observed covariate vector, $\mathbf{U}_{it}$ collects unobservables, and $\varepsilon_{it}$ is idiosyncratic noise with variance $\sigma^2$. Any unobserved factor that moves the conditional mean of $Y_{it}(0)$ belongs to $\mathbf{U}_{it}$; $\varepsilon_{it}$ is the remainder, mean-zero given both arguments, so it contributes variance but does not bias the counterfactual. Measurement error in the outcome that is mean-zero given $(\mathbf{X}_{it}, \mathbf{U}_{it})$ joins $\varepsilon_{it}$; a systematic shift is a change in $\mathbf{U}_{it}$ (see Supplementary Material~\ref{app:assumptions}). The structural function $h_i: \mathcal{X} \times \mathcal{U} \to \mathbb{R}$ governs how observables and unobservables jointly determine the systematic component of outcomes. The function is unit-specific: two units at the same $(\mathbf{X}_{it}, \mathbf{U}_{it})$ may have different outcomes, reflecting heterogeneity in how units respond to observable and unobservable conditions. 

Any causal influence of pre-treatment history operates through $\mathbf{X}_{it}$ if observed or $\mathbf{U}_{it}$ if unobserved; past outcomes do not enter $h_i$ as arguments. Conditioning directly on lagged outcomes \citep{cerqua2023causal, brodersen2015inferring} requires recursive substitution of predicted values when forecasting several steps ahead, compounding error at each horizon, whereas the estimator of Section~\ref{sec:estimation} carries temporal dependence through a covariance kernel that assigns high covariance to nearby periods.

\paragraph{Target Function.}

The function $g_i : \mathcal{X} \to \mathbb{R}$ represents the best prediction of $Y_{it}(0)$ achievable from observables:
\begin{align}
    g_i(\mathbf{X}_{it}) \coloneqq E[Y_{it}(0) \mid \mathbf{X}_{it}] 
    = \int h_i(\mathbf{X}_{it}, \mathbf{u})\, 
    dP(\mathbf{U}_{it} = \mathbf{u} \mid \mathbf{X}_{it}),
    \label{eq:target}
\end{align}
where the integral averages $h_i$ over the conditional distribution of $\mathbf{U}_{it}$ given $\mathbf{X}_{it}$, not the marginal; the distinction matters because unobservables and observables can be correlated. Defined purely through conditional expectations, $g_i$ is the object a well-specified regression of $Y_{it}(0)$ on $\mathbf{X}_{it}$ would recover. The identification strategy maps $(Y_{i,\text{pre}}, \mathbf{X}_{i,\text{pre}}, \mathbf{X}_{it^*}) \mapsto g_i(\mathbf{X}_{it^*})$, recovering $g_i$ from the pre-treatment history and evaluating it at the post-treatment input; we write $g_i(\mathbf{X}_{it^*})$ throughout, suppressing its dependence on the pre-treatment data.

\paragraph{Key Identification Assumption.}

The structural function $h_i$ depends on unobservables $\mathbf{U}_{it}$, making direct estimation infeasible. Identification requires that \textit{within} each unit's data-generating process (DGP), $\mathbf{X}_{it}$ determine the conditional mean of $Y_{it}(0)$ without knowledge of $\mathbf{U}_{it}$. The unit-specific function $g_i$ is then recovered from unit $i$'s own pre-treatment data.

\begin{assumption}[Mean Sufficiency]
\label{assum:sufficiency}
For all units $i$ and all periods $t$: $$\mathbb{E}[Y_{it}(0) \mid \mathbf{X}_{it},\, \mathbf{U}_{it}] = \mathbb{E}[Y_{it}(0) \mid \mathbf{X}_{it}].$$
\end{assumption}
\noindent Adding and subtracting $g_i$ in the structural model~\eqref{eq:structural} writes the untreated outcome as an observable mean, an unobservable deviation, and noise,
\begin{align}
    Y_{it}(0) = \underbrace{g_i(\mathbf{X}_{it})}_{\text{observable mean}} + \underbrace{\bigl[h_i(\mathbf{X}_{it}, \mathbf{U}_{it}) - g_i(\mathbf{X}_{it})\bigr]}_{\text{unobservable deviation}} + \underbrace{\varepsilon_{it}}_{\text{noise}}.
    \label{eq:decomp_obs}
\end{align}
By construction of $g_i$, the deviation has mean zero given $\mathbf{X}_{it}$. Mean Sufficiency strengthens this to $h_i(\mathbf{X}_{it}, \mathbf{U}_{it}) = g_i(\mathbf{X}_{it})$ almost surely, so once $\mathbf{X}_{it}$ is known the particular value of $\mathbf{U}_{it}$ does not shift the expected outcome. The gap $h_i - g_i$ is the identification error that Section~\ref{sec:theory} isolates as Term~I. The assumption is not determinism, since outcomes still deviate from their conditional expectations through $\varepsilon_{it}$; it restricts only the \textit{functional} role of $\mathbf{U}_{it}$, which conditional on $\mathbf{X}_{it}$ must not carry additional information about the mean of $Y_{it}(0)$, while $\mathbf{U}_{it}$ and $\mathbf{X}_{it}$ may be arbitrarily dependent. The unit level is deliberate: each counterfactual is learned from that unit's own series, so every estimand rests on this condition, and a weaker requirement that biases merely average to zero would identify only $\bar{\tau}_{t^*}$.

This is a conditional-mean restriction, not a distributional one. It constrains only the first moment of $Y_{it}(0)$ given $(\mathbf{X}_{it}, \mathbf{U}_{it})$ and is weaker than the conditional-independence form of selection-on-observables \citep{hazlett2018trajectory}. Mean Sufficiency is the temporal counterpart of that cross-sectional condition: where selection-on-observables identifies a counterfactual by comparing units that share covariates, the ITS requires the within-unit relationship $g_i$ to be stable across time. Section~\ref{sec:discussion} situates the design relative to DiD and SC.

Under Assumption~\ref{assum:sufficiency}, the pre-treatment observations simplify to
\begin{align}
    Y_{it} = g_i(\mathbf{X}_{it}) + \varepsilon_{it}, 
    \quad E[\varepsilon_{it} \mid \mathbf{X}_{it}] = 0,
    \quad t < t_0,
    \label{eq:pretreatment_regression}
\end{align}
with the same idiosyncratic $\varepsilon_{it}$ as in~\eqref{eq:structural}. No part of $\mathbf{U}_{it}$ is folded into the error, so $g_i$ is the conditional expectation function (CEF) of a regression of $Y_{it}$ on $\mathbf{X}_{it}$. When $\mathbf{X}_{it}$ carries calendar time and period indicators, the covariate path is fixed by the design and the randomness in $Y_{it}(0)$ is due to the noise term $\varepsilon_{it}$ alone, the convention under which the bias in Section~\ref{sec:theory} is computed and the intervals are read.

\paragraph{Estimands.}

We define the causal estimands, beginning at the unit-period level. The unit- and period-specific treatment effect is $\delta_{it^*} = Y_{it^*}(1) - Y_{it^*}(0),$ the difference between the two potential outcomes for the same unit-period. The post-treatment period furnishes only $Y_{it^*}(1)$; no design recovers a unit's realised untreated outcome at a treated period without untenable assumption, so $\delta_{it^*}$ is not identified. We therefore target its conditional-mean counterpart,
\begin{align}
    \tau_{it^*} = Y_{it^*}(1) - g_i(\mathbf{X}_{it^*}),
    \label{eq:tau}
\end{align}
where $g_i(\mathbf{X}_{it^*}) = \mathbb{E}[Y_{it^*}(0) \mid \mathbf{X}_{it^*}]$. The estimand replaces the unrecoverable realisation with its conditional mean given observables, the quantity that standard methods in ITS actually estimate. Aggregate estimands average or accumulate the unit-period effects,
\begin{align}
    \bar{\tau}_{t^*} = \frac{1}{N}\sum_{i=1}^{N} \tau_{it^*}, \qquad
    \bar{\tau}_i = \frac{1}{T - t_0 + 1}\sum_{s=0}^{T-t_0} \tau_{i,t_0+s}, \qquad
    \tau_{i,\text{cum}} = \sum_{t^*=t_0}^{T} \tau_{it^*},
    \label{eq:aggregates_estimands}
\end{align}
the average effect at period $t^*$ across units, the average effect for unit $i$ over the post-treatment window, and its cumulative total. Identification of every estimand turns on recovering $g_i$ from pre-treatment data and on its stability across $t_0$, which Assumption~\ref{assum:sufficiency} secures.

\begin{theorem}[Identification of Treatment Effects]
\label{theorem:identification}
Under consistency \eqref{eq:consistency}, no anticipation \eqref{eq:no_anticipation}, and Assumption~\ref{assum:sufficiency}, the treatment effect at $t^* \geq t_0$ is identified as $$\tau_{it^*} = Y_{it^*} - g_i(\mathbf{X}_{it^*}),$$ where $g_i$ coincides with the CEF of the observed pre-treatment data, $g_i(\mathbf{x}) = \mathbb{E}[Y_{it} \mid \mathbf{X}_{it} = \mathbf{x}]$ for $t < t_0$.
\end{theorem}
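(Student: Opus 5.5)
The proof has two parts. First, the post-treatment observed outcome delivers the treated potential outcome. Second, the function $g_i$ appearing in the estimand is recoverable from pre-treatment observables.

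\textbf{Step 1: post-treatment outcomes.} For $t^* \geq t_0$ we have $D_{it^*}=1$, so consistency \eqref{eq:consistency} gives $Y_{it^*}=Y_{it^*}(1)$. Substituting into the definition \eqref{eq:tau} yields $\tau_{it^*}=Y_{it^*}-g_i(\mathbf{X}_{it^*})$ immediately. The covariate equality $\mathbf{X}_{it^*}(1)=\mathbf{X}_{it^*}(0)$, stated after \eqref{eq:no_anticipation}, ensures that the observed $\mathbf{X}_{it^*}$ is the argument at which the untreated conditional mean is to be evaluated. It is automatic when $\mathbf{X}_{it}$ consists of time and calendar indicators.

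\textbf{Step 2: pre-treatment outcomes and the CEF.} For $t<t_0$, consistency gives $Y_{it}=Y_{it}(0)$, since $D_{it}=0$. No anticipation guarantees that the observed covariates equal the untreated ones. Hence, for every $\mathbf{x}$ in the support of the pre-period covariates,
$$\mathbb{E}[Y_{it}\mid \mathbf{X}_{it}=\mathbf{x}]=\mathbb{E}[Y_{it}(0)\mid \mathbf{X}_{it}=\mathbf{x}]=g_i(\mathbf{x}).$$
Mean Sufficiency then justifies the representation \eqref{eq:pretreatment_regression}. By iterated expectations over $\mathbf{U}_{it}$ given $\mathbf{X}_{it}$, the deviation $h_i-g_i$ vanishes in conditional mean. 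The observed regression function is therefore exactly the structural mean $g_i$, with no unobservable component absorbed into the error.

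\textbf{Step 3: carrying $g_i$ across $t_0$.} This is where the argument needs care. The CEF computed on $t<t_0$ is a function of $\mathbf{x}$ alone, but we must argue that the same function applies at $t^*\geq t_0$. I would use that Assumption~\ref{assum:sufficiency} holds for all $t$, and read it, as the text does, as $h_i(\mathbf{X}_{it},\mathbf{U}_{it})=g_i(\mathbf{X}_{it})$ almost surely. Since $h_i$ is time-invariant and unit-specific, the conditional mean $\int h_i(\mathbf{x},\mathbf{u})\,dP(\mathbf{u}\mid\mathbf{x})$ is then the same function of $\mathbf{x}$ in every period. It equals $h_i(\mathbf{x},\cdot)$, which does not depend on $\mathbf{u}$, so shifts in the distribution of $\mathbf{U}_{it}$ across $t_0$ cannot change it.

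The main obstacle is that $\mathbf{X}_{it^*}$ includes time and so lies outside the pre-treatment support. The CEF is pinned down only on the support of pre-treatment covariates, and evaluating it at $\mathbf{X}_{it^*}$ is an extrapolation. I would state explicitly that identification here means $g_i$ is the same, period-invariant function whose values are determined by the pre-treatment joint law wherever that law pins it down. Recovery at out-of-support inputs is then delegated to the function-class restrictions of Section~\ref{sec:theory}. Finally, the aggregate estimands in \eqref{eq:aggregates_estimands} are finite sums of identified $\tau_{it^*}$, so they are identified as well.
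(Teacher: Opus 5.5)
Your proposal is correct and follows essentially the same route as the paper's proof. Consistency at $t^*\ge t_0$ reduces the estimand to $Y_{it^*}-g_i(\mathbf{X}_{it^*})$. Mean Sufficiency, read as $h_i(\mathbf{X}_{it},\mathbf{U}_{it})=g_i(\mathbf{X}_{it})$ a.s.\ for all $t$, together with the time-invariance of $h_i$, carries $g_i$ across $t_0$. Consistency and no anticipation then make $g_i$ the pre-treatment regression function. Your closing caveat, that this pins $g_i$ down only on the pre-treatment support and that evaluation at $\mathbf{X}_{it^*}$ is an extrapolation left to the function-class restrictions of Section~\ref{sec:theory}, is the same qualification the paper attaches after its own proof.
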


\noindent A proof is given in Supplementary Material~\ref{app:id_proof}. Theorem~\ref{theorem:identification} shows that the counterfactual half of the estimand is controlled by two features of the design: consistency and no anticipation deliver untreated outcomes before $t_0$, and Mean Sufficiency makes the relationship those outcomes trace a fixed function of $\mathbf{X}_{it}$ that persists past $t_0$. The substitution of $Y_{it^*}$ for $Y_{it^*}(1)$ uses consistency alone, so the content of the theorem is the second clause, that the unobservable half of the estimand is pinned down by the observed pre-treatment distribution. That distribution recovers $g_i$ on the support of $\mathbf{X}_{i,\text{pre}}$; because $\mathbf{X}_{it^*}$ carries calendar time, evaluating $g_i$ there reaches beyond the support, and the reach is disciplined by the smoothness the kernel encodes in place of a parametric form. What that discipline leaves unresolved is a property of the model class itself.

\begin{definition}[Extrapolation uncertainty]\label{def:extrap_uncertainty}
For the ball $\mathcal{F}_B = \{f \in \mathcal{H}_k : \|f\|_{\mathcal{H}_k} \le B\}$ in the RKHS induced by the kernel (Section~\ref{subsec:gp_estimation}) and a post-treatment input $\mathbf{X}_{it^*}$, the extrapolation uncertainty of $\mathcal{F}_B$ at $\mathbf{X}_{it^*}$ is
\begin{equation}\label{eq:extrap_uncertainty}
    \omega_B(\mathbf{X}_{it^*}) = 
    \sup\bigl\{\, |f(\mathbf{X}_{it^*}) - f'(\mathbf{X}_{it^*})| : f, f' \in
    \mathcal{F}_B,\; f(\mathbf{X}_{it}) = f'(\mathbf{X}_{it}) \text{ for all }
    t < t_0 \,\bigr\},
\end{equation}
the largest disagreement at $\mathbf{X}_{it^*}$ between members of the class that agree at every pre-treatment input.
\end{definition}

\noindent The definition adapts \citet[Definition~1]{shen2025engression} to the fixed design of the ITS: agreement at the observed inputs replaces agreement on a covariate support, the disagreement is read at a given post-treatment input rather than over a neighbourhood of that support, and the class is a norm ball rather than a shape-restricted one. The ball is what keeps $\omega_B$ finite, since $\mathcal{H}_k$ contains functions vanishing at every pre-treatment input and any multiple of one may be added to $f$ without leaving the space. Targeting the full conditional distribution rather than the conditional mean does not lower it either, because the structural model~\eqref{eq:structural} places the noise after the nonlinearity, and for that class distributional and functional extrapolability coincide \citep[Theorem~1(ii)]{shen2025engression}. Section~\ref{sec:theory} computes $\omega_B(\mathbf{X}_{it^*})$ in closed form.

The primary threat to Assumption~\ref{assum:sufficiency} is inherent to the ITS design rather than to a particular estimator. A component of $\mathbf{U}_{it^*}$ may take values after $t_0$ that it never took before, so a factor that sat constant throughout the pre-treatment window, and was therefore absorbed into the learned relationship, begins to shift $\E[Y_{it^*}(0) \mid \mathbf{X}_{it^*}]$, as with the post-election anxiety about gun restrictions in our \textit{Heller} application (Section~\ref{sec:application}). The bias decomposition of Section~\ref{sec:theory} expresses this threat through Term~I. Such breaks become more likely as the horizon lengthens, so shorter post-treatment windows are more credible; domain knowledge about concurrent events is essential, and the placebo checks (Section~\ref{subsec:placebo_check}) diagnose the estimation procedure in the pre-treatment period where the true effect is zero.

\section{Proposed Methodology: Gaussian Process Estimation for ITS}\label{sec:estimation}

Building on the identification result of Theorem~\ref{theorem:identification}, this section introduces a general estimation strategy and proposes a GP estimator \citep{rasmussen2006gaussian}.

\subsection{A General Estimation Procedure}\label{subsec:estimation_procedure}

Theorem~\ref{theorem:identification} reduces causal identification to a function learning problem: estimate $g_i(\cdot)$ from pre-treatment data and evaluate it at each post-treatment input. An ideal estimator would recover the target at the test input,
\begin{equation}
    \mathbb{E}[\hat{g}_i(\mathbf{X}_{it^*}) \mid \mathbf{X}_{i,\text{pre}}, \mathbf{X}_{it^*}] = g_i(\mathbf{X}_{it^*}),
    \label{eq:zero_learning}
\end{equation}
with the expectation over the pre-treatment noise. We do not assume~\eqref{eq:zero_learning}. The GP posterior mean is a regularised estimator that shrinks toward the prior away from the data, so~\eqref{eq:zero_learning} holds only in the limit under extrapolation. We therefore bound the learning error directly through the GP posterior variance (Section~\ref{sec:theory}).

The estimation procedure runs in four steps. (a) For each unit $i$, estimate $g_i$ from the pre-treatment data $(\mathbf{X}_{i,\text{pre}}, Y_{i,\text{pre}})$ to obtain $\hat{g}_i$. (b) For each post-treatment period $t^* \geq t_0$, predict the counterfactual outcome $\hat{Y}_{it^*}(0) = \hat{g}_i(\mathbf{X}_{it^*})$. (c) Estimate the treatment effect $\hat{\tau}_{it^*} = Y_{it^*} - \hat{g}_i(\mathbf{X}_{it^*})$. (d) Aggregate to the summary estimands of Section~\ref{subsec:panel} as needed. The procedure imposes no \textit{a priori} restriction on the temporal pattern of treatment effects, such as the step, pulse, or gradual transfer functions of the standard ITS toolkit \citep{mcdowall2019interrupted, schaffer2021interrupted}; the shape of $\tau_{it^*}$ over $t^*$ emerges from the data. Any learner for $g_i$ slots into step (a) without disturbing the causal interpretation established in Theorem~\ref{theorem:identification}.

\subsection{Gaussian Process Estimation}\label{subsec:gp_estimation}

Standard ITS approaches assume parametric forms that risk misspecification bias and undercoverage under unmodelled nonlinearities. We estimate $g_i$ by GP regression, which suits the ITS setting on three counts. It carries temporal dependence through the kernel rather than a fixed functional form, its predictive uncertainty grows with extrapolation distance, and the kernel encodes domain knowledge while regularising against overfitting, a combination standard for time series prediction \citep[e.g.,][]{gibson2012gaussian}.

We model the untreated potential outcome for unit $i$ at time $t$ as
\begin{align}
    Y_{it}(0) = f_i(\mathbf{X}_{it}) + \varepsilon_{it}, \quad 
    f_i \sim \mathcal{GP}(0, k), \quad
    \varepsilon_{it} \sim \mathcal{N}(0, \sigma^2),
    \label{eq:gp_model}
\end{align}
where $f_i$ is a unit-specific latent function and $k$ is a positive-definite covariance function encoding assumptions about the smoothness and structure of the underlying process; the matrix $\mathbf{K}_{\text{pre}}$, with elements $K_{tt'} = k(\mathbf{X}_{it}, \mathbf{X}_{it'})$, collects its evaluations at the pre-treatment inputs. Normality of $\varepsilon_{it}$ enters only the posterior computation and the marginal likelihood used for hyperparameter tuning; neither Theorem~\ref{theorem:identification} nor the bias decomposition of Proposition~\ref{prop:bias} relies on it. The target function is $g_i(\mathbf{X}_{it}) = \mathbb{E}[Y_{it}(0) \mid \mathbf{X}_{it}]$; under Assumption~\ref{assum:sufficiency} it is recoverable from pre-treatment data, and the GP posterior mean $\hat{g}_i$ is our estimator of $g_i$.

\paragraph{Posterior Prediction and Inference.}
Given pre-treatment data $(\mathbf{X}_{i,\text{pre}}, Y_{i,\text{pre}})$ and post-treatment covariate values $\mathbf{X}_{i,\text{post}}$, the posterior predictive distribution over counterfactual outcomes is \citep{rasmussen2006gaussian}:
\begin{align}
    \mathbf{Y}_{i,\text{post}}(0) \mid 
    Y_{i,\text{pre}},\, \mathbf{X}_{i,\text{pre}},\, 
    \mathbf{X}_{i,\text{post}}
    &\sim \mathcal{N}(\boldsymbol{\mu}_{\text{post}},\, 
    \boldsymbol{\Sigma}_{\text{post}}), \nonumber \\
    \boldsymbol{\mu}_{\text{post}} 
    &= \mathbf{K}_{\text{post,pre}}
       (\mathbf{K}_{\text{pre}} + \sigma^2 I)^{-1} 
       Y_{i,\text{pre}}, \label{eq:posterior_mean} \\
    \boldsymbol{\Sigma}_{\text{post}} 
    &= \mathbf{K}_{\text{post}} + \sigma^2 I 
       - \mathbf{K}_{\text{post,pre}}
         (\mathbf{K}_{\text{pre}} + \sigma^2 I)^{-1}
         \mathbf{K}_{\text{pre,post}}, \label{eq:posterior_var}
\end{align}
where $\mathbf{K}_{\text{post,pre}}$ has $(t^*, t)$-entry $k(\mathbf{X}_{it^*}, \mathbf{X}_{it})$ and $\mathbf{K}_{\text{post}}$ has $(t^*, t^{**})$-entry $k(\mathbf{X}_{it^*}, \mathbf{X}_{it^{**}})$. Writing $\mathbf{k}_{t^*}$ for the row of $\mathbf{K}_{\text{post,pre}}$ at $t^*$, the corresponding entry of $\boldsymbol{\mu}_{\text{post}}$ is the GP posterior mean $\hat{g}_i(\mathbf{X}_{it^*}) = \mathbf{k}_{t^*}^\top (\mathbf{K}_{\text{pre}} + \sigma^2 I)^{-1} Y_{i,\text{pre}}$, and the diagonal entry of $\boldsymbol{\Sigma}_{\text{post}}$ is the posterior predictive variance. Section~\ref{sec:theory} establishes that the GP posterior variance controls the learning error pointwise.

\paragraph{Kernel Specification.}

The kernel $k(\mathbf{X}_t, \mathbf{X}_{t'}) = \text{cov}(f(\mathbf{X}_t),\, f(\mathbf{X}_{t'}))$ sets the covariance between function values at any two inputs, so that inputs closer in the covariate space receive higher covariance. Each kernel induces a RKHS $\mathcal{H}_k$ of functions representable as weighted combinations of kernel evaluations, and the norm $\|f\|_{\mathcal{H}_k}$ measures the complexity of $f$ relative to the kernel, with functions built from directions the kernel treats as smooth carrying a small norm and rougher functions a large one. Section~\ref{sec:theory} puts this norm to work as the budget that disciplines extrapolation. As a starting point we take the Gaussian kernel,
\begin{equation}
    k_{\text{Gaussian}}(\mathbf{X}, \mathbf{X}') 
    = \exp\!\left(-\frac{\|\mathbf{X} - \mathbf{X}'\|^2}{b}\right),
    \label{eq:gaussian_kernel}
\end{equation}
which is universal, so $g_i$ is taken as smooth but otherwise unrestricted in shape \citep{micchelli2006universal}. In our setting it encodes the assumption that temporally proximate observations with similar covariate values have similar function values.

\paragraph{Combining Kernels.}

The Gaussian kernel is stationary, so predictions far from the pre-treatment data revert to the prior mean with uncertainty bounded by the marginal prior variance, a ceiling that suits local extrapolation but not longer horizons. Sums of valid covariance functions are valid covariance functions \citep{rasmussen2006gaussian}, so components can be assembled to suit an application \citep{duvenaud2013structure}; our working form is
\begin{equation}
    k_{\text{combined}}(\mathbf{X}, \mathbf{X}') 
    = k_{\text{Gaussian}}(\mathbf{X}, \mathbf{X}') 
    + k_{\text{periodic}}(\mathbf{X}, \mathbf{X}') 
    + k_{\text{linear}}(\mathbf{X}, \mathbf{X}').
    \label{eq:combined_kernel}
\end{equation}
The periodic kernel $k_{\text{periodic}}(\mathbf{X}, \mathbf{X}') = \exp\bigl(-2\sin^2(\pi|\mathbf{X} - \mathbf{X}'|/p)\,/\,(b/2)\bigr)$ captures seasonal patterns with period $p$, assigning maximum covariance to time points separated by integer multiples of $p$ \citep{mackay1998introduction}. The linear kernel $k_{\text{linear}}(\mathbf{X}, \mathbf{X}') = \mathbf{X}^\top \mathbf{X}'$ models trends that extend beyond the observed data, so predictive uncertainty grows with distance from the pre-treatment support; we use the homogeneous form, which keeps all components at the same degree and stays conservative when the post-treatment window is much shorter than the pre-treatment period. Scaling the outcome to unit variance fixes the signal variance at one and the relative weight of the three components, a reparameterisation that preserves the function space \citep{cho2026inference}. The Gaussian term dominates at short temporal distances and the linear term at long ones, hence the trend component matters most where extrapolation uncertainty is greatest. Where a particular composition is hard to defend, researchers can report how conclusions move across specifications.

\paragraph{Hyperparameter Selection.}
We center and scale each continuous covariate and the outcome to zero mean and unit variance within every unit, and one-hot encode categorical covariates such as the month and day-of-week indicators of our applications, so that the kernel reads all coordinates on a common scale; Supplementary Material~\ref{app:kernel_details} gives all the details. Including period indicators in $\mathbf{X}_{it}$ lets the kernel learn period-specific baselines without additional parametric mean functions to be estimated. The length-scale $b$ controls how rapidly covariance decays with temporal distance, with larger $b$ concentrating the prior on smooth, low-frequency functions and smaller $b$ admitting more rapid fluctuations. We select it by the variance-maximisation rule of \citet{kpop}, $\hat{b} = \underset{b}{\operatorname{argmax}}\; \mathbb{V}(K_{ij})$ for $i \neq j$, where $K_{ij}$ are the off-diagonal entries of $\mathbf{K}_{\text{pre}}$. The objective locates the length-scale at which the pairwise covariances are most spread out, avoiding values that drive every entry toward near zero or one. Marginal likelihood (all hyperparameters combined) and cross-validation optimize against outcome data and risk overfitting under extrapolation, whereas this rule reads only the covariate structure, keeping the design and analysis stages separate \citep{chattopadhyay2023implied}. Supplementary Material~\ref{app:kernel_details} re-estimates all effects across a broad range of $b$ and reports that the findings are robust. The period $p$ is set from the data's natural cycle, for instance $p = 12$ for monthly data, or from the periodogram of the detrended series when unknown.

\paragraph{Noise Variance and Regularisation.}
With the length-scale hyperparameter fixed at $\hat{b}$, we estimate $\sigma^2$ by maximising the marginal likelihood, $\hat{\sigma}^2 = \underset{\sigma^2}{\operatorname{argmax}}\; \log p(Y_{i,\text{pre}} \mid \mathbf{X}_{i,\text{pre}}, b, \sigma^2)$.
This parameter is both the irreducible noise and a regulariser, since the posterior mean solves
\begin{equation}\label{eq:krr}
    \hat{g}_i = \underset{f \in \mathcal{H}_k}{\operatorname{argmin}}\;
    \sum_{t < t_0} \bigl(Y_{it} - f(\mathbf{X}_{it})\bigr)^2
    + \sigma^2 \|f\|_{\mathcal{H}_k}^2,
\end{equation}
so $\sigma^2$ penalises the RKHS norm and pulls the estimate toward the directions the kernel represents cheaply (i.e., smooth). Fitting $\sigma^2$ by marginal likelihood does not reintroduce the overfitting that the covariate-only rule for $b$ avoids, since $\sigma^2$ is a single scalar fitted with $b$ already fixed.

\subsection{Placebo Checks}\label{subsec:placebo_check}

Our identification strategy requires that $g_i$, learned from pre-treatment data, remain valid for counterfactual prediction after $t_0$. Pre-treatment periods, where no treatment was applied, serve as held-out targets for checking this plausibility \citep{eggers2023placebo}. The exercise parallels the pre-trends check in difference-in-differences, run temporally and within-unit.

The procedure pretends that treatment began before its actual onset \citep[cf.][]{liu2024practical}. Take the last $M$ pre-treatment periods $\{t_0 - M, \ldots, t_0 - 1\}$ as a placebo post-treatment window, for instance matching the length of the actual one, and define relative time $m = -(M-1), \ldots, -1, 0$, with $m = 0$ the period immediately before onset. For each $m$ we fit the GP on all periods before $t_0 + m$ and form the placebo estimate
\begin{equation}
    \hat{\tau}_{i,m}^{\text{placebo}} = 
    Y_{i,\, t_0 + m} - 
    \hat{g}_i^{(t_0 + m)}(\mathbf{X}_{i,\, t_0 + m}), 
    \quad m = -(M-1), \ldots, -1, 0,
    \label{eq:placebo}
\end{equation}
where $\hat{g}_i^{(t_0 + m)}$ is the GP posterior mean trained on $\{1, \ldots, t_0 + m - 1\}$. No treatment is applied at a placebo target, so under an adequate kernel the estimate should fall near zero with intervals that cover it. Equation~\eqref{eq:placebo} reproduces the one-step-ahead extrapolation the estimator performs at the first post-treatment period, the shortest and least demanding horizon, so a clean result is a necessary bar rather than a guarantee for longer horizons.

The placebo serves two roles. First, it is a calibration check on the learning step, since systematic placebo error implicates the kernel rather than treatment. Second, it is a one-sided check on Assumption~\ref{assum:sufficiency}, and the asymmetry is logical. No pre-treatment statistic can confirm that $\mathbf{X}_{it}$ is a sufficient conditioning set, so the placebo can only falsify. A confounder inside the training window is absorbed into $\hat{g}_i$ and leaves placebo error near zero even when the same disturbance, persisting past $t_0$, violates Mean Sufficiency. A clean placebo therefore reports the absence of detected instability, not confirmation of the assumptions. Where feasible we might pair temporal placebos with placebo outcomes \citep[see][]{felton2023}; Supplementary Material~\ref{app:empirical_examples} reports the comparison for each application.

\subsection{Aggregation for Panel Data} \label{subsec:panel}
With $N$ units over common periods, we fit a separate GP to each unit's pre-treatment series and predict unit-specific counterfactuals $\hat{g}_i(\mathbf{X}_{it^*})$. Fitting unit by unit accommodates arbitrary heterogeneity in temporal trajectories, at the cost of not borrowing strength across units. The unit-specific effect estimate is $\hat{\tau}_{it^*} = Y_{it^*} - \hat{g}_i(\mathbf{X}_{it^*})$, with prediction intervals from each unit's GP posterior predictive variance. Summary estimands aggregate the unit- and period-specific effects $\hat{\tau}_{it^*}$ over units at a fixed period, or over the post-treatment periods of a fixed unit as a mean or a cumulative total:
\begin{equation}
    \hat{\bar{\tau}}_{t^*} = \frac{1}{N} 
    \sum_{i=1}^{N} \hat{\tau}_{it^*}, \qquad
    \hat{\bar{\tau}}_{i} = \frac{1}{T - t_0 + 1}
    \sum_{s=0}^{T-t_0} \hat{\tau}_{i,t_0+s}, \qquad
    \hat{\tau}_{i,\text{cum}} = \sum_{t^*=t_0}^{T} 
    \hat{\tau}_{it^*}.
    \label{eq:aggregates}
\end{equation}
Because each unit's GP is fitted independently, the posterior predictive distributions are independent across units conditional on the data, and prediction intervals for the aggregate estimands follow from the Gaussian posterior. The independence is a modelling choice rather than a feature of the data, holding when common structure is pre-treatment and absorbed into each $\hat{g}_i$; common shocks or spatial spillovers enter each unit's own temporal pattern rather than a cross-unit covariance \citep[cf.][]{ben2023estimating}. At the unit level this is conservative, regularising each series on its own and paying for it in wider intervals; the conservativeness does not carry over to aggregates under positive cross-unit dependence.

\section{Theoretical Properties of Gaussian Process Estimator}\label{sec:theory}

Several sources of error may cause the GP posterior mean $\hat{g}(\mathbf{X}_{t^*})$ to deviate from the true $Y_{t^*}(0)$. This section decomposes the estimation error into four interpretable components, derives a bias result isolating the role of each, and connects the GP posterior variance to a worst-case bound on the learning error through RKHS theory.

\subsection{Setup and the Four Functions}

We suppress the unit subscript $i$ throughout; all objects are unit-specific and the analysis applies unit by unit. The structural model~\eqref{eq:structural} defines $h_i(\mathbf{X}_{t}, \mathbf{U}_{t})$, which depends on both observables and unobservables, and the CEF $g(\mathbf{X}_{t^*}) = \mathbb{E}[Y_{t^*}(0) \mid \mathbf{X}_{t^*}]$ integrates out $\mathbf{U}_{t}$ so that only observables remain. When estimation is carried out in an RKHS $\mathcal{H}_k$, two more arise. Fix a \textit{complexity budget} $B > 0$, the RKHS norm we are willing to grant the counterfactual. The \textit{best approximation within budget}
\begin{equation}\label{eq:gstar}
    g^* = \underset{f \in \mathcal{F}_B}{\operatorname{argmin}}
    \|f - g\|_{L^2(P_{\mathbf{X}})}
\end{equation}
is the closest function to $g$ in $\mathcal{F}_B$, and $g^* = g$ when
$g \in \mathcal{F}_B$. The $L^2(P_X)$ criterion makes the minimizer unique and aligns with the spectral machinery of Section~4.3; in the leading case $g$ sits inside the ball and the criterion does no work. The norm bound makes $g^*$ well defined, since a universal kernel drives the infimum to zero along functions of diverging norm; Supplementary Material~\ref{app:rkhs_theory} measures this budget on the series of our application. The \textit{GP estimator} $\hat{g}(\mathbf{X}_{t^*}) = \mathbf{k}_{t^*}^\top(\mathbf{K}_{\text{pre}} + \sigma^2 \mathbf{I})^{-1} Y_{\text{pre}}$, where $\mathbf{k}_{t^*} = (k(\mathbf{X}_{t^*}, \mathbf{X}_1), \ldots, k(\mathbf{X}_{t^*}, \mathbf{X}_{t_0-1}))^\top$, is the GP posterior mean. By the representer theorem \citep{kimeldorf1970correspondence}, this lies in the finite-dimensional empirical subspace $\mathcal{S}_{\text{pre}} = \operatorname{span}\{k(\cdot,\mathbf{X}_t) : t=1, \ldots, t_0 - 1\} \subset \mathcal{H}_k$. Table~\ref{tab:four_functions} summarises these four functions.

\begin{table}[htbp]
    \centering
    \caption{Functions in the bias decomposition for counterfactual prediction.}
    \label{tab:four_functions}
    \begin{tabular}{lll}
        \toprule
        Function & Definition & Governed by \\
        \midrule
        $h(\mathbf{X}_t, \mathbf{U}_t)$ & Structural function & Observables + unobservables \\
        $g(\mathbf{X}_t)$ & CEF: $\mathbb{E}[Y_t(0) \mid \mathbf{X}_t]$ & Observables only \\
        $g^*(\mathbf{X}_t)$ & Best $\mathcal{H}_k$ approximation of $g$ & Kernel choice \\
        $\hat{g}(\mathbf{X}_t)$ & GP estimator in $\mathcal{S}_{\text{pre}}$ & Kernel + training data \\
        \bottomrule
    \end{tabular}
\end{table}

\subsection{Bias Decomposition}

The accuracy of $\hat{\tau}_{t^*}$ depends on how well $\hat{g}(\mathbf{X}_{t^*})$ approximates the untreated potential outcome $Y_{t^*}(0)$. Substituting $Y_{t^*}(0) = h(\mathbf{X}_{t^*}, \mathbf{U}_{t^*}) + \varepsilon_{t^*}$ and inserting $g$ and $g^*$ decomposes the counterfactual prediction error as
\begin{align}\label{eq:four_term}
    \hat{\tau}_{t^*} - \delta_{t^*}  &= Y_{t^*}(0) - \hat{g}(\mathbf{X}_{t^*}) \\
    &=
    \underbrace{[h(\mathbf{X}_{t^*}, \mathbf{U}_{t^*})
    - g(\mathbf{X}_{t^*})]}_{\text{I: identification}} +
    \underbrace{[g(\mathbf{X}_{t^*})
    - g^*(\mathbf{X}_{t^*})]}_{\text{II: approximation}} +
    \underbrace{[g^*(\mathbf{X}_{t^*})
    - \hat{g}(\mathbf{X}_{t^*})]}_{\text{III: learning}} +
    \underbrace{\varepsilon_{t^*}}_{\text{IV: noise}}.
\end{align}
The four right-hand terms trace prediction error through successive layers, from the unobservable structural function $h$, through the identifiable target $g$, to the best RKHS approximation $g^*$, and finally to the finite-sample estimator $\hat{g}$. The decomposition targets $Y_{t^*}(0) - \hat{g}(\mathbf{X}_{t^*})$, which equals $\hat{\tau}_{t^*} - \delta_{t^*}$; the error for the primary estimand, $\hat{\tau}_{t^*} - \tau_{t^*} = g(\mathbf{X}_{t^*}) - \hat{g}(\mathbf{X}_{t^*})$, involves only Terms~II and~III, since $Y_{t^*}(1)$ cancels.

\textbf{Term~I: Identification error.} The discrepancy between $h(\mathbf{X}_{t^*}, \mathbf{U}_{t^*})$ and $g(\mathbf{X}_{t^*})$ measures how much not observing $\mathbf{U}_{t^*}$ biases the prediction. By definition of $g$ as a conditional expectation, $\mathbb{E}[\text{Term~I} \mid \mathbf{X}_{t^*}] = 0$; under Assumption~\ref{assum:sufficiency}, $h = g$ almost surely and Term~I vanishes.

\textbf{Term~II: Approximation error.} The gap between the true CEF $g$ and the best approximation within the budget. It is zero when $\|g\|_{\mathcal{H}_k} \leq B$ and persists otherwise, however much training data accumulates. A universal kernel is dense in the continuous functions on a compact domain \citep{micchelli2006universal}, so it approximates $g$ to any tolerance, but the norm of the approximants grows as the tolerance shrinks (Supplementary Material~\ref{app:spectral_rep}). The combined kernel keeps the budget small for such targets, carrying a trend in its linear component at a horizon-independent cost rather than at a cost that grows with the length of the series.

\textbf{Term~III: Learning error.} The gap between the best approximation $g^*$ and the estimate $\hat{g}$ recoverable from the finite training sample. Even when Term~II vanishes, so that $g^* = g$, the representer theorem confines $\hat{g}$ to the empirical subspace $\mathcal{S}_{\text{pre}}$, so the part of $g^*$ outside that span is unreachable from the observed inputs, and inside the span the ridge term $\sigma^2$ penalises $\|\hat g\|_{\mathcal{H}_k}$ and deliberately biases the fit toward smoother functions. Proposition~\ref{prop:worst_case} splits this error into a deterministic part and mean-zero noise propagation and controls each through a summand of the GP posterior variance.

\textbf{Term~IV: Noise.} The irreducible error $\varepsilon_{t^*}$, with $\mathbb{E}[\varepsilon_{t^*} \mid \mathbf{X}_{t^*}, \mathbf{U}_{t^*}] = 0$ by the structural model~\eqref{eq:structural}, contributing $\sigma^2$ to the variance but nothing to the bias. 

Taking the conditional expectation of the four-term decomposition pins down which terms survive in the bias.
\begin{proposition}[Conditional Bias]\label{prop:bias}
Under the structural model~\eqref{eq:structural}, Assumption~\ref{assum:sufficiency}, and $\mathbb{E}[\varepsilon_{t^*} \mid \mathbf{X}_{\mathrm{pre}}, \mathbf{X}_{t^*}] = 0$, the conditional bias of $\hat{\tau}_{t^*}$ for $\delta_{t^*}$ satisfies
\begin{equation}\label{eq:bias}
    \mathbb{E}[\hat{\tau}_{t^*} - \delta_{t^*} \mid
    \mathbf{X}_{\text{pre}}, \mathbf{X}_{t^*}] =
    \underbrace{g(\mathbf{X}_{t^*}) - g^*(\mathbf{X}_{t^*})}_{\text{approximation bias}} +
    \underbrace{\mathbb{E}[g^*(\mathbf{X}_{t^*}) - \hat{g}(\mathbf{X}_{t^*})
    \mid \mathbf{X}_{\text{pre}}, \mathbf{X}_{t^*}]}_{\text{learning bias}}.
\end{equation}
\end{proposition}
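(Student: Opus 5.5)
The plan is to take the conditional expectation, given $(\mathbf{X}_{\text{pre}}, \mathbf{X}_{t^*})$, of the four-term decomposition \eqref{eq:four_term} and evaluate it term by term. The starting point is the identity $\hat{\tau}_{t^*} - \delta_{t^*} = Y_{t^*}(0) - \hat{g}(\mathbf{X}_{t^*})$. It follows from consistency \eqref{eq:consistency}: for $t^* \ge t_0$ we have $Y_{t^*} = Y_{t^*}(1)$, and the $Y_{t^*}(1)$ terms cancel. The right-hand side then splits into Terms~I--IV by adding and subtracting $g(\mathbf{X}_{t^*})$ and $g^*(\mathbf{X}_{t^*})$. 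Conditional expectation is linear, so it suffices to compute the expectation of each term separately. Each term is integrable because $h$, $g$ and $\hat g$ have finite first moments under the structural model; I will take this as given.

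\textbf{Term~I.} Assumption~\ref{assum:sufficiency} gives $h(\mathbf{X}_{t^*},\mathbf{U}_{t^*}) = g(\mathbf{X}_{t^*})$ almost surely, as noted after the assumption, so Term~I vanishes identically. Even without that almost-sure strengthening, there is a fallback. Conditioning first on $\mathbf{X}_{t^*}$ gives zero by the definition \eqref{eq:target} of $g$. Moving from that to conditioning on $(\mathbf{X}_{\text{pre}},\mathbf{X}_{t^*})$ needs the deviation to be mean-independent of $\mathbf{X}_{\text{pre}}$. Under the fixed-design convention (calendar time and period indicators), that conditioning adds no information. I will invoke the almost-sure equality as the main route.

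\textbf{Term~IV.} Its expectation is zero by the hypothesis $\mathbb{E}[\varepsilon_{t^*}\mid \mathbf{X}_{\text{pre}},\mathbf{X}_{t^*}]=0$ stated in the proposition.

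\textbf{Term~II.} This term is measurable with respect to the conditioning variables. The reason is that $g$ and $g^*$ are deterministic functions, and $g^*$ is fixed by \eqref{eq:gstar}, which depends only on $g$, the kernel and $P_{\mathbf{X}}$, not on the sample. Hence Term~II passes through the expectation unchanged as $g(\mathbf{X}_{t^*}) - g^*(\mathbf{X}_{t^*})$.

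\textbf{Term~III.} This term is left as the conditional expectation of $g^*(\mathbf{X}_{t^*}) - \hat{g}(\mathbf{X}_{t^*})$, with the randomness coming from $Y_{\text{pre}}$ through $\hat g$. Summing the four contributions yields \eqref{eq:bias}.

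The main obstacle is the bookkeeping of measurability and conditioning, not any algebra. I need to be explicit that $g^*$ is non-random and unique, which follows from strict convexity of the $L^2(P_{\mathbf{X}})$ criterion over the closed convex ball $\mathcal{F}_B$. I also need to confirm that conditioning on $\mathbf{X}_{\text{pre}}$ does not disturb the zero means of Terms~I and~IV. For Term~I, the almost-sure equality from Assumption~\ref{assum:sufficiency} settles it outright. For Term~IV, the assumption is imposed directly. Normality of the noise is never used, consistent with the remark after \eqref{eq:gp_model}.
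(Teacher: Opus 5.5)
Your proposal is correct and follows the paper's own argument. The paper likewise takes the conditional expectation of the four-term decomposition~\eqref{eq:four_term}, removes Term~I through the almost-sure equality $h = g$ implied by Assumption~\ref{assum:sufficiency} (needed because serial dependence could otherwise let $\mathbf{X}_{\text{pre}}$ carry information about $\mathbf{U}_{t^*}$), removes Term~IV through the stated mean-independence hypothesis, and carries the deterministic approximation term through unchanged. Term~II only needs $g^*$ not to depend on the sample, so your appeal to the uniqueness of $g^*$ is more than that step requires.
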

\noindent The conditional bias is controlled by two gaps: how far $g$ sits from $\mathcal{F}_B$ and how much of $g^*$ the training inputs cannot reach. The approximation bias is deterministic given $\mathbf{X}_{t^*}$. The learning bias vanishes when $\hat{g}$ is unbiased for $g^*$, the ideal benchmark~\eqref{eq:zero_learning} with $g^* = g$, and is in general nonzero; Proposition~\ref{prop:worst_case} bounds its worst case. Terms~I and~IV drop out for different reasons. Term~I has mean zero given $\mathbf{X}_{t^*}$ by the definition of $g$, but serial dependence can let $\mathbf{X}_{\text{pre}}$ carry information about $\mathbf{U}_{t^*}$; Assumption~\ref{assum:sufficiency} closes this gap by setting $h = g$ almost surely. Term~IV drops out under the weaker requirement that $\varepsilon_{t^*}$ be mean-independent of $\mathbf{X}_{\text{pre}}$ given $\mathbf{X}_{t^*}$; the noise contributes no bias but persists in the pointwise error, adding $\sigma^2$ to the posterior predictive variance.

\subsection{RKHS Interpretation of GP Posterior Variance}

The GP posterior mean coincides with the kernel ridge regression estimator in $\mathcal{H}_k$ \citep{kimeldorf1970correspondence}, so as a point predictor it has a frequentist counterpart. Kernel ridge regression returns a point estimate; the GP additionally returns a posterior variance, whose two summands control the deterministic and stochastic parts of the learning error (Term~III) \citep[Ch.~6]{rasmussen2006gaussian}; \citet{kanagawa2025gaussian} give a recent treatment of the GP--RKHS connection.

\begin{proposition}[GP posterior variance controls the learning error]
\label{prop:worst_case}
Suppose $g$ lies in the unit ball $\mathcal{F}_1$, so that $g^* = g$, and
suppose outcomes are observed with noise, $Y_t = g^*(\mathbf{X}_t) + \varepsilon_t$, where $\varepsilon_t$ are mean-zero with variance $\sigma^2 > 0$. The GP posterior mean is the weighting estimator $\hat g(\mathbf{X}_{t^*}) = \sum_t w_t Y_t$ with weights $\mathbf{w} = (\mathbf{K}_{\mathrm{pre}} + \sigma^2 I)^{-1}\mathbf{k}_{t^*}$, and its posterior variance is
\begin{equation*}
    \mathbb{V}_{t^*} = k(\mathbf{X}_{t^*},\mathbf{X}_{t^*}) - \mathbf{k}_{t^*}^\top(\mathbf{K}_{\mathrm{pre}} + \sigma^2 I)^{-1}\mathbf{k}_{t^*}.
\end{equation*}
Let $k_{t^*}^\perp = k(\cdot,\mathbf{X}_{t^*}) - \sum_t w_t\, k(\cdot,\mathbf{X}_t)$ be the residual of the test feature after its ridge reconstruction from the training features $\mathcal{S}_{\mathrm{pre}} = \mathrm{span}\{k(\cdot,\mathbf{X}_t)\}$, coinciding with the orthogonal-projection residual onto $\mathcal{S}_{\mathrm{pre}}$ as $\sigma^2 \to 0$. Then the learning error decomposes as
\begin{equation*}
    g^*(\mathbf{X}_{t^*}) - \hat g(\mathbf{X}_{t^*})
    = \underbrace{\Big(g^*(\mathbf{X}_{t^*}) - \textstyle\sum_t w_t\,g^*(\mathbf{X}_t)\Big)}_{\text{deterministic error}}
    \;-\; \underbrace{\textstyle\sum_t w_t\,\varepsilon_t}_{\text{noise propagation}},
\end{equation*}
and its two sources are controlled by the two summands of the GP posterior variance:
\begin{equation*}
    \Big|g^*(\mathbf{X}_{t^*}) - \sum_t w_t g^*(\mathbf{X}_t)\Big| \le \|k_{t^*}^\perp\|_{\mathcal{H}_k}, \quad
    \mathrm{Var}\big[\sum_t w_t \varepsilon_t\big] = \sigma^2\|\mathbf{w}\|^2,
    \qquad
    \underbrace{\|k_{t^*}^\perp\|_{\mathcal{H}_k}^2 + \sigma^2\|\mathbf{w}\|^2 = \mathbb{V}_{t^*}}_{\text{GP posterior variance}}.
\end{equation*}
Therefore $\mathbb{V}_{t^*}$ bounds the worst case of the deterministic error through its first summand and carries the exact variance of the propagated noise in its second.
\end{proposition}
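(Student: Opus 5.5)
The plan is to verify the three displayed claims in turn: the error decomposition, the bound on the deterministic error, and the variance identity. Each reduces to the reproducing property of $\mathcal{H}_k$ and one matrix identity for the ridge weights. Throughout I write $\mathbf{x}_* = \mathbf{X}_{t^*}$, $\mathbf{k} = \mathbf{k}_{t^*}$, $\mathbf{K} = \mathbf{K}_{\mathrm{pre}}$ and $\mathbf{w} = (\mathbf{K}+\sigma^2 I)^{-1}\mathbf{k}$.

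\emph{Decomposition.} By~\eqref{eq:posterior_mean}, $\hat g(\mathbf{x}_*) = \mathbf{k}^\top(\mathbf{K}+\sigma^2 I)^{-1}Y_{\mathrm{pre}}$. Since $(\mathbf{K}+\sigma^2 I)^{-1}$ is symmetric, this equals $\mathbf{w}^\top Y_{\mathrm{pre}} = \sum_t w_t Y_t$, which gives the weighting form. Substituting $Y_t = g^*(\mathbf{X}_t) + \varepsilon_t$ and subtracting from $g^*(\mathbf{x}_*)$ yields the deterministic part minus $\sum_t w_t\varepsilon_t$ at once.

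\emph{Deterministic error.} The reproducing property gives $g^*(\mathbf{x}_*) - \sum_t w_t g^*(\mathbf{X}_t) = \langle g^*, k(\cdot,\mathbf{x}_*) - \sum_t w_t k(\cdot,\mathbf{X}_t)\rangle_{\mathcal{H}_k} = \langle g^*, k_{t^*}^\perp\rangle_{\mathcal{H}_k}$. Cauchy--Schwarz together with $\|g^*\|_{\mathcal{H}_k}\le 1$, which holds because $g^* = g \in \mathcal{F}_1$, gives the bound $\|k_{t^*}^\perp\|_{\mathcal{H}_k}$. Equality is attained at $g^* = k_{t^*}^\perp/\|k_{t^*}^\perp\|_{\mathcal{H}_k}$, which is what justifies calling it the worst case.

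For the limit remark, as $\sigma^2\to 0$ the weights tend to $\mathbf{K}^{+}\mathbf{k}$. In that limit $\sum_t w_t k(\cdot,\mathbf{X}_t)$ is the orthogonal projection of $k(\cdot,\mathbf{x}_*)$ onto $\mathcal{S}_{\mathrm{pre}}$, by the normal equations of least squares in the Gram inner product.

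\emph{Noise term.} The weights depend only on the inputs, so $\mathrm{Var}[\sum_t w_t\varepsilon_t] = \mathbf{w}^\top \mathrm{Cov}(\varepsilon)\,\mathbf{w}$. This equals $\sigma^2\|\mathbf{w}\|^2$ once the $\varepsilon_t$ are uncorrelated and homoskedastic given the inputs. That is implicit in the GP noise model~\eqref{eq:gp_model}, and I will state it explicitly as the reading of "mean-zero with variance $\sigma^2$". I expect this to be the one place where the hypotheses need sharpening.

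\emph{Variance identity.} I expand the RKHS norm using the reproducing property:
\[
\|k_{t^*}^\perp\|_{\mathcal{H}_k}^2 = k(\mathbf{x}_*,\mathbf{x}_*) - 2\mathbf{w}^\top\mathbf{k} + \mathbf{w}^\top\mathbf{K}\mathbf{w}.
\]
Adding $\sigma^2\mathbf{w}^\top\mathbf{w}$ combines the last term into $\mathbf{w}^\top(\mathbf{K}+\sigma^2 I)\mathbf{w} = \mathbf{w}^\top\mathbf{k}$, by the definition of $\mathbf{w}$. The sum therefore collapses to
\[
k(\mathbf{x}_*,\mathbf{x}_*) - \mathbf{k}^\top(\mathbf{K}+\sigma^2 I)^{-1}\mathbf{k} = \mathbb{V}_{t^*}.
\]

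This cancellation is the crux of the proof, though it is short once the norm is expanded. The final sentence of the proposition then follows by combining the three steps.
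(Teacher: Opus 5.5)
Your proposal is correct and follows the paper's proof essentially step for step. Both use the reproducing-property identity $g^*(\mathbf{X}_{t^*}) - \sum_t w_t g^*(\mathbf{X}_t) = \langle g^*, k_{t^*}^\perp\rangle_{\mathcal{H}_k}$ with Cauchy--Schwarz, then expand $\|k_{t^*}^\perp\|_{\mathcal{H}_k}^2$ and close it with $(\mathbf{K}_{\mathrm{pre}}+\sigma^2 I)\mathbf{w} = \mathbf{k}_{t^*}$, which is the paper's substitution $\mathbf{K}_{\mathrm{pre}} = (\mathbf{K}_{\mathrm{pre}}+\sigma^2 I) - \sigma^2 I$ run in reverse. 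Your explicit requirement that the $\varepsilon_t$ be uncorrelated, so that $\mathrm{Var}[\sum_t w_t\varepsilon_t] = \sigma^2\|\mathbf{w}\|^2$, is a hypothesis the paper uses silently, and your $\mathbf{K}_{\mathrm{pre}}^{+}$ limit harmlessly generalises its invertible case.
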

\noindent A proof is given in Supplementary Material~\ref{app:rkhs_theory}. Restricting to $\mathcal{F}_1$ costs no generality, since over $\mathcal{F}_B$ the worst-case bound scales to $B\,\|k_{t^*}^\perp\|_{\mathcal{H}_k}$ and the noise term is unchanged; the learning bias of Proposition~\ref{prop:bias} is then bounded by $\|g^*\|_{\mathcal{H}_k}\,\|k_{t^*}^\perp\|_{\mathcal{H}_k}$, a product of what the target costs and what the design cannot see. The geometric factor $\|k_{t^*}^\perp\|_{\mathcal{H}_k}$ is the part of the test feature the training inputs cannot reconstruct, and it grows as $\mathbf{X}_{t^*}$ leaves the pre-treatment support. Under a stationary kernel it never exceeds $\sqrt{k(\mathbf{X}_{t^*}, \mathbf{X}_{t^*})}$, a constant, so the band flattens at long horizons and any further growth in the error must come through the budget. The Gaussian RKHS contains no non-trivial linear function, and tracks a trend only through elements whose norm grows at least linearly with the window's length (Supplementary Material~\ref{app:spectral_rep}), so the band flattens where the budget inflates. The combined kernel mitigates this: by the Aronszajn identity \citep{aronszajn1950theory}, any decomposition $g = g_{\text{lin}} + g_{\text{per}} + g_{\text{Gaussian}}$ yields
\begin{equation}\label{eq:combined_bound_main}
    \Big| g^*(\mathbf{X}_{t^*}) - \textstyle\sum_t w_t\, g^*(\mathbf{X}_t) \Big|
    \le \sqrt{\|g_{\text{lin}}\|_{\mathcal{H}_{\text{lin}}}^2 + \|g_{\text{per}}\|_{\mathcal{H}_{\text{per}}}^2 + \|g_{\text{Gaussian}}\|_{\mathcal{H}_{\text{Gaussian}}}^2}\;\, \|k_{t^*}^\perp\|_{\mathcal{H}_k},
\end{equation}
charging each structure where it is cheapest, with $\|g_{\text{lin}}\|_{\mathcal{H}_{\text{lin}}} = \|\beta\|$ depending only on the slope and not the horizon. A counterfactual needing $B > 1$ leaves the reported interval short by that factor on the deterministic part. Two design choices hold $B$ near one: the unit-variance standardisation of Section~\ref{subsec:gp_estimation} fixes the scale, and the Aronszajn split charges trend and seasonality where their cost is horizon-independent. On the \textit{Heller} series a unit-variance trend needs $B = 0.96$ under the combined kernel and $B = 3.52$ under a Gaussian-only kernel. This finite-sample bound, not an in-support $L^2$ convergence rate, is the operative guarantee off support, because the fill distance at $\mathbf{X}_{t^*}$ is fixed by the study design and does not shrink with the pre-treatment length (Supplementary Material~\ref{app:convergence_rates}).

Let $T_k$ be the kernel integral operator---a smoothing map that replaces a function's value with a kernel-weighted average of its values---with eigenvalues $\mu_1 \ge \mu_2 \ge \cdots > 0$ indexing the \textit{spectral directions} of $\mathcal{H}_k$. Applying $T_k^{r}$ damps the small-eigenvalue directions, so the source condition of order $r \ge 0$, $g^* = T_k^{r} v$ with $\|v\|_{\mathcal{H}_k} \le 1$, asks that $g^*$ be built from directions the kernel represents cheaply, more strictly for larger $r$ (Supplementary Material~\ref{app:source_condition_formal}). The deterministic learning error then obeys
\begin{equation}\label{eq:source_main}
    \Big| g^*(\mathbf{X}_{t^*}) - \textstyle\sum_t w_t\, g^*(\mathbf{X}_t) \Big|
    \le \big\| T_k^{r}\, k_{t^*}^\perp \big\|_{\mathcal{H}_k}
    \le \mu_1^{r}\, \|k_{t^*}^\perp\|_{\mathcal{H}_k},
\end{equation}
with $r = 0$ recovering Proposition~\ref{prop:worst_case}. The middle quantity is the exact worst case over targets of order $r$, and the right-hand side trades sharpness for an explicit constant. Alignment therefore tightens the extrapolation bound itself, not only the in-support rate for which the source condition is usually invoked (Supplementary Material~\ref{app:source_condition_formal}).

These bounds carry a frequentist meaning that does not treat $g$ as a draw from the prior. The realised unit, its covariates, and the target $g$ are held fixed, and the only randomness is the idiosyncratic noise within the observed series. The worst case is an adversarial choice of the counterfactual, attained by the trajectory that spends its entire budget on the direction the training periods cannot see, $g^* \propto k_{t^*}^\perp$.

The GP posterior mean is also minimax-optimal at the extrapolation points: no estimator built from the same pre-treatment data has a smaller worst-case error over the unit ball, and the reported band does not merely bound the minimax risk but equals it (Supplementary Material~\ref{app:rkhs_theory}). With noiseless training the geometry is the whole story (Corollary~B.1), $|g^*(\mathbf{X}_{t^*}) - \hat g(\mathbf{X}_{t^*})| \le \|k_{t^*}^\perp\|_{\mathcal{H}_k} = \sqrt{\mathbb{V}_{t^*}}$; with $\sigma^2 > 0$ the second summand adds the variance $\sigma^2\|\mathbf{w}\|^2$ of the propagated noise. The same geometry also caps how far any two admissible counterfactuals can disagree at $t^*$, and the pair $\pm B\,k_{t^*}^\perp/\|k_{t^*}^\perp\|_{\mathcal{H}_k}$ attains the cap, so it measures the extrapolation uncertainty of the class itself in the sense of Definition~\ref{def:extrap_uncertainty}, $\omega_B(\mathbf{X}_{t^*}) = 2B\,\|k_{t^*}^\perp\|_{\mathcal{H}_k}$ (Supplementary Material~\ref{app:extrap_uncertainty}). The band's half-width at $t^*$ is $\omega_B(\mathbf{X}_{t^*})/2$, so drawn across horizons it traces the extrapolation profile of the model class.

This interval's breadth is the point of the method. The GP keeps the rival continuations the kernel admits and lets their spread set the post-treatment uncertainty. Read backwards, a researcher who claims an effect should argue that the counterfactual trajectory belongs to a class narrow enough for the interval to exclude zero, and the class that sustains the claim is visible in the reported band. Enriching the set of admissible continuations can only widen the interval, so the conclusion rests on the excluded trajectories as much as on the included ones.

\section{Calibrated Simulation Studies}
\label{sec:simulation}

\paragraph{Setup.}
To construct a realistic simulation design, we base the DGPs on the monthly series of our empirical illustration (Section~\ref{sec:application}), a seasonal outcome observed over a short pre-treatment window above a smooth trend. Our interest is deliberately narrow. Point forecasts are easy to compare across methods, but the main benefit of our proposed method is to deliver an interval whose width tracks what the pre-treatment record pins down. We ask whether each method's nominal $95\%$ interval covers the counterfactual at its stated rate, and how coverage changes with the pre-treatment length.

We generate monthly outcomes $y_t = g(t) + \varepsilon_t$ for $t = 1, \ldots, n_{\mathrm{pre}} + 12$, with $\varepsilon_t \sim \mathcal{N}(0,\sigma^2)$, treatment beginning at $t_0 = n_{\mathrm{pre}} + 1$, a fixed twelve-month post-treatment horizon, and a constant effect $\tau$ added over that horizon. A small random Gaussian bump displaces each trend from a clean closed form.

We consider three regimes for $g$ that share the monthly seasonality and short record of the application but place the target at increasing distance from the estimator (see Supplementary Material~\ref{app:sim_details} for more details). The first, \textit{Smooth}, is a draw from a kernel-based DGP, so the approximation error (Term~II of Section~\ref{sec:theory}) is negligible; the estimator still tunes its own hyperparameters, so the target is not exactly in its span. The second, \textit{Nonlinear trend}, places a milder seasonal term on a concave, decelerating trend that mimics a bounded growth path. The third, \textit{Saturating trend}, sets a strong annual cycle above a levelling trend where seasonality dominates and the trend flattens at both ends.

We fit the GP as in the application of Section~\ref{sec:application} and compare it with three benchmarks: (i) a segmented regression carrying a linear pre-trend and an annual harmonic, (ii) \texttt{C-ARIMA} \citep{menchetti2023combining}, and (iii) \texttt{CausalImpact}, a Bayesian structural time-series model \citep{brodersen2015inferring}. For each regime we let $n_{\mathrm{pre}}$ run from $12$ to $120$ months in steps of twelve, draw $500$ replications at each length, and calculate the absolute bias, the root-mean-squared error (RMSE) and the coverage rate across the post-treatment window.

\begin{figure}[thb!]
  \centering
  \includegraphics[width=1\textwidth]{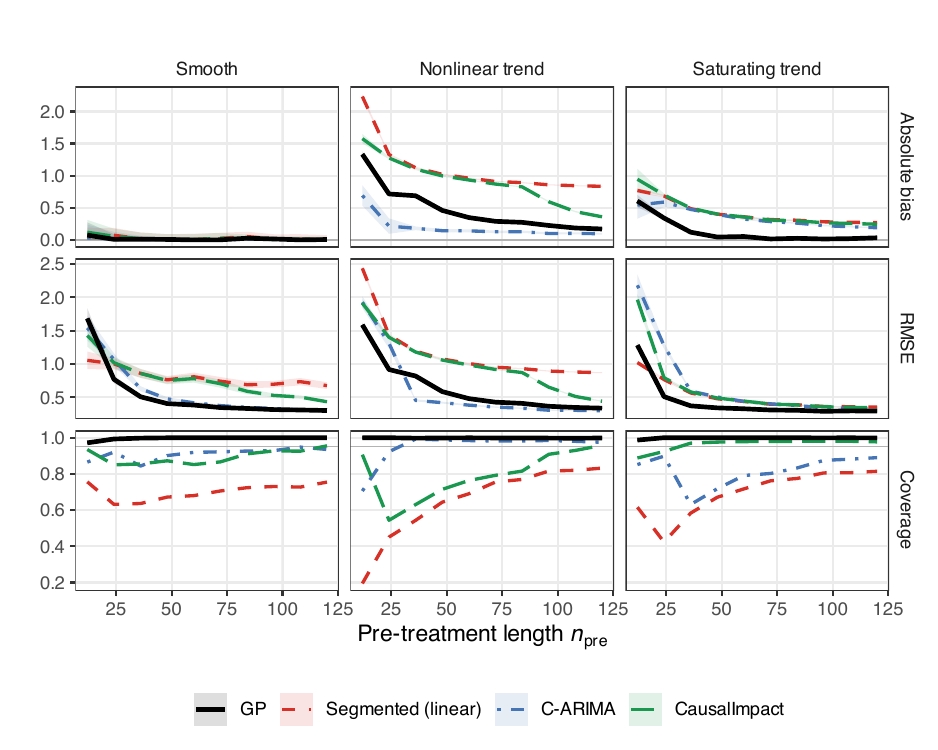}
  \caption{Absolute bias, RMSE and $95\%$ coverage against the pre-treatment length $n_{\mathrm{pre}}$, over a fixed twelve-month horizon and $500$ replications, for three scenarios (columns). Bias is that of the average post-treatment effect; RMSE and coverage are computed at each post-treatment period and then averaged. The GP (black) is compared with a segmented regression, \texttt{C-ARIMA} and \texttt{CausalImpact}. Shaded bands are Monte-Carlo $95\%$ intervals. The GP stays at or above its nominal rate in every scenario.}
  \noindent\textbf{Alt text:} Grid of line plots comparing four estimation methods across three simulation regimes, with rows for absolute bias, root mean squared error, and coverage plotted against the pre-treatment length.
  \label{fig:app_sim_samplesize}
\end{figure}

\paragraph{Results.}
Figure~\ref{fig:app_sim_samplesize} reports the three metrics, and the coverage panels carry the main message the theory leads us to expect. The GP stays at or above its nominal rate even at the shortest pre-treatment, a widening that the worst-case reading of Section~\ref{sec:theory} justifies. The benchmarks differ in how their intervals respond to extrapolation. A segmented regression's interval reflects the residual variance around its fitted value and does not widen as the forecast leaves that form, so its coverage sits well below nominal. \texttt{C-ARIMA} is close to calibrated once it is given the seasonal period, though it under-covers when the pre-treatment window is short. The coverage of \texttt{CausalImpact} moves sharply with the pre-treatment length as its state components are re-estimated at each sample size, dropping below nominal at intermediate length before recovering. Point estimation improves with the pre-treatment record for every method, and the methods separate less on bias and RMSE than on coverage, with the GP competitive throughout.

\section{Empirical Application} \label{sec:application}

Did the Supreme Court's decision in \textit{District of Columbia v.\ Heller} (2008) affect legal handgun purchases? On June 26, 2008, the Court ruled that the Second Amendment protects an individual right to keep and bear arms for self-defence. The ruling is a universal treatment, binding every jurisdiction at once, but its practical reach was narrow: it struck down only D.C.'s comprehensive handgun ban, in effect since the late 1970s, so D.C.\ went from a near-total ban to legal purchasing overnight while residents elsewhere could already purchase handguns. Effects outside D.C.\ would materialise only through later litigation, as in \textit{McDonald v.\ Chicago} (2010), or regulatory adaptation. Standard panel methods do not apply, since there are no untreated controls and D.C.'s pre-treatment trajectory is unlike that of jurisdictions with active handgun markets. The question is not treated versus control but \textit{where and when} a universal treatment has practical bite, and fitting GP-ITS independently to every jurisdiction should recover a large immediate effect in D.C.\ and null estimates elsewhere if its counterfactuals are well calibrated.

The data are a monthly panel with seasonal structure and a short post-treatment window, placing weight on the periodic and linear components of the Gaussian$+$Periodic$+$Linear kernel. Supplementary Material~\ref{app:additional_examples} applies the same estimator to two further settings on daily series with different temporal structure, All-Women Police Stations in India \citep{jassal2020gender} and the Russian invasion of Ukraine \citep{damann2024women}.

We measure handgun purchases using monthly background check rates per 100{,}000 population from the FBI's National Instant Criminal Background Check System (NICS), a reliable proxy for new purchases \citep{steidley2018toward}. The raw monthly series appears in Supplementary Material~\ref{app:empirical_examples}. The panel covers all U.S.\ states and D.C.\ except Hawaii, whose record is incomplete, giving $N = 50$ jurisdictions. The post-treatment window covers only July--October 2008, ending before the November presidential election, since Obama's victory might have introduced political anxiety about future gun restrictions, a distinct treatment that would confound attribution to \textit{Heller}. Anticipation is an obvious threat to event-based designs, which it can contaminate even when the specific outcome remains uncertain \citep{bertoli2026analyzing}, but D.C.'s near-total ban structurally precluded preemptive legal handgun purchases, which keeps no anticipation plausible here. Supplementary Material~\ref{app:assumptions} discusses each assumption in detail.

\paragraph{Model.} We employ the composite Gaussian$+$Periodic$+$Linear kernel to capture the temporal structure of the series, with the Gaussian component carrying local temporal dependence, the periodic component annual seasonality at $p = 12$, and the linear component the long-run trend. Month indicators enter as covariates so that the kernel learns month-specific baseline levels.

\paragraph{Illustrative Cases: D.C.\ and Washington State.} We begin with two jurisdictions that bracket the range of expected effects: D.C., whose handgun ban was directly struck down, and Washington state, a representative case with no immediate policy change (Figure~\ref{fig:plot_heller_dc_wa}).
\begin{figure}[tbh!]
    \centering
    \includegraphics[width=1\textwidth]{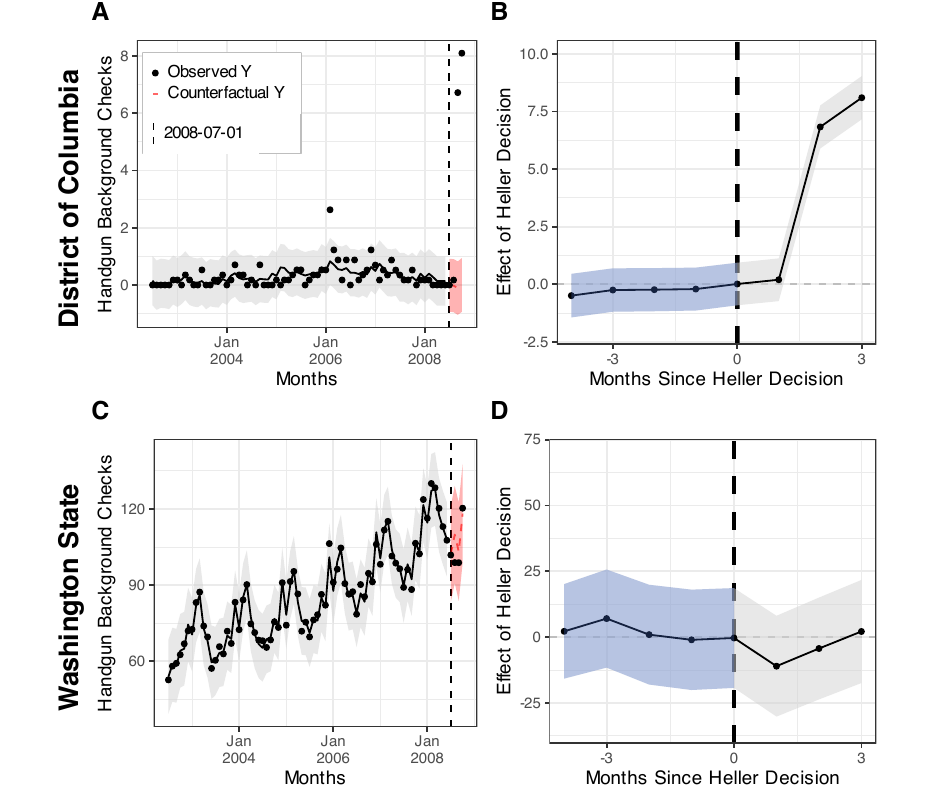}
    \caption{Model Fit and Dynamic Treatment Plot for D.C.\ and Washington State. \textit{Note: }The top row shows results for D.C., the bottom row for Washington state. Left panels show observed outcomes (black dots), the fitted GP (grey bands), and the predicted counterfactual (red bands); right panels the month-specific effects for the four post-treatment months (grey) and four placebo months (blue). Vertical dashed lines mark the \textit{Heller} decision.}
    \noindent\textbf{Alt text:} Model fit and treatment effect plots for two jurisdictions in the Heller application, showing fitted values and monthly effect estimates with placebo periods.
    \label{fig:plot_heller_dc_wa}
\end{figure}
For D.C., the treatment effect is immediate and substantial, remaining significant throughout the four-month post-treatment period, while the placebo checks (blue) cluster near zero. Washington state, by contrast, exhibits effects indistinguishable from zero, with the GP capturing its pre-treatment seasonality and increasing trend, propagating the associated uncertainty into the post-treatment period.

\paragraph{Localised \textit{Heller} Effect and Nationwide Null.} Figure~\ref{fig:heller_atts} presents unit-specific effects across all 50 jurisdictions. The left panel displays cumulative treatment effects for the four-month post-treatment period, standardised within each unit so that the panel ranks jurisdictions relative to their own pre-treatment variability. D.C.\ (red) is a distinct outlier with a large positive cumulative effect, and the remaining units cluster around zero, the pattern the substantive account predicts. Because the near-total ban holds D.C.'s pre-treatment series near zero, its suppressed variance inflates the standardised scale, so we report the magnitude on the count and per-capita scales instead. The cumulative four-month effect for D.C.\ is $15.1$ additional checks per 100{,}000 population, with a $95\%$ interval of $[13.0,\, 17.3]$, roughly $90$ additional background checks.

\begin{figure}[tbh!]
    \centering
    \includegraphics[width=1\textwidth]{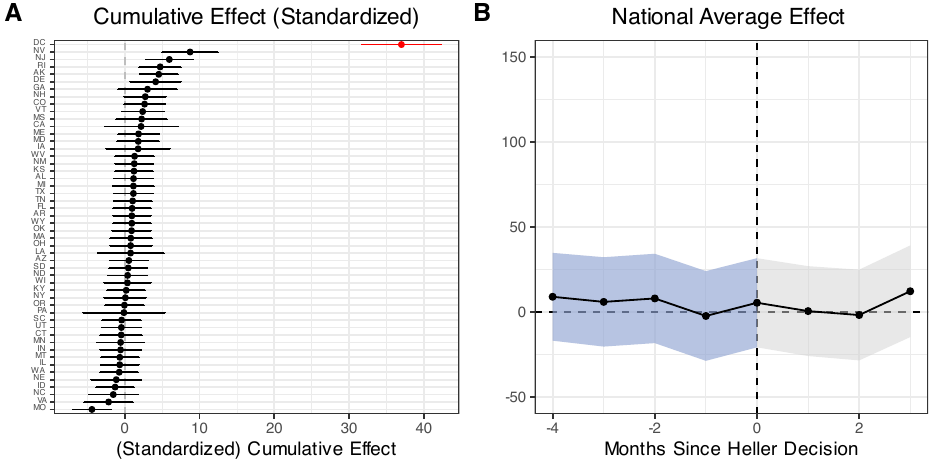}
    \caption{Heterogeneous and Nationwide Effects of the \textit{Heller} Decision on Handgun Purchases. \textit{Note:} The left panel gives unit-specific standardised cumulative effects for the four months following the decision, with $95\%$ prediction intervals. The right panel gives the estimated national average treatment effect $\hat{\bar{\tau}}_{t^*}$ by month.}
    \noindent\textbf{Alt text:} Summary of heterogeneous treatment effects across fifty jurisdictions in two panels, showing standardised cumulative effects and the national average effect over time.
    \label{fig:heller_atts}
\end{figure}
The right panel traces the national average effect $\hat{\bar{\tau}}_{t^*}$ over time, and the cumulative version confirms the same flat path, matching the absence of a practical policy change outside D.C. That an independent GP per jurisdiction isolates the D.C.\ effect while leaving the rest null is what well-calibrated unit-specific counterfactuals would produce. The same pattern speaks to the anticipation concern. Had the D.C.\ estimates reflected anticipatory purchasing ahead of the November election rather than the ruling, similar anticipation should appear elsewhere, since gun owners in other jurisdictions had equal reason to read the electoral signal. No nationwide effect appears during July--October 2008, which supports attribution to the ruling rather than to election-related anxiety.

As Section~\ref{subsec:placebo_check} notes, a placebo check can falsify a method but not confirm it. Under the matched one-step-ahead placebo protocol, segmented regression excludes zero at $58$ to $67\%$ of pre-treatment periods, flagging effects where none occurred, whereas the GP flags none. The contrast reproduces what the simulations of Section~\ref{sec:simulation} report: an interval built around a fixed parametric trend does not widen as the forecast leaves that trend. The GP survives a falsification that segmented regression fails. In-sample residuals show no degradation as the treatment date approaches, and a sensitivity analysis across a wide range of $b$ confirms that both the D.C.\ effect and the null national trend hold (Supplementary Material~\ref{app:kernel_details}).

As a complementary validation, \textit{Heller} addressed only handgun restrictions, so long gun checks serve as a placebo outcome. The standardised cumulative effect on long guns in D.C.\ is $0.25$, with a $95\%$ interval of $[-5.2,\, 5.7]$, consistent with a null where the ruling did not apply. Supplementary Material~\ref{app:empirical_examples} reports the cumulative effects, placebo comparisons, residual diagnostics, and long gun analysis.

\section{Relation to Existing Approaches} \label{sec:discussion}
Counterfactual outcomes draw on two sources of variation, cross-sectional contrasts between treated and untreated units and temporal contrasts within treated units before and after $t_0$. Most designs, including DiD, SC, and their extensions, use both, requiring parallel trends or an untreated donor pool. Universal treatment removes the cross-sectional source, leaving methods that extrapolate temporally; these differ in the structure they impose on the counterfactual trajectory and in how they quantify extrapolation uncertainty. We compare the GP first to time series methods built for this setting and then to designs that rest on cross-sectional comparison.

\paragraph{Time Series Approaches.} Classical ITS methods, from the intervention analysis of \citet{box1975intervention} to segmented regression, fix both the counterfactual trend and the treatment response pattern (e.g., step, pulse, or gradual transfer functions), commitments that aid estimation but constrain the analysis before any post-treatment data are seen. Recent work relaxes parts of this structure, with \citet{miratrix2022using} simulating counterfactual trajectories around a linear prediction model and \citet{felton2023} imputing counterfactuals from ARIMA fits within the potential outcomes framework. The GP learns temporal dependence through the kernel, and the period-specific estimand $\tau_{it^*}$ traces how effects unfold.

The \texttt{CausalImpact} method \citep{brodersen2015inferring} employs Bayesian structural time series with latent state-space representations. As \citet{hazlett2018trajectory} show, \texttt{CausalImpact} assumes linearity in prior outcomes (LPO), that the counterfactual is linear in the pre-treatment outcomes. The GP does not escape this, since its posterior mean is a linear \textit{smoother}, $\hat{g}(\mathbf{X}_{t^*}) = \sum_t w_t Y_{it}$. What sets it apart is that the weights are fixed by the kernel over the inputs, so the estimator recovers any $g_i \in \mathcal{H}_k$ rather than a fixed state-space form, and the posterior variance widens off-support where the state-space intervals do not (Section~\ref{sec:theory}). The kernel relaxes the trend form, not the linearity in outcomes. \texttt{C-ARIMA} \citep{menchetti2023combining} brings ARIMA machinery into the potential outcomes framework and assumes stationarity of the differenced series; the GP places stationarity on the covariance rather than on the process, a weaker requirement.

\paragraph{Related Identification Strategies.} Under simultaneous treatment, DiD and its modern variants lack the contemporaneous controls that parallel trends compares, and synthetic control lacks donors that remain untreated. Where parallel trends governs cross-unit comparisons, Assumption~\ref{assum:sufficiency} governs within-unit temporal stability. SC weights control units to match the pre-treatment trajectory under stable factor loadings, so that the mapping from unobserved factors to outcomes survives treatment; our approach runs the same logic temporally, asking the unit-specific CEF $g_i$ learned from a unit's own history to remain stable past $t_0$, with this temporal stability substituting for the missing donors.

Regression discontinuity in time (RDiT) targets only the immediate effect at the temporal threshold, buying strong internal validity at the cutoff at the price of missing effects that unfold gradually \citep{hausman2018regression}, and it faces difficulties when observations cluster at discrete time points \citep{de2016misunderstandings, bertoli2026analyzing}. Our framework estimates $\tau_{it^*}$ at each post-treatment period and traces how effects evolve or decay.

\section{Concluding Remarks} \label{sec:conclusion}

Universal treatment removes the contemporaneous controls that difference-in-differences and synthetic control rely on, leaving an ITS that must extrapolate the counterfactual from the unit's own history. We show that the GP bounds the worst-case learning error for any counterfactual the kernel represents within a fixed complexity budget, so the reported interval widens with the extrapolation it carries. The approach rests on three components: an identification result stating what the ITS design can and cannot recover, an estimator whose predictive uncertainty grows with extrapolation distance, and an RKHS reading that gives the reported interval its frequentist content. Across the applications the estimator recovers established findings while tracing post-treatment dynamics that a single level shift cannot reveal, and an open-source \texttt{R} implementation accompanies the paper.

Several limitations deserve emphasis. Performance depends on the kernel, which encodes how the counterfactual could evolve beyond the data; deeper extrapolation leans on the non-stationary components, where the worst-case bound scales with $\|g^*\|_{\mathcal{H}_k}$, a quantity the data cannot estimate. The source-condition bounds tighten when the target aligns with the kernel's leading directions, an order likewise not estimated. Fitting each series on its own forgoes the efficiency that cross-sectional structure could supply, which hierarchical or multi-task formulations \citep{ben2023estimating} could recover for short panels at the cost of assumptions about cross-unit similarity. Computation is a further constraint, since inverting the kernel matrix scales poorly beyond a few thousand observations; sparse approximations \citep[e.g.,][]{titsias2009variational} may extend the method to long daily series.

Two extensions merit exploration. The distance-based covariance accommodates irregularly spaced observations without modification, a common feature of administrative and judicial records; our applications use evenly spaced series, and a formal study of performance under missing observations would be valuable. More broadly, the logic developed here applies wherever a credible counterfactual must be extrapolated from a unit's own history and the conclusions drawn from it must remain extrapolation-aware.

\paragraph{Data availability}
The data and code for replication are available at
\href{https://github.com/soonhong-cho/gpits}{https://github.com/soonhong-cho/gpits}. The replication package will be deposited in the Harvard Dataverse upon acceptance.

\begin{singlespacing}
\putbib[bib] 
\end{singlespacing}
\end{bibunit}

\newpage
\appendix
\begin{bibunit}[apalike]

\clearpage
\thispagestyle{empty} 

\vspace{1.5em}
\begin{center}
{\Large\bf Supplementary Material for:\\[0.5em]
{\large\bf ``Let Time Tell: Identification and Gaussian Process Estimation for Interrupted Time Series''}}
\end{center}

\begin{singlespacing}
\startcontents[app]
\printcontents[app]{l}{1}{}
\end{singlespacing}

\clearpage

\setcounter{page}{1}
\setcounter{section}{0}
\setcounter{footnote}{0}

\counterwithin{figure}{section}
\counterwithin{table}{section}
\counterwithin{equation}{section}

\counterwithin{theorem}{section}
\counterwithin{proposition}{section}
\counterwithin{definition}{section}
\counterwithin{corollary}{section}
\counterwithin{assumption}{section}

\renewcommand{\thesection}{\Alph{section}}
\renewcommand{\theequation}{\Alph{section}.\arabic{equation}}
\renewcommand{\thefigure}{\Alph{section}.\arabic{figure}}
\renewcommand{\thetable}{\Alph{section}.\arabic{table}}

\renewcommand{\thetheorem}{\Alph{section}.\arabic{theorem}}
\renewcommand{\theproposition}{\Alph{section}.\arabic{proposition}}
\renewcommand{\thedefinition}{\Alph{section}.\arabic{definition}}
\renewcommand{\thecorollary}{\Alph{section}.\arabic{corollary}}
\renewcommand{\theassumption}{\Alph{section}.\arabic{assumption}}
\sloppy

\newpage

\section{Proof of Theorem~\ref{theorem:identification}}
\label{app:id_proof}

\begin{proof}[Proof of Theorem~\ref{theorem:identification}]
Fix a unit $i$ and a post-treatment period $t^* \geq t_0$. Since $D_{it^*} = 1$, consistency~\eqref{eq:consistency} gives $Y_{it^*} = Y_{it^*}(1)$, so the individual effect $\delta_{it^*} = Y_{it^*}(1) - Y_{it^*}(0)$ has its treated outcome observed and its untreated outcome missing. Replacing the unobserved $Y_{it^*}(0)$ by its conditional mean gives the target estimand~\eqref{eq:tau}, and
\begin{equation}
    \tau_{it^*} = Y_{it^*}(1) - \mathbb{E}[Y_{it^*}(0) \mid \mathbf{X}_{it^*}]
    = Y_{it^*} - \mathbb{E}[Y_{it^*}(0) \mid \mathbf{X}_{it^*}]
    = Y_{it^*} - g_i(\mathbf{X}_{it^*}),
    \label{eq:id_reduction}
\end{equation}
so it remains to identify $g_i(\mathbf{X}_{it^*}) = \mathbb{E}[Y_{it^*}(0) \mid \mathbf{X}_{it^*}]$ from the observed data.

The counterfactual conditional mean coincides with $g_i$ by definition. By the structural model~\eqref{eq:structural} and the linearity of conditional expectation,
\begin{align}
    \mathbb{E}[Y_{it^*}(0) \mid \mathbf{X}_{it^*}]
    &= \mathbb{E}\bigl[h_i(\mathbf{X}_{it^*}, \mathbf{U}_{it^*}) + \varepsilon_{it^*} \mid \mathbf{X}_{it^*}\bigr] \notag \\
    &= \mathbb{E}\bigl[h_i(\mathbf{X}_{it^*}, \mathbf{U}_{it^*}) \mid \mathbf{X}_{it^*}\bigr]
       + \mathbb{E}\bigl[\varepsilon_{it^*} \mid \mathbf{X}_{it^*}\bigr] \notag \\
    &= \int_{\mathcal{U}} h_i(\mathbf{X}_{it^*}, \mathbf{u})\, dP(\mathbf{U}_{it^*} = \mathbf{u} \mid \mathbf{X}_{it^*})
       \;=\; g_i(\mathbf{X}_{it^*}),
    \label{eq:id_cmidentity}
\end{align}
where $\mathbb{E}[\varepsilon_{it^*} \mid \mathbf{X}_{it^*}] = \mathbb{E}\bigl[\mathbb{E}[\varepsilon_{it^*} \mid \mathbf{X}_{it^*}, \mathbf{U}_{it^*}] \mid \mathbf{X}_{it^*}\bigr] = 0$ by iterated expectations applied to~\eqref{eq:structural}, and the last equality is the definition~\eqref{eq:target}. Identity~\eqref{eq:id_cmidentity} holds for any model of the form~\eqref{eq:structural}, with or without Assumption~\ref{assum:sufficiency}; it does not identify $g_i(\mathbf{X}_{it^*})$, since its right-hand side depends on the unobserved distribution $P(\mathbf{U}_{it^*} \mid \mathbf{X}_{it^*})$ and at $t^*$ only $Y_{it^*} = Y_{it^*}(1)$ is observed.

Assumption~\ref{assum:sufficiency} supplies the link. For each $t$ it states $\mathbb{E}[Y_{it}(0) \mid \mathbf{X}_{it}, \mathbf{U}_{it}] = \mathbb{E}[Y_{it}(0) \mid \mathbf{X}_{it}]$; the left-hand side equals $h_i(\mathbf{X}_{it}, \mathbf{U}_{it})$ and the right-hand side equals $g_i(\mathbf{X}_{it})$, so
\begin{equation}
    h_i(\mathbf{X}_{it}, \mathbf{U}_{it}) = g_i(\mathbf{X}_{it})
    \quad \text{a.s.}, \quad \forall\, t.
    \label{eq:id_hgas}
\end{equation}
Because $h_i$ enters~\eqref{eq:structural} with a form not depending on $t$ (Section~\ref{sec:identification}), the $g_i$ in~\eqref{eq:id_hgas} is common across periods, so the systematic part of $Y_{it}(0)$ depends on $\mathbf{X}_{it}$ alone and is identical before and after $t_0$.

It remains to recover $g_i$ from pre-treatment data. For $t < t_0$, consistency~\eqref{eq:consistency} and no anticipation~\eqref{eq:no_anticipation} give $Y_{it} = Y_{it}(0)$, so by~\eqref{eq:structural} and~\eqref{eq:id_hgas},
\begin{equation}
    Y_{it} = h_i(\mathbf{X}_{it}, \mathbf{U}_{it}) + \varepsilon_{it}
    = g_i(\mathbf{X}_{it}) + \varepsilon_{it},
    \quad \mathbb{E}[\varepsilon_{it} \mid \mathbf{X}_{it}] = 0,
    \quad t < t_0.
    \label{eq:id_pre}
\end{equation}
This is a nonparametric regression of the observed $Y_{it}$ on $\mathbf{X}_{it}$ with regression function $g_i$, so $g_i$ is identified on the support of $\mathbf{X}_{i,\text{pre}}$. Since~\eqref{eq:id_hgas} holds at $t^*$ with the same $g_i$, evaluating this function at $\mathbf{X}_{it^*}$ returns $g_i(\mathbf{X}_{it^*}) = \mathbb{E}[Y_{it^*}(0) \mid \mathbf{X}_{it^*}]$ by~\eqref{eq:id_cmidentity}, and substituting into~\eqref{eq:id_reduction} identifies $\tau_{it^*}$.
\end{proof}
\noindent The recovery of $g_i$ in~\eqref{eq:id_pre} is on the support of the pre-treatment inputs, whereas $\mathbf{X}_{it^*}$ lies beyond it in the time coordinate; evaluating $g_i$ at $t^*$ therefore extrapolates. The residual indeterminacy this leaves is the extrapolation uncertainty $\omega_B(\mathbf{X}_{it^*}) = 2B\,\|k_{t^*}^\perp\|_{\mathcal{H}_k}$ of Definition~\ref{def:extrap_uncertainty} (Corollary~\ref{cor:extrap_uncertainty}), and the reported band is what carries it. No parametric form is imposed on $g_i$; segmented regression, ARIMA, and Bayesian structural time series share this identification argument, differing only in the restrictions they place on $g_i$.

\section{Connections to RKHS Theory}
\label{app:rkhs_theory}

This appendix proves the results in Section~\ref{sec:theory} of the main text. For the link between GPs and RKHS theory, see \citet[Ch.~6]{rasmussen2006gaussian} and \citet{kanagawa2025gaussian}. We suppress the unit subscript $i$ throughout for notational simplicity. Proposition~\ref{prop:bias} is immediate, since taking the conditional expectation of the four-term decomposition~\eqref{eq:four_term} and applying the drop-out argument of Section~\ref{sec:theory} leaves only the approximation and learning terms. The substantive result is Proposition~\ref{prop:worst_case}, proved below.

\subsection{RKHS Preliminaries}

Let $k: \mathcal{X} \times \mathcal{X} \to \mathbb{R}$ be a positive definite kernel. Its reproducing kernel Hilbert space $\mathcal{H}_k$ is the completion of $\operatorname{span}\{k(\cdot, \mathbf{x}) : \mathbf{x} \in \mathcal{X}\}$ under the inner product determined by
\begin{equation}
    \langle k(\cdot, \mathbf{x}),\, k(\cdot, \mathbf{x}') \rangle_{\mathcal{H}_k} = k(\mathbf{x}, \mathbf{x}'),
    \label{eq:rkhs_inner}
\end{equation}
for all $\mathbf{x}, \mathbf{x}' \in \mathcal{X}$. This inner product yields the reproducing property: for any $f \in \mathcal{H}_k$,
\begin{equation}
    f(\mathbf{x}) = \langle f,\, k(\cdot, \mathbf{x}) \rangle_{\mathcal{H}_k},
    \label{eq:reproducing}
\end{equation}
so point evaluation is the inner product of $f$ with $k(\cdot, \mathbf{x})$. Applying Cauchy--Schwarz to~\eqref{eq:reproducing} gives
\begin{equation}
    |f(\mathbf{x})| \leq \|f\|_{\mathcal{H}_k}\, \sqrt{k(\mathbf{x}, \mathbf{x})},
    \label{eq:cauchy_schwarz}
\end{equation}
which establishes that point evaluation is a bounded linear functional on $\mathcal{H}_k$ and underlies the worst-case bound of Proposition~\ref{prop:worst_case}.

The norm $\|f\|_{\mathcal{H}_k} = \sqrt{\langle f, f \rangle_{\mathcal{H}_k}}$ measures the complexity of $f$ relative to the kernel, with functions aligned with its covariance structure carrying a small norm and functions requiring features the kernel does not naturally accommodate carrying a large one. It is distinct from the $L^2(P_{\mathbf{X}})$ norm $\|f\|_{L^2}^2 = \int f(\mathbf{x})^2 \, dP_{\mathbf{X}}(\mathbf{x})$, which measures average size under the marginal distribution of $\mathbf{X}$. The regularity that $g \in \mathcal{H}_k$ requires depends on the kernel. For the Gaussian kernel $\mathcal{H}_k$ consists of smooth functions, and for the linear kernel $k(\mathbf{x}, \mathbf{x}') = \mathbf{x}^\top \mathbf{x}'$ it consists of linear functions with $\|f\|_{\mathcal{H}_k}$ equal to the Euclidean norm of the coefficient vector.

The best approximation within budget $g^*$ of~\eqref{eq:gstar} minimises the $L^2(P_{\mathbf{X}})$ distance to $g$ subject to $\|f\|_{\mathcal{H}_k} \le B$, and the factor entering the worst-case bound is $\|g^*\|_{\mathcal{H}_k}$, which the budget caps at $B$. The minimiser exists and is unique, since for a bounded continuous kernel on a compact domain the inclusion $\mathcal{H}_k \hookrightarrow L^2(P_{\mathbf{X}})$ is compact, so the constraint set is convex and closed in $L^2(P_{\mathbf{X}})$. Dropping the constraint removes both properties whenever $\mathcal{H}_k$ is dense in $L^2(P_{\mathbf{X}})$, since the infimum is then zero and unattained.

The pre-treatment inputs span a finite-dimensional subspace of $\mathcal{H}_k$,
\begin{equation}
    \mathcal{S}_{\text{pre}} = \operatorname{span}\{k(\cdot, \mathbf{X}_1), \ldots, k(\cdot, \mathbf{X}_{t_0-1})\}.
\end{equation}
By the representer theorem \citep{kimeldorf1970correspondence}, the minimiser of the regularised empirical risk over $\mathcal{H}_k$ lies in $\mathcal{S}_{\text{pre}}$, and for squared loss with regularisation $\sigma^2$ the GP posterior mean takes the form
\begin{equation}
    \hat{g} = \sum_{t=1}^{t_0-1} w_t k(\cdot, \mathbf{X}_t), \quad \boldsymbol{w} = (\mathbf{K}_{\text{pre}} + \sigma^2 \mathbf{I})^{-1} Y_{\text{pre}},
    \label{eq:representer_applied}
\end{equation}
recovering the kernel ridge regression estimator \citep[Ch.~6]{rasmussen2006gaussian}. The regularisation parameter $\sigma^2$ in~\eqref{eq:representer_applied} penalises the RKHS norm, pulling the estimate toward functions the kernel favours; in dual form it shrinks the coefficients $\boldsymbol{w}$, but the effect in function space is smoothing rather than the sparsity an $\ell_1$ penalty in a fixed basis would deliver \citep{ghosal2017fundamentals}. The learning error arises because $\hat{g}$ is confined to $\mathcal{S}_{\text{pre}}$ while $g^*$ may have a component orthogonal to it, which the observed covariate values cannot recover.

\subsection{Proof of Proposition~\ref{prop:worst_case}}

We prove Proposition~\ref{prop:worst_case} stated in Section~\ref{sec:theory} of the main text. The proof follows results in \citet{kanagawa2025gaussian}, adapted to our setting.

\begin{proposition}[Proposition 2 (GP posterior variance controls the learning error)] \label{prop:worst_case_restate}
Suppose $g$ lies in the unit ball $\mathcal{F}_1$, so that $g^* = g$, and
suppose outcomes are observed with noise, $Y_t = g^*(\mathbf{X}_t) + \varepsilon_t$, where $\varepsilon_t$ are mean-zero with variance $\sigma^2 > 0$. The GP posterior mean is the weighting estimator $\hat g(\mathbf{X}_{t^*}) = \sum_t w_t Y_t$ with weights $\mathbf{w} = (\mathbf{K}_{\mathrm{pre}} + \sigma^2 I)^{-1}\mathbf{k}_{t^*}$, and its posterior variance is
\begin{equation*}
    \mathbb{V}_{t^*} = k(\mathbf{X}_{t^*},\mathbf{X}_{t^*}) - \mathbf{k}_{t^*}^\top(\mathbf{K}_{\mathrm{pre}} + \sigma^2 I)^{-1}\mathbf{k}_{t^*}.
\end{equation*}
Let $k_{t^*}^\perp = k(\cdot,\mathbf{X}_{t^*}) - \sum_t w_t\, k(\cdot,\mathbf{X}_t)$ be the residual of the test feature after its ridge reconstruction from the training features $\mathcal{S}_{\mathrm{pre}} = \mathrm{span}\{k(\cdot,\mathbf{X}_t)\}$, coinciding with the orthogonal-projection residual onto $\mathcal{S}_{\mathrm{pre}}$ as $\sigma^2 \to 0$. Then the learning error decomposes as
\begin{equation*}
    g^*(\mathbf{X}_{t^*}) - \hat g(\mathbf{X}_{t^*})
    = \underbrace{\Big(g^*(\mathbf{X}_{t^*}) - \textstyle\sum_t w_t\,g^*(\mathbf{X}_t)\Big)}_{\text{deterministic error}}
    \;-\; \underbrace{\textstyle\sum_t w_t\,\varepsilon_t}_{\text{noise propagation}},
\end{equation*}
and its two sources are controlled by the two summands of the GP posterior variance:
\begin{equation*}
    \Big|g^*(\mathbf{X}_{t^*}) - \sum_t w_t g^*(\mathbf{X}_t)\Big| \le \|k_{t^*}^\perp\|_{\mathcal{H}_k}, \quad
    \mathrm{Var}\big[\sum_t w_t \varepsilon_t\big] = \sigma^2\|\mathbf{w}\|^2,
    \qquad
    \underbrace{\|k_{t^*}^\perp\|_{\mathcal{H}_k}^2 + \sigma^2\|\mathbf{w}\|^2 = \mathbb{V}_{t^*}}_{\text{GP posterior variance}}.
\end{equation*}
Therefore $\mathbb{V}_{t^*}$ bounds the worst case of the deterministic error through its first summand and carries the exact variance of the propagated noise in its second.
\end{proposition}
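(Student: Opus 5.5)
The plan is to reduce everything to the reproducing property~\eqref{eq:reproducing} and one algebraic identity for the ridge weights. Write $A = \mathbf{K}_{\mathrm{pre}} + \sigma^2 I$, which is symmetric and positive definite since $\sigma^2 > 0$.

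\textbf{Weighting form and decomposition.} From~\eqref{eq:posterior_mean}, $\hat g(\mathbf{X}_{t^*}) = \mathbf{k}_{t^*}^\top A^{-1} Y_{\mathrm{pre}}$. Transposing the scalar and using symmetry of $A^{-1}$ gives $\hat g(\mathbf{X}_{t^*}) = \sum_t w_t Y_t$ with $\mathbf{w} = A^{-1}\mathbf{k}_{t^*}$. Substituting $Y_t = g^*(\mathbf{X}_t) + \varepsilon_t$ and subtracting from $g^*(\mathbf{X}_{t^*})$ yields the stated split into a deterministic error and the noise propagation $\sum_t w_t\varepsilon_t$. This uses only linearity; the weights depend on the inputs alone, not on $Y$.

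\textbf{Deterministic bound.} The reproducing property and linearity of the inner product give
\[
g^*(\mathbf{X}_{t^*}) - \sum_t w_t g^*(\mathbf{X}_t) = \Big\langle g^*,\, k(\cdot,\mathbf{X}_{t^*}) - \sum_t w_t k(\cdot,\mathbf{X}_t)\Big\rangle_{\mathcal{H}_k} = \langle g^*, k_{t^*}^\perp\rangle_{\mathcal{H}_k}.
\]
Cauchy--Schwarz, as in~\eqref{eq:cauchy_schwarz}, together with $\|g^*\|_{\mathcal{H}_k}\le 1$, then gives the bound $\|k_{t^*}^\perp\|_{\mathcal{H}_k}$. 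The bound is attained by $g^* = k_{t^*}^\perp/\|k_{t^*}^\perp\|_{\mathcal{H}_k}$, which I would note for the later minimax remarks. For the noise term, $\mathrm{Var}[\sum_t w_t\varepsilon_t] = \mathbf{w}^\top \mathrm{Cov}(\boldsymbol\varepsilon)\mathbf{w}$, which equals $\sigma^2\|\mathbf{w}\|^2$ once the pre-treatment errors are uncorrelated and homoscedastic. The statement only says ``mean-zero with variance $\sigma^2$'', so I will make the uncorrelatedness explicit; it is the reading under which the Gaussian model~\eqref{eq:gp_model} is stated.

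\textbf{The identity (main step).} Expanding with~\eqref{eq:rkhs_inner},
\[
\|k_{t^*}^\perp\|^2_{\mathcal{H}_k} = k(\mathbf{X}_{t^*},\mathbf{X}_{t^*}) - 2\,\mathbf{w}^\top\mathbf{k}_{t^*} + \mathbf{w}^\top\mathbf{K}_{\mathrm{pre}}\mathbf{w}.
\]
Adding $\sigma^2\mathbf{w}^\top\mathbf{w}$ combines the last two quadratic terms into $\mathbf{w}^\top A\mathbf{w}$. Since $\mathbf{w} = A^{-1}\mathbf{k}_{t^*}$, this equals $\mathbf{k}_{t^*}^\top A^{-1}\mathbf{k}_{t^*}$. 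The cross term is also $-2\,\mathbf{k}_{t^*}^\top A^{-1}\mathbf{k}_{t^*}$, so the sum collapses to $k(\mathbf{X}_{t^*},\mathbf{X}_{t^*}) - \mathbf{k}_{t^*}^\top A^{-1}\mathbf{k}_{t^*} = \mathbb{V}_{t^*}$.

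This is the step I expect to carry the content. The point is that the ridge weights make the $\sigma^2$-augmented quadratic form collapse exactly, which is why the two summands add up to the posterior variance and neither is merely a bound on it.

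\textbf{Projection limit.} Finally I would address the parenthetical claim that $k_{t^*}^\perp$ becomes the orthogonal-projection residual as $\sigma^2\to 0$. As $\sigma^2\to0$, $A^{-1}\mathbf{k}_{t^*}$ converges to $\mathbf{K}_{\mathrm{pre}}^{+}\mathbf{k}_{t^*}$, because $\mathbf{k}_{t^*}$ lies in the range of $\mathbf{K}_{\mathrm{pre}}$ (it is the Gram vector of a feature against the span). These limiting weights solve the normal equations $\langle k_{t^*}^\perp, k(\cdot,\mathbf{X}_s)\rangle = 0$ for every $s$. Hence $\sum_t w_t k(\cdot,\mathbf{X}_t)$ converges to the projection of $k(\cdot,\mathbf{X}_{t^*})$ onto $\mathcal{S}_{\mathrm{pre}}$, which is the claimed limit.
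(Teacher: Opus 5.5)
Your proposal is correct and follows the paper's proof essentially step for step: the reproducing property plus Cauchy--Schwarz for the deterministic leg, and the expansion of $\|k_{t^*}^\perp\|_{\mathcal{H}_k}^2$. In that expansion, $\mathbf{w} = (\mathbf{K}_{\mathrm{pre}} + \sigma^2 I)^{-1}\mathbf{k}_{t^*}$ collapses the $\sigma^2$-augmented quadratic form to $\mathbf{k}_{t^*}^\top(\mathbf{K}_{\mathrm{pre}} + \sigma^2 I)^{-1}\mathbf{k}_{t^*}$; the paper substitutes $\mathbf{K}_{\mathrm{pre}} = (\mathbf{K}_{\mathrm{pre}} + \sigma^2 I) - \sigma^2 I$ instead of adding $\sigma^2\|\mathbf{w}\|^2$, which is the same computation. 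Your two additions are correct refinements of points the paper leaves implicit: you state the uncorrelated, homoscedastic noise assumption needed for $\mathrm{Var}[\sum_t w_t\varepsilon_t] = \sigma^2\|\mathbf{w}\|^2$, which the paper asserts without comment, and you handle the $\sigma^2 \to 0$ limit through $\mathbf{K}_{\mathrm{pre}}^{+}$ using that $\mathbf{k}_{t^*}$ lies in the range of $\mathbf{K}_{\mathrm{pre}}$, where the paper simply takes $\mathbf{K}_{\mathrm{pre}}$ invertible.
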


\begin{proof}[Proof of Proposition~\ref{prop:worst_case_restate}]
Let $\mathbf{w} = (\mathbf{K}_{\text{pre}} + \sigma^2 \mathbf{I})^{-1} \mathbf{k}_{t^*}$ denote the posterior mean weights, where $\mathbf{k}_{t^*} = (k(\mathbf{X}_1, \mathbf{X}_{t^*}), \ldots, k(\mathbf{X}_{t_0-1}, \mathbf{X}_{t^*}))^\top$. By~\eqref{eq:representer_applied}, $\hat{g}(\mathbf{X}_{t^*}) = \sum_t w_t Y_t$. Recall the residual feature
\begin{equation} 
    k_{t^*}^\perp = k(\cdot, \mathbf{X}_{t^*}) - \sum_{t=1}^{t_0-1} w_t k(\cdot, \mathbf{X}_t), 
\end{equation}
the residual of the test feature after the ridge reconstruction. As $\sigma^2 \to 0$ the weights become the orthogonal projection coefficients $\mathbf{K}_{\mathrm{pre}}^{-1}\mathbf{k}_{t^*}$ and $\|k_{t^*}^\perp\|_{\mathcal{H}_k}$ reduces to the geometric distance between $\mathbf{X}_{t^*}$ and the training data; for $\sigma^2 > 0$ it carries an additional dependence on $\sigma^2$ through
$\mathbf{w}$; Corollary~\ref{cor:noiseless} states the limiting case.

By the reproducing property,
\begin{equation}
    g^*(\mathbf{X}_{t^*}) - \sum_t w_t g^*(\mathbf{X}_t) = \langle g^*,\, k_{t^*}^\perp \rangle_{\mathcal{H}_k}.
\end{equation}
Applying Cauchy--Schwarz:
\begin{equation}
    \bigl|g^*(\mathbf{X}_{t^*}) - \sum_t w_t g^*(\mathbf{X}_t)\bigr| \leq \|g^*\|_{\mathcal{H}_k} \cdot \|k_{t^*}^\perp\|_{\mathcal{H}_k}.
    \label{eq:cauchy_schwarz_bound}
\end{equation}
It remains to bound $\|k_{t^*}^\perp\|_{\mathcal{H}_k}$. Expanding via the reproducing property and bilinearity:
\begin{equation}
    \|k_{t^*}^\perp\|_{\mathcal{H}_k}^2 = k(\mathbf{X}_{t^*}, \mathbf{X}_{t^*}) - 2\mathbf{k}_{t^*}^\top \mathbf{w} + \mathbf{w}^\top \mathbf{K}_{\text{pre}} \mathbf{w}.
\end{equation}
Substituting $\mathbf{K}_{\text{pre}} = (\mathbf{K}_{\text{pre}} + \sigma^2 \mathbf{I}) - \sigma^2 \mathbf{I}$ and using the definition of $\mathbf{w}$:
\begin{equation}
    \mathbf{w}^\top \mathbf{K}_{\text{pre}} \mathbf{w} = \mathbf{k}_{t^*}^\top (\mathbf{K}_{\text{pre}} + \sigma^2 \mathbf{I})^{-1} \mathbf{k}_{t^*} - \sigma^2 \|\mathbf{w}\|^2.
\end{equation}
Since $\mathbf{k}_{t^*}^\top \mathbf{w} = \mathbf{k}_{t^*}^\top (\mathbf{K}_{\text{pre}} + \sigma^2 \mathbf{I})^{-1} \mathbf{k}_{t^*}$, combining terms yields
\begin{equation}
    \|k_{t^*}^\perp\|_{\mathcal{H}_k}^2 = \underbrace{k(\mathbf{X}_{t^*}, \mathbf{X}_{t^*}) - \mathbf{k}_{t^*}^\top (\mathbf{K}_{\text{pre}} + \sigma^2 \mathbf{I})^{-1} \mathbf{k}_{t^*}}_{\displaystyle \mathbb{V}[f(\mathbf{X}_{t^*}) \mid \mathbf{X}_{\text{pre}}, Y_{\text{pre}}]} - \underbrace{\sigma^2 \|\mathbf{w}\|^2}_{\geq\, 0}.
    \label{eq:geometric_decomp}
\end{equation}
Combining~\eqref{eq:cauchy_schwarz_bound} with~\eqref{eq:geometric_decomp} and $\|g^*\|_{\mathcal{H}_k} \le 1$ gives the deterministic bound $|g^*(\mathbf{X}_{t^*}) - \sum_t w_t g^*(\mathbf{X}_t)| \le \|k_{t^*}^\perp\|_{\mathcal{H}_k}$, while the noise propagation $\sum_t w_t\varepsilon_t$ is mean-zero with variance $\sigma^2\|\mathbf{w}\|^2$. Rearranging~\eqref{eq:geometric_decomp} gives $\|k_{t^*}^\perp\|_{\mathcal{H}_k}^2 + \sigma^2\|\mathbf{w}\|^2 = \mathbb{V}_{t^*}$, so the two summands of the posterior variance are exactly the two scales just bounded, and in particular $\|k_{t^*}^\perp\|_{\mathcal{H}_k} \le \sqrt{\mathbb{V}_{t^*}}$. This completes the proof.
\end{proof}
\noindent Under kernel misspecification the GP posterior variance can understate the true prediction error \citep{beckers2018mean}, but the worst-case bound remains valid for any $g^* \in \mathcal{H}_k$. Figure~\ref{fig:rkhs_decomposition} illustrates the decomposition.

The noiseless case isolates the geometric leg as a result of its own.
\begin{corollary}[Noiseless training]\label{cor:noiseless}
Suppose the pre-treatment outcomes are observed without noise, $Y_t = g^*(\mathbf{X}_t)$ for $t < t_0$, with $\mathbf{K}_{\mathrm{pre}}$ invertible. In the limit $\sigma^2 \to 0$ of Proposition~\ref{prop:worst_case}, the weights become the interpolation weights $\mathbf{w} = \mathbf{K}_{\mathrm{pre}}^{-1}\mathbf{k}_{t^*}$, the residual $k_{t^*}^\perp$ is the orthogonal projection residual of $k(\cdot, \mathbf{X}_{t^*})$ onto $\mathcal{S}_{\mathrm{pre}}$, and the learning error is deterministic with
\begin{equation}\label{eq:noiseless_bound}
    \bigl| g^*(\mathbf{X}_{t^*}) - \hat g(\mathbf{X}_{t^*}) \bigr|
    \;\le\; \|k_{t^*}^\perp\|_{\mathcal{H}_k}
    \;=\; \sqrt{\mathbb{V}_{t^*}},
\end{equation}
with equality attained at $g^* \propto k_{t^*}^\perp$.
\end{corollary}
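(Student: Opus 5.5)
The plan is to obtain Corollary~\ref{cor:noiseless} as the $\sigma^2 \to 0$ limit of Proposition~\ref{prop:worst_case}, or more cleanly by repeating its argument directly with $\sigma^2 = 0$. First, because $\mathbf{K}_{\mathrm{pre}}$ is invertible, $(\mathbf{K}_{\mathrm{pre}} + \sigma^2 I)^{-1} \to \mathbf{K}_{\mathrm{pre}}^{-1}$ by continuity of matrix inversion. Hence $\mathbf{w} \to \mathbf{K}_{\mathrm{pre}}^{-1}\mathbf{k}_{t^*}$, and $\mathbb{V}_{t^*} \to k(\mathbf{X}_{t^*},\mathbf{X}_{t^*}) - \mathbf{k}_{t^*}^\top \mathbf{K}_{\mathrm{pre}}^{-1}\mathbf{k}_{t^*}$.

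Next we identify $k_{t^*}^\perp$ with the orthogonal projection residual. The orthogonal projection of $k(\cdot,\mathbf{X}_{t^*})$ onto $\mathcal{S}_{\mathrm{pre}}$ has coefficients $\mathbf{c}$ solving the normal equations $\langle k(\cdot,\mathbf{X}_{t^*}) - \sum_s c_s k(\cdot,\mathbf{X}_s), k(\cdot,\mathbf{X}_t)\rangle = 0$ for all $t$. By~\eqref{eq:rkhs_inner} these read $\mathbf{K}_{\mathrm{pre}}\mathbf{c} = \mathbf{k}_{t^*}$, so $\mathbf{c} = \mathbf{K}_{\mathrm{pre}}^{-1}\mathbf{k}_{t^*} = \mathbf{w}$.

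Then comes the error identity. With noiseless data, $\hat g(\mathbf{X}_{t^*}) = \sum_t w_t g^*(\mathbf{X}_t)$, so the noise-propagation term vanishes and the learning error equals the deterministic error. By the reproducing property~\eqref{eq:reproducing} this is $\langle g^*, k_{t^*}^\perp\rangle_{\mathcal{H}_k}$. Cauchy--Schwarz together with $\|g^*\|_{\mathcal{H}_k} \le 1$ gives the bound $\|k_{t^*}^\perp\|_{\mathcal{H}_k}$. For the equality with $\sqrt{\mathbb{V}_{t^*}}$, set $\sigma^2 = 0$ in~\eqref{eq:geometric_decomp}, where the term $\sigma^2\|\mathbf{w}\|^2$ drops out. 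Alternatively, expand $\|k_{t^*}^\perp\|^2 = k(\mathbf{X}_{t^*},\mathbf{X}_{t^*}) - 2\mathbf{k}_{t^*}^\top\mathbf{w} + \mathbf{w}^\top\mathbf{K}_{\mathrm{pre}}\mathbf{w}$ and use $\mathbf{K}_{\mathrm{pre}}\mathbf{w} = \mathbf{k}_{t^*}$.

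Finally, attainment. Take $g^* = k_{t^*}^\perp/\|k_{t^*}^\perp\|_{\mathcal{H}_k}$, assuming $k_{t^*}^\perp \neq 0$; otherwise the bound is $0$ and holds trivially. This function lies in $\mathcal{F}_1$, and since $k_{t^*}^\perp \perp \mathcal{S}_{\mathrm{pre}}$ it vanishes at every training input. Therefore $\hat g \equiv 0$, and the error is $\langle g^*, k_{t^*}^\perp\rangle = \|k_{t^*}^\perp\|$, which is Cauchy--Schwarz equality.

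The main subtlety is interpreting the limit statement. The estimator in the noiseless setting should be understood as the interpolant, which equals the $\sigma^2 \to 0$ limit of the ridge weights, rather than as the ridge estimator with positive $\sigma^2$ applied to noiseless data. Working directly at $\sigma^2 = 0$ with invertible $\mathbf{K}_{\mathrm{pre}}$ sidesteps this, and the continuity argument is needed only to state that this interpolant is the limit of Proposition~\ref{prop:worst_case}.
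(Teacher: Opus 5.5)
Your proposal is correct and follows essentially the paper's route: the paper also gets the corollary by setting $\sigma^2 = 0$ in the geometric decomposition~\eqref{eq:geometric_decomp}, after the reproducing-property and Cauchy--Schwarz step of Proposition~\ref{prop:worst_case}, and it places attainment at the normalised residual $k_{t^*}^\perp/\|k_{t^*}^\perp\|_{\mathcal{H}_k}$, which vanishes on the training inputs as in the lower bound of Proposition~\ref{prop:minimax}. Your normal-equations check that $\mathbf{K}_{\mathrm{pre}}^{-1}\mathbf{k}_{t^*}$ gives the orthogonal projection, and your normalisation to unit norm for the equality case, supply details the paper only asserts.
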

\noindent The corollary follows by setting $\sigma^2 = 0$ in~\eqref{eq:geometric_decomp}, whereupon the noise propagation term vanishes and the posterior variance reduces to $\|k_{t^*}^\perp\|_{\mathcal{H}_k}^2$ alone. The pre-treatment data pins the counterfactual down wherever it lies in $\mathcal{S}_{\mathrm{pre}}$, and what the record cannot see is at once the entire error and the entire posterior standard deviation.

\begin{figure}[hbt!]
    \centering
    \includegraphics[width=1\textwidth]{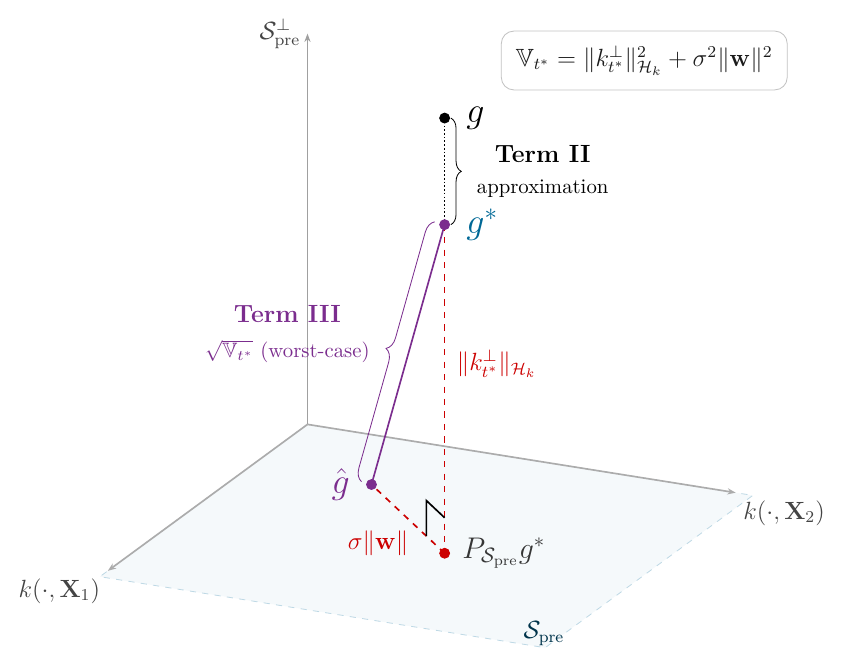}
    \caption{Schematic of the error decomposition at $t^*$ in $\mathcal{H}_k$. \textit{Note:} The plane is $\mathcal{S}_{\text{pre}} = \operatorname{span}\{k(\cdot,\mathbf{X}_1),\ldots,k(\cdot,\mathbf{X}_{t_0-1})\}$. The true CEF $g$ lies outside $\mathcal{H}_k$, its best RKHS approximation $g^*$ is estimated by $\hat{g}\in\mathcal{S}_{\text{pre}}$, the gap $g - g^*$ is the approximation error (Term~II), and $g^* - \hat{g}$ is the learning error (Term~III), which splits into a component orthogonal to $\mathcal{S}_{\text{pre}}$ and a noise component within it. Over $\|g^*\|_{\mathcal{H}_k}\le1$ the geometric leg is longest for the target aligned with $k_{t^*}^\perp$, drawn here with length $\|k_{t^*}^\perp\|_{\mathcal{H}_k}$; a generic target gives the smaller $\langle g^*,k_{t^*}^\perp\rangle$. With the noise leg $\sigma\|\mathbf{w}\|$ the two legs combine into the hypotenuse, $\mathbb{V}_{t^*}$, when $\sigma^2=0$ and approximately otherwise.}
    \label{fig:rkhs_decomposition}
\end{figure}
Proposition~\ref{prop:worst_case} is sharp over the unit ball rather than merely an upper bound, since $\|k_{t^*}^\perp\|_{\mathcal{H}_k} = \sup_{\|g^*\|_{\mathcal{H}_k}\le 1}|\langle g^*, k_{t^*}^\perp\rangle_{\mathcal{H}_k}|$, attained at $g^* \propto k_{t}^\perp$. Figure~\ref{fig:app_sim_attainment} reads this in function space, drawing a family of least favourable targets that meet the band, each at one horizon.

\begin{figure}[hbt!]
    \centering
    \includegraphics[width=0.9\textwidth]{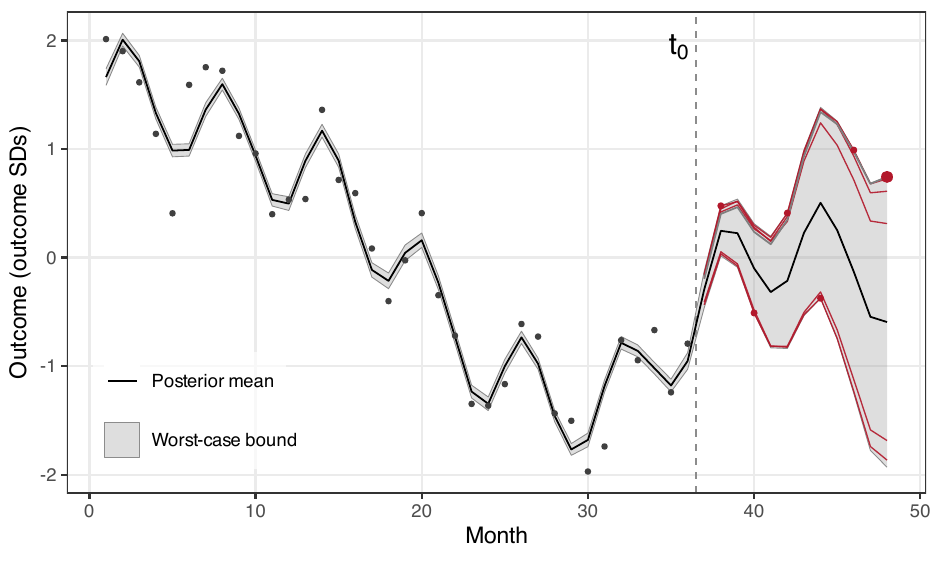}
    \caption{Attainment in Proposition~\ref{prop:worst_case}, in function space. The GP posterior mean (black) interpolates the noisy pre-treatment series (points) and extrapolates past $t_0$ (dashed). The shaded band is the geometric term $\|k_{t}^\perp\|_{\mathcal{H}_k}$ of the posterior variance, the largest error any counterfactual with $\|g^*\|_{\mathcal{H}_k}\le 1$ agreeing with the pre-treatment series can incur at $t^*$, so it is the attained worst case rather than a loose envelope, exactly under noiseless training (Corollary~\ref{cor:noiseless}) and approximately under regularisation. The bold red curve is the least favourable admissible counterfactual at the furthest horizon, meeting the band edge at $t^*$ (point); the thin curves meet it at their own horizons ($t^*=2, 4, 6, 8, 10$), so the bound is attained pointwise across the window.}
    \label{fig:app_sim_attainment}
\end{figure}

\paragraph{Minimax Optimality at the Extrapolation Point.}
Proposition~\ref{prop:worst_case} and Corollary~\ref{cor:noiseless} bound what the GP posterior mean loses against the least favourable target. A stronger question is whether any other estimator could lose less. To make this precise, consider noiseless training and let $D(g) = (g(\mathbf{X}_1), \ldots, g(\mathbf{X}_{t_0-1}))$ collect the pre-treatment evaluations of $g$. An estimator of $g(\mathbf{X}_{t^*})$ is any map $\varphi \colon \mathbb{R}^{t_0-1} \to \mathbb{R}$ that reads the pre-treatment data and returns a single number, an unrestricted class that includes the GP posterior mean, kernel ridge regression, and segmented regression. The minimax risk at the extrapolation point is
\begin{equation}\label{eq:minimax_risk}
    \mathfrak{R}(\mathbf{X}_{t^*})
    \;=\;
    \inf_{\varphi}\,
    \sup_{\|g\|_{\mathcal{H}_k} \le 1}\,
    \bigl| g(\mathbf{X}_{t^*}) - \varphi(D(g)) \bigr|.
\end{equation}
The infimum searches over all estimators; the supremum searches over all regression functions in the unit ball. The question is what the cleverest possible use of the pre-treatment data can guarantee against the hardest possible target.

\begin{proposition}[Minimax optimality of the noiseless band]
\label{prop:minimax}
Under the conditions of Corollary~\ref{cor:noiseless}, with $\|k_{t^*}^\perp\|_{\mathcal{H}_k} > 0$,
\begin{equation*}
    \mathfrak{R}(\mathbf{X}_{t^*})
    \;=\; \|k_{t^*}^\perp\|_{\mathcal{H}_k}
    \;=\; \sqrt{\mathbb{V}_{t^*}},
\end{equation*}
and the GP posterior mean $\hat{g}(\mathbf{X}_{t^*}) = \mathbf{k}_{t^*}^\top \mathbf{K}_{\mathrm{pre}}^{-1} Y_{\mathrm{pre}}$ attains the infimum. Over the ball of budget $B$ the risk scales to $B\,\|k_{t^*}^\perp\|_{\mathcal{H}_k}$, attained by the same estimator.
\end{proposition}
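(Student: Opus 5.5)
The proof will establish matching upper and lower bounds on $\mathfrak{R}(\mathbf{X}_{t^*})$. The upper bound comes from the GP posterior mean itself. The lower bound comes from a two-point (symmetrisation) argument built on the residual feature $k_{t^*}^\perp$.

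\emph{Upper bound.} Take $\varphi(D(g)) = \mathbf{k}_{t^*}^\top \mathbf{K}_{\mathrm{pre}}^{-1} D(g) = \sum_t w_t g(\mathbf{X}_t)$ with the interpolation weights $\mathbf{w} = \mathbf{K}_{\mathrm{pre}}^{-1}\mathbf{k}_{t^*}$.

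1. By the reproducing property, $g(\mathbf{X}_{t^*}) - \varphi(D(g)) = \langle g, k_{t^*}^\perp\rangle_{\mathcal{H}_k}$. This is exactly the identity used in the proof of Proposition~\ref{prop:worst_case}.
2. Cauchy--Schwarz with $\|g\|_{\mathcal{H}_k}\le 1$ then bounds the error by $\|k_{t^*}^\perp\|_{\mathcal{H}_k}$.
3. Corollary~\ref{cor:noiseless} identifies this quantity with $\sqrt{\mathbb{V}_{t^*}}$, taking $\sigma^2=0$ in \eqref{eq:geometric_decomp}.

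Hence $\mathfrak{R}(\mathbf{X}_{t^*}) \le \|k_{t^*}^\perp\|_{\mathcal{H}_k}$, and this estimator achieves that value.

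\emph{Lower bound.} Set $h = k_{t^*}^\perp/\|k_{t^*}^\perp\|_{\mathcal{H}_k}$, which is well defined because the norm is assumed positive. For $\sigma^2=0$, $k_{t^*}^\perp$ is the orthogonal projection residual onto $\mathcal{S}_{\mathrm{pre}}$, so it is orthogonal to every $k(\cdot,\mathbf{X}_t)$.

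1. Orthogonality gives $h(\mathbf{X}_t) = \langle h, k(\cdot,\mathbf{X}_t)\rangle_{\mathcal{H}_k} = 0$ for every $t<t_0$.
2. Hence $D(h) = D(-h) = 0$: both $\pm h$ lie in the unit ball and generate identical pre-treatment data.
3. At the test point, $h(\mathbf{X}_{t^*}) = \langle k_{t^*}^\perp, k(\cdot,\mathbf{X}_{t^*})\rangle/\|k_{t^*}^\perp\|$. Since $k_{t^*}^\perp \perp \mathcal{S}_{\mathrm{pre}}$, we may replace $k(\cdot,\mathbf{X}_{t^*})$ by $k_{t^*}^\perp$, so $h(\mathbf{X}_{t^*}) = \|k_{t^*}^\perp\|_{\mathcal{H}_k}$.
4. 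Any estimator $\varphi$ returns the single value $c=\varphi(0)$ for both functions. Therefore
$$\max\bigl(|\|k_{t^*}^\perp\| - c|,\; |-\|k_{t^*}^\perp\| - c|\bigr) \ge \|k_{t^*}^\perp\|_{\mathcal{H}_k},$$
by the triangle inequality, since the two targets are $2\|k_{t^*}^\perp\|$ apart.

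This gives $\mathfrak{R}(\mathbf{X}_{t^*}) \ge \|k_{t^*}^\perp\|_{\mathcal{H}_k}$, which matches the upper bound.

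\emph{Scaling to budget $B$.} Replace $g$ by $Bg$ in both arguments. The upper bound picks up the factor $B$ through Cauchy--Schwarz. The lower bound uses the pair $\pm Bh$. Since the map $g\mapsto Bg$ is a bijection of the balls and the interpolating estimator is linear in the data, the same estimator attains $B\|k_{t^*}^\perp\|_{\mathcal{H}_k}$.

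\emph{Main subtlety.} The delicate step is the lower bound's use of exact orthogonality. It depends on the noiseless limit, in which the weights are the projection coefficients, so that $k_{t^*}^\perp \perp \mathcal{S}_{\mathrm{pre}}$. I would first verify explicitly that $\langle k_{t^*}^\perp, k(\cdot,\mathbf{X}_s)\rangle = k(\mathbf{X}_s,\mathbf{X}_{t^*}) - (\mathbf{K}_{\mathrm{pre}}\mathbf{w})_s = 0$, which holds because $\mathbf{w} = \mathbf{K}_{\mathrm{pre}}^{-1}\mathbf{k}_{t^*}$ and $\mathbf{K}_{\mathrm{pre}}$ is invertible. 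With ridge weights this inner product is nonzero and the pair $\pm h$ would no longer share data, which is why the result is stated only for the noiseless band.
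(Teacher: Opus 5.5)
Your proposal is correct and follows essentially the same route as the paper. The lower bound is the two-point argument with $\pm k_{t^*}^\perp/\|k_{t^*}^\perp\|_{\mathcal{H}_k}$, which vanish at every pre-treatment input and separate by $2\|k_{t^*}^\perp\|_{\mathcal{H}_k}$ at $\mathbf{X}_{t^*}$; it is matched by the reproducing-property and Cauchy--Schwarz bound of Corollary~\ref{cor:noiseless} for the interpolating posterior mean, and the budget-$B$ case follows by rescaling. Your explicit check that $\mathbf{K}_{\mathrm{pre}}\mathbf{w} = \mathbf{k}_{t^*}$ forces exact orthogonality to $\mathcal{S}_{\mathrm{pre}}$ is a useful step that the paper leaves implicit.
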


\begin{proof}
\textit{Lower bound.}
Set $g_\pm = \pm\, k_{t^*}^\perp / \|k_{t^*}^\perp\|_{\mathcal{H}_k}$. Both have unit norm, so both lie in the unit ball. Because $k_{t^*}^\perp$ is orthogonal to $\mathcal{S}_{\mathrm{pre}}$, both functions vanish at every pre-treatment input: $D(g_+) = D(g_-) = \mathbf{0}$. Any estimator/mapping $\varphi$ therefore returns the same value $d = \varphi(\mathbf{0})$ for both. Yet at the extrapolation point the two functions separate. By the reproducing property and the orthogonal decomposition $k(\cdot, \mathbf{X}_{t^*}) = k_{t^*}^\perp + P_{\mathcal{S}_{\mathrm{pre}}}\, k(\cdot, \mathbf{X}_{t^*})$,
\begin{equation*}
    g_\pm(\mathbf{X}_{t^*})
    \;=\; \frac{\langle k_{t^*}^\perp,\, k(\cdot,
    \mathbf{X}_{t^*})\rangle_{\mathcal{H}_k}}
    {\pm\|k_{t^*}^\perp\|_{\mathcal{H}_k}}
    \;=\; \pm\,\|k_{t^*}^\perp\|_{\mathcal{H}_k}.
\end{equation*}
The two values sit at $\pm\|k_{t^*}^\perp\|_{\mathcal{H}_k}$, so for every $d \in \mathbb{R}$,
\begin{equation*}
    \max\bigl\{\,|g_+(\mathbf{X}_{t^*}) - d|,\;
    |g_-(\mathbf{X}_{t^*}) - d|\,\bigr\}
    \;\ge\; \|k_{t^*}^\perp\|_{\mathcal{H}_k},
\end{equation*}
and the floor $\mathfrak{R}(\mathbf{X}_{t^*}) \ge \|k_{t^*}^\perp\|_{\mathcal{H}_k}$ follows.

\textit{Upper bound.}
Corollary~\ref{cor:noiseless} states that the GP posterior mean satisfies $|g(\mathbf{X}_{t^*}) - \hat{g}(\mathbf{X}_{t^*})| \le \|k_{t^*}^\perp\|_{\mathcal{H}_k}$ for every $g$ in the unit ball. Its worst case therefore meets the floor, and the two bounds coincide. The budget-$B$ statement follows by replacing $g$ with $g/B$. The argument adapts the classical theory of optimal recovery \citep{golomb1959optimal, micchelli1977survey} to the extrapolation target of the ITS design.
\end{proof}

The pair $g_\pm$ is a thought experiment, not a claim about what the true regression function looks like. In the lower-bound argument the pre-treatment data happen to be zero, because $g_\pm$ are constructed to live entirely in the direction the training data cannot see. The upper bound, by contrast, holds for every function in the ball and every dataset, zero or not. Together the two bounds say that the GP posterior mean extracts everything the pre-treatment data contains about the extrapolation target, and the reported band is the residual indeterminacy that no procedure can remove.

\paragraph{Data-conditional sharpening.}\label{app:sharpening}
The unconditional minimax risk $\|k_{t^*}^\perp\|_{\mathcal{H}_k}$ is the worst case across all datasets. With a specific observed series $Y_{\mathrm{pre}}$, the argument tightens. Let $\hat{g}$ be the minimum-norm interpolant of $Y_{\mathrm{pre}}$, satisfying $\|\hat{g}\|_{\mathcal{H}_k} \le 1$. Any admissible function $g$ with $D(g) = Y_{\mathrm{pre}}$ and $\|g\|_{\mathcal{H}_k} \le 1$ decomposes as $g = \hat{g} + u$ with $u \perp \mathcal{S}_{\mathrm{pre}}$. By the Pythagorean identity, $\|u\|_{\mathcal{H}_k}^2 \le 1 - \|\hat{g}\|_{\mathcal{H}_k}^2$, so $u$ has less budget to spend in the unseen direction. The reproducing property gives $g(\mathbf{X}_{t^*}) - \hat{g}(\mathbf{X}_{t^*}) = \langle u, k_{t^*}^\perp \rangle_{\mathcal{H}_k}$, and applying Cauchy--Schwarz yields the identified set
\begin{equation*}
    \bigl\{\, g(\mathbf{X}_{t^*}) : \|g\|_{\mathcal{H}_k} \le 1,\; D(g) = Y_{\mathrm{pre}} \,\bigr\}
    \;=\;
    \bigl[\, \hat{g}(\mathbf{X}_{t^*}) \pm \sqrt{1 - \|\hat{g}\|_{\mathcal{H}_k}^2}\; \|k_{t^*}^\perp\|_{\mathcal{H}_k} \,\bigr],
\end{equation*}
with the endpoints attained at $u \propto \pm k_{t^*}^\perp$. Fitting the pre-treatment data consumes part of the RKHS budget, leaving less for the adversary to deploy at $t^*$. The reported band $\hat{g}(\mathbf{X}_{t^*}) \pm \|k_{t^*}^\perp\|_{\mathcal{H}_k}$ contains this identified set for every dataset, coincides with it in the least favourable one ($\|\hat{g}\|_{\mathcal{H}_k} = 0$), and the GP posterior mean sits at the midpoint throughout.

\paragraph{Noise does not restore learnability.}
With noisy training, $Y_t = g(\mathbf{X}_t) + \varepsilon_t$, the pair $g_\pm$ induces identical distributions of the observed data, since both conditional means vanish at every pre-treatment input. For any estimator $\varphi$, coupling the two DGPs and applying the triangle inequality gives $\sup_{\|g\|_{\mathcal{H}_k} \le 1} \mathbb{E}\,|g(\mathbf{X}_{t^*}) - \varphi(Y_{\mathrm{pre}})| \ge \|k_{t^*}^\perp\|_{\mathcal{H}_k}$, where $k_{t^*}^\perp$ is the ridge projection residual of Proposition~\ref{prop:worst_case}. Lengthening the pre-treatment data moves this floor only through the geometry of $\mathcal{S}_{\mathrm{pre}}$, not through the sample size: additional observations refine the in-support fit but leave the component orthogonal to $\mathcal{S}_{\mathrm{pre}}$ unidentified.

\subsection{RKHS of Kernel Sums, Source Condition, and Convergence Rates}\label{app:source_condition}

This section develops the machinery underlying Section~\ref{sec:theory}, namely the spectral representation that makes norm inflation precise (Section~\ref{app:spectral_rep}), norm control via kernel sums (Section~\ref{app:kernel_sums}), the source condition for kernel--target alignment (Section~\ref{app:source_condition_formal}), and convergence rates with their limits under extrapolation (Section~\ref{app:convergence_rates}).

\subsubsection{Spectral Representation of the RKHS Norm}\label{app:spectral_rep}

Let $P_{\mathbf{X}}$ be the marginal distribution of $\mathbf{X}$ and $T_k \colon L^2(P_{\mathbf{X}}) \to L^2(P_{\mathbf{X}})$ the kernel integral operator $T_k(f)(\mathbf{x}) = \int k(\mathbf{x}, \mathbf{x}') f(\mathbf{x}') \, dP_{\mathbf{X}}(\mathbf{x}')$. On a compact domain with a continuous kernel, $T_k$ is compact, positive, and self-adjoint, with Mercer decomposition
\begin{equation}
    k(\mathbf{x}, \mathbf{x}') = \sum_{j \geq 1} \mu_j \phi_j(\mathbf{x}) \phi_j(\mathbf{x}'),
\end{equation}
where $\mu_1 \geq \mu_2 \geq \cdots > 0$ are eigenvalues and $\{\phi_j\}$ are eigenfunctions orthonormal in $L^2(P_{\mathbf{X}})$ \citep{steinwart2008support}. The RKHS norm then admits the representation
\begin{equation}\label{eq:rkhs_norm_mercer}
    \|f\|_{\mathcal{H}_k}^2 = \sum_{j \geq 1} \frac{f_j^2}{\mu_j},
\end{equation}
with $f_j = \langle f, \phi_j \rangle_{L^2(P_{\mathbf{X}})}$ \citep{kanagawa2025gaussian}. Energy on eigenfunctions with large eigenvalues costs little, while energy on high-frequency eigenfunctions, those with small $\mu_j$, pays a steep penalty.

A stationary kernel carries a trend only by spreading it across many small-eigenvalue directions, and~\eqref{eq:rkhs_norm_mercer} charges for each.  The reproducing property fixes the total charge without computing the spectrum. Take $g(t) = \beta t$ over a window of length $T$ and let $h$ approximate it to within $\varepsilon$ in the supremum norm. Applying~\eqref{eq:reproducing} and Cauchy--Schwarz to the difference of two feature maps,
\begin{equation}\label{eq:gauss_norm_lower}
    \|h\|_{\mathcal{H}_{\text{Gaussian}}}
    \;\ge\; \frac{|h(T) - h(0)|}{\sqrt{2}}
    \;\ge\; \frac{|\beta|\,T - 2\varepsilon}{\sqrt{2}},
\end{equation}
since $\|k(\cdot,T) - k(\cdot,0)\|_{\mathcal{H}_k}^2 = 2 - 2k(0,T) \le 2$ for a unit-diagonal kernel, while $h$ must climb at least $|\beta| T - 2\varepsilon$ across the window. The cost of a trend is the distance it climbs, and no bandwidth avoids it: the bound reads the kernel only through its diagonal.

Two consequences follow.  At a fixed tolerance the cost grows with the length of the series.  As the tolerance shrinks the cost grows without bound, since approximants of bounded norm would place $g$ inside $\mathcal{H}_{\text{Gaussian}}$, and $g$ is not there on any window: members over $[0,T]$ extend to the whole line \citep{aronszajn1950theory} and are analytic there \citep{steinwart2006explicit}, so a member agreeing with $\beta t$ on the window agrees with it everywhere, yet~\eqref{eq:cauchy_schwarz} bounds every member and $\beta t$ is unbounded. Universality therefore buys approximation rather than membership, and charges for it by the length of the window. The Aronszajn decomposition~\eqref{eq:rkhs_sum_norm} removes the charge by carrying the trend in $\mathcal{H}_{\text{lin}}$, where $\|\beta t\|_{\mathcal{H}_{\text{lin}}}^2 = \beta^2$ whatever the window.

\begin{figure}[thb!]
    \centering
    \includegraphics[width=1\textwidth]{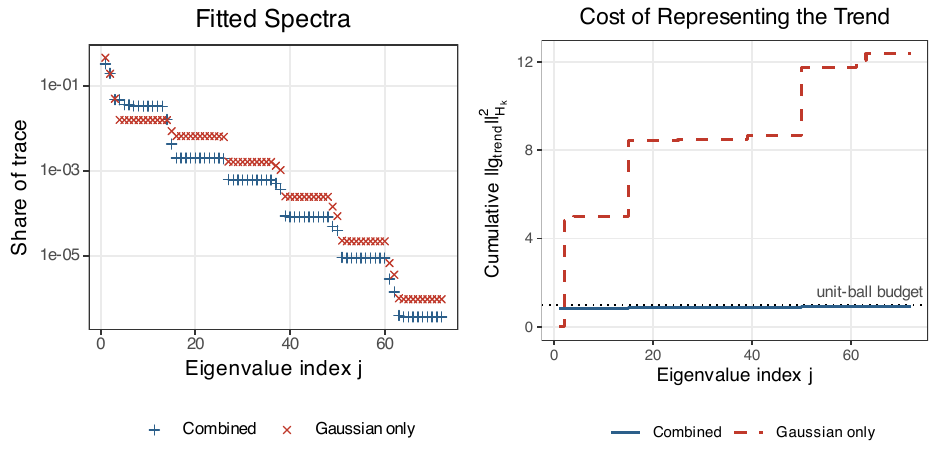}
    \caption{What the kernel sum buys, on the national \textit{Heller} pre-treatment series ($n = 72$ months).  Left panel plots the eigenvalues of each fitted kernel matrix as shares of its trace on a logarithmic scale, the prior variance the kernel assigns to each orthogonal direction on the design. Right panel accumulates over those same directions the squared RKHS norm a unit-variance trend costs, with the dotted line marking the unit-ball budget of Proposition~\ref{prop:worst_case}.  Under the combined kernel the trend is the leading direction, carrying thirty-four percent of the trace, and costs $0.924$ with ninety percent of that paid in the leading direction alone.  Under the Gaussian-only kernel the leading direction is orthogonal to the trend, which is assembled instead from many small directions at a cost of $12.4$.  In norm the budgets are $B = 0.96$ and $B = 3.52$.}
    \label{fig:app_scree}
\end{figure}
For a single kernel the cost of a trend can be bounded analytically, as in~\eqref{eq:gauss_norm_lower}; for a sum the norm is an infimum over decompositions~\eqref{eq:rkhs_sum_norm} and no such bound is available, so we read the fitted spectrum instead.  Figure~\ref{fig:app_scree} does this on the national \textit{Heller} pre-treatment series.  On an equally spaced monthly design the Gaussian kernel matrix is unchanged by reversing time, so each eigenvector is either symmetric or antisymmetric about the midpoint; the leading one is symmetric and the centred trend is antisymmetric, leaving the two orthogonal at any bandwidth. The trend is then assembled from many small directions, each charged by~\eqref{eq:rkhs_norm_mercer}, and the budget reaches $B = 3.52$, in line with the lower bound of~\eqref{eq:gauss_norm_lower}. The linear component supplies the missing direction, and the budget falls to $B = 0.96$, inside the unit ball. Universality is not what separates the two specifications, since a Gaussian kernel does represent the trend on this window; the price of doing so is.

\subsubsection{RKHS of Kernel Sums}\label{app:kernel_sums}

\begin{theorem}[\citealt{aronszajn1950theory}]
\label{thm:aronszajn}
For positive definite kernels $k_1, \ldots, k_m$ on $\mathcal{X}$, the RKHS of $k = k_1 + \cdots + k_m$ is
\begin{equation}
    \mathcal{H}_k = \bigl\{ f_1 + \cdots + f_m : f_j \in \mathcal{H}_{k_j} \bigr\},
\end{equation}
with norm
\begin{equation}\label{eq:rkhs_sum_norm}
    \|f\|_{\mathcal{H}_k}^2 = \inf\Bigl\{ \textstyle\sum_{j=1}^m \|f_j\|_{\mathcal{H}_{k_j}}^2 \;:\; f = \textstyle\sum_{j=1}^m f_j,\; f_j \in \mathcal{H}_{k_j} \Bigr\}.
\end{equation}
The infimum is attained. When the component RKHSs intersect only at $\{0\}$, the decomposition is unique and $\mathcal{H}_k$ is an orthogonal direct sum.
\end{theorem}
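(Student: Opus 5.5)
The plan is the classical lifting argument: realise the sum kernel as a pullback from a product space, so that the general fact about RKHSs of maps into feature spaces gives the characterisation at once. Form the Hilbert direct sum $\mathcal{G} = \mathcal{H}_{k_1} \oplus \cdots \oplus \mathcal{H}_{k_m}$, with inner product $\langle (f_j), (f_j') \rangle_{\mathcal{G}} = \sum_j \langle f_j, f_j' \rangle_{\mathcal{H}_{k_j}}$. Define the linear map $A \colon \mathcal{G} \to \mathbb{R}^{\mathcal{X}}$ by $A(f_1, \ldots, f_m) = f_1 + \cdots + f_m$.

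\textbf{Step 1: kernel of the null space.} Each evaluation $f \mapsto f_j(\mathbf{x})$ is bounded by \eqref{eq:cauchy_schwarz}, so $\mathcal{N} = \ker A$ is closed. Write $\mathcal{N}^\perp$ for its orthogonal complement. Then $A$ restricted to $\mathcal{N}^\perp$ is injective and maps onto $\operatorname{ran} A = \{f_1 + \cdots + f_m\}$.

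\textbf{Step 2: transport the inner product.} Equip $\operatorname{ran} A$ with $\|f\|^2 = \|(A|_{\mathcal{N}^\perp})^{-1} f\|_{\mathcal{G}}^2$. By the Pythagorean identity in $\mathcal{G}$, for any preimage $F$ of $f$ we have $\|F\|^2 = \|P_{\mathcal{N}^\perp}F\|^2 + \|P_{\mathcal{N}}F\|^2$. Hence this norm equals $\inf\{\|F\|_{\mathcal{G}}^2 : AF = f\}$, the infimum is attained at $P_{\mathcal{N}^\perp}F$, and this is exactly \eqref{eq:rkhs_sum_norm}. The space $\operatorname{ran} A$ is Hilbert because it is isometric to the closed subspace $\mathcal{N}^\perp$.

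\textbf{Step 3: verify the reproducing kernel is $k$.} Set $K_{\mathbf{x}} = (k_1(\cdot,\mathbf{x}), \ldots, k_m(\cdot,\mathbf{x})) \in \mathcal{G}$. For any $F \in \mathcal{G}$, $\langle F, K_{\mathbf{x}} \rangle_{\mathcal{G}} = \sum_j f_j(\mathbf{x}) = (AF)(\mathbf{x})$, so $K_{\mathbf{x}}$ is orthogonal to $\mathcal{N}$ and therefore lies in $\mathcal{N}^\perp$. Moreover $AK_{\mathbf{x}} = k(\cdot,\mathbf{x})$. It follows that $k(\cdot,\mathbf{x}) \in \operatorname{ran} A$. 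For $f = AF$ with $F \in \mathcal{N}^\perp$ we get $\langle f, k(\cdot,\mathbf{x}) \rangle = \langle F, K_{\mathbf{x}} \rangle_{\mathcal{G}} = f(\mathbf{x})$, which is the reproducing property \eqref{eq:reproducing}. Uniqueness of the RKHS for a given kernel (Moore--Aronszajn) then identifies $\operatorname{ran} A$ with $\mathcal{H}_k$.

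\textbf{Step 4: trivial intersection.} For $m=2$, suppose $\mathcal{H}_{k_1} \cap \mathcal{H}_{k_2} = \{0\}$. If $(f_1,f_2) \in \mathcal{N}$, then $f_1 = -f_2$ lies in both spaces and so is zero. Thus $\mathcal{N} = \{0\}$, $A$ is an isometry from $\mathcal{G}$, and the sum is orthogonal with a unique decomposition. For general $m$ I would read the hypothesis as each $\mathcal{H}_{k_j}$ meeting the sum of the others only at $0$, which again makes $\ker A$ trivial; pairwise trivial intersection alone would not suffice, and I would state that reading explicitly.

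The main obstacle is the bookkeeping in Step 3: checking that the transported space really coincides with $\mathcal{H}_k$ as a set and not merely as a subspace. The uniqueness theorem handles this once the reproducing property holds on all of $\operatorname{ran} A$, and that in turn requires showing $K_{\mathbf{x}} \in \mathcal{N}^\perp$, which I would verify first.
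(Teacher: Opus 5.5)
Your argument is correct. The paper gives no proof to compare it against: it states the result on the authority of Aronszajn (1950) and only ever uses the inequality obtained by dropping the infimum, \eqref{eq:norm_upper_bound}. What you have written is the classical proof from that source. It forms the Hilbert direct sum, takes the closed null space $\mathcal{N}$ of the summation map $A$, transports the norm from $\mathcal{N}^\perp$, and identifies the result with $\mathcal{H}_k$ through the reproducing property and uniqueness. Relative to the paper, it buys self-containment and an explicit minimiser, namely the projection onto $\mathcal{N}^\perp$ of any preimage. One small streamlining is available. The identity $\langle F, K_{\mathbf{x}}\rangle_{\mathcal{G}} = (AF)(\mathbf{x})$ already gives $\mathcal{N} = \{K_{\mathbf{x}} : \mathbf{x}\in\mathcal{X}\}^\perp$. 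Hence $\mathcal{N}$ is closed and $\mathcal{N}^\perp = \overline{\operatorname{span}}\{K_{\mathbf{x}} : \mathbf{x}\in\mathcal{X}\}$, which makes a separate Step~1 unnecessary.

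Your caveat in Step~4 is right, and it corrects the statement's wording. For $m \ge 3$ the hypothesis has to be that the sum is direct: $\mathcal{H}_{k_j} \cap \sum_{i \neq j} \mathcal{H}_{k_i} = \{0\}$ for every $j$, equivalently $\ker A = \{0\}$. Pairwise trivial intersection is not enough. Take linearly independent functions $\phi_1, \phi_2$ on $\mathcal{X}$, set $k_1(\mathbf{x},\mathbf{x}') = \phi_1(\mathbf{x})\phi_1(\mathbf{x}')$, and define $k_2$ and $k_3$ likewise from $\phi_2$ and $\phi_1 + \phi_2$. These kernels are positive definite in the semidefinite sense the paper uses, as for its linear kernel. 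The three one-dimensional RKHSs meet pairwise only at $0$. Yet $\phi_1 + \phi_2 - (\phi_1 + \phi_2) = 0$ is a nontrivial decomposition of zero, and $\mathcal{H}_k$ is two-dimensional rather than a three-dimensional orthogonal direct sum. The paper's use of the theorem is unaffected, since the inequality it relies on needs neither attainment nor uniqueness.
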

\noindent The infimum assigns each function the cost of its cheapest decomposition. For our kernel $k = k_{\text{Gaussian}} + k_{\text{per}} + k_{\text{lin}}$ the component RKHSs overlap on a compact domain, so the decomposition is not unique and the infimum is no larger than any particular feasible decomposition.

Suppose $g \in \mathcal{H}_k$ with decomposition $g = g_{\text{lin}} + g_{\text{per}} + g_{\text{Gaussian}}$. Theorem~\ref{thm:aronszajn} yields the bound
\begin{equation}\label{eq:norm_upper_bound}
    \|g\|_{\mathcal{H}_k}^2 \leq \|g_{\text{lin}}\|_{\mathcal{H}_{\text{lin}}}^2 + \|g_{\text{per}}\|_{\mathcal{H}_{\text{per}}}^2 + \|g_{\text{Gaussian}}\|_{\mathcal{H}_{\text{Gaussian}}}^2,
\end{equation}
which follows by dropping the infimum in~\eqref{eq:rkhs_sum_norm} over any feasible decomposition.

The linear kernel norm $\|g_{\text{lin}}\|_{\mathcal{H}_{\text{lin}}}^2 = \|\beta\|^2$ depends only on the slope and not on the horizon, the periodic norm depends on the seasonal amplitude and smoothness, and the Gaussian norm is moderate because $g_{\text{Gaussian}}$ is a stationary residual the Gaussian kernel represents efficiently. By contrast, a Gaussian-only specification pays a norm that grows with the window's length for the same target (Section~\ref{app:spectral_rep}, Figure~\ref{fig:app_scree}). Substituting into the worst-case bound of Proposition~\ref{prop:worst_case} yields the combined bound~\eqref{eq:combined_bound_main}, in which the first factor measures each component's complexity in its natural RKHS and the second grows with extrapolation distance.

\subsubsection{Source Condition}\label{app:source_condition_formal}

The source condition \citep{caponnetto2007optimal} characterises kernel--target alignment more finely than membership by quantifying how fast the eigencoefficients of $g^*$ decay relative to the kernel's eigenvalues.

\begin{definition}[Source condition]\label{def:source}
Let $T_k$ be the kernel integral operator with Mercer eigensystem $\{(\mu_j, \phi_j)\}_{j \geq 1}$, and define $T_k^r f = \sum_{j \geq 1} \mu_j^r f_j \phi_j$ for $r \geq 0$, with $f_j = \langle f, \phi_j \rangle_{L^2(P_{\mathbf{X}})}$. The best RKHS approximation $g^*$ satisfies a \emph{source condition of order $r$} if there exists $v \in \mathcal{H}_k$ with
\begin{equation}\label{eq:source_condition}
    g^* = T_k^r v.
\end{equation}
\end{definition}
\noindent We take $r = 0$ to mean plain membership $g^* \in \mathcal{H}_k$, the assumption of Proposition~\ref{prop:worst_case}. The condition constrains the learning target $g^*$ and is separate from Term~II.  Because $v \in \mathcal{H}_k = T_k^{1/2}(L^2(P_{\mathbf{X}}))$, our order $r$ corresponds to $r + 1/2$ in the $L^2$ parameterisation common in the learning-theory literature, so the saturation point $r = 1/2$ here is the familiar value of one \citep{caponnetto2007optimal}. In Sobolev scales with $\mathcal{H}_k \simeq W_2^{s}$ ($s > d/2$) and $g^* \in W_2^{s_0}$, $r = (s_0 - s)/(2s)$, so $r = 0$ corresponds to $s_0 = s$ and $r = 1/2$ to $s_0 = 2s$. This calibration requires polynomial eigenvalue decay; the combined kernel's components have exponential or finite-rank spectra, but Figure~\ref{fig:app_source_draws} shows the effect of $r$ on the fitted kernel directly.

\begin{figure}[hbt!]
    \centering
    \includegraphics[width=1\textwidth]{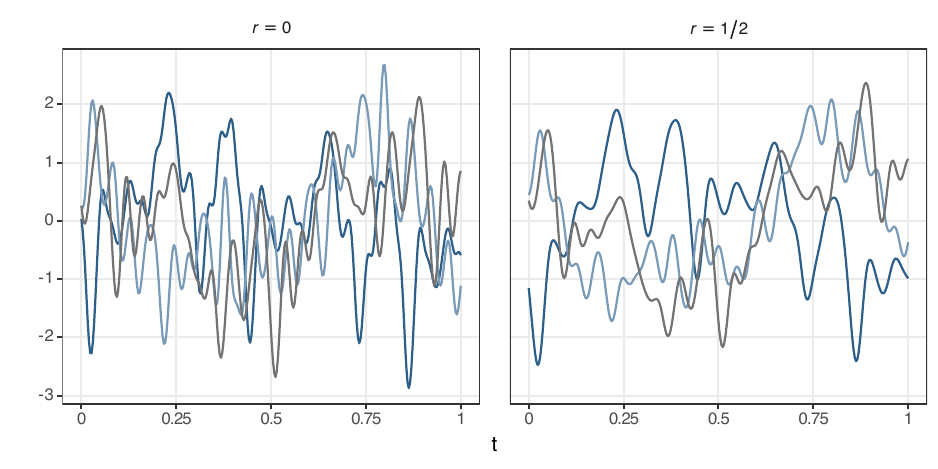}
    \caption{The source condition means alignment with the kernel's leading directions. Each panel shows three draws $g = T_k^{r} v$ with $v$ uniform on the unit sphere of $\mathcal{H}_k$, computed from the empirical eigensystem of the fitted combined kernel and rescaled to unit sample variance for display; the same three $v$ appear in both panels. The left panel takes $r = 0$, plain membership, and the right panel takes $r = 1/2$, the saturation order of Theorem~\ref{thm:l2_rate}.  Raising $r$ damps the small-eigenvalue directions, so the draws in panel~(b) carry less local movement.}
    \label{fig:app_source_draws}
\end{figure}
The Sobolev scale calibrates $r$ against a familiar quantity, but the combined kernel does not sit on that scale, so Figure~\ref{fig:app_source_draws} reads $r$ off the fitted eigensystem instead, as Figure~\ref{fig:app_scree} reads the cost of a trend. Its leading direction is the standardised trend. At $r = 0$ a target in $\mathcal{H}_k$ may still carry local movement that the leading directions do not describe, and pays for it in norm; at $r = 1/2$ that movement is damped and what remains is close to what those directions already describe. Alignment is therefore a property of the kernel and the design together, not of $g^*$ alone.

\begin{proposition}[Norm control under the source condition]\label{prop:source_norm}
If $g^* = T_k^r v$ for some $v \in \mathcal{H}_k$ and $r \geq 0$, then
\begin{equation}\label{eq:source_norm}
    \|g^*\|_{\mathcal{H}_k}^2 \leq \mu_1^{2r} \|v\|_{\mathcal{H}_k}^2.
\end{equation}
\end{proposition}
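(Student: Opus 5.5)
The plan is to work entirely in the Mercer coordinates of Section~\ref{app:spectral_rep}. Expand $v = \sum_{j\ge1} v_j \phi_j$ with $v_j = \langle v, \phi_j\rangle_{L^2(P_{\mathbf{X}})}$. Since $v \in \mathcal{H}_k$, the spectral representation~\eqref{eq:rkhs_norm_mercer} gives $\|v\|_{\mathcal{H}_k}^2 = \sum_j v_j^2/\mu_j < \infty$. By Definition~\ref{def:source}, the coefficients of $g^* = T_k^r v$ are $g^*_j = \mu_j^r v_j$. Applying~\eqref{eq:rkhs_norm_mercer} to $g^*$ then yields
\begin{equation*}
    \|g^*\|_{\mathcal{H}_k}^2 = \sum_{j\ge1} \frac{\mu_j^{2r} v_j^2}{\mu_j}.
\end{equation*}

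The key step is a termwise comparison. Because the eigenvalues are ordered, $0 < \mu_j \le \mu_1$ for all $j$, and because $r \ge 0$, we get $\mu_j^{2r} \le \mu_1^{2r}$. Every summand is nonnegative, so we may bound it termwise and pull the constant out:
\begin{equation*}
    \sum_j \frac{\mu_j^{2r} v_j^2}{\mu_j} \le \mu_1^{2r}\sum_j \frac{v_j^2}{\mu_j} = \mu_1^{2r}\|v\|_{\mathcal{H}_k}^2,
\end{equation*}
which is~\eqref{eq:source_norm}. The same computation, read in reverse, shows that the bound is attained when $v = \phi_1$. The case $r=0$ reduces to the identity $g^* = v$, consistent with the convention stated after Definition~\ref{def:source}.

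The only point I expect to need care is justifying that the spectral series for $g^*$ is legitimate. Three things must hold: $T_k^r v$ is well defined, it lies in $\mathcal{H}_k$, and~\eqref{eq:rkhs_norm_mercer} applies to it. I would handle these by noting that the coefficient sequence $\mu_j^r v_j$ is square-summable against the weights $1/\mu_j$, which is exactly the display above, with a finite right-hand side. Hence $T_k^r v \in \mathcal{H}_k$ by the characterization $\mathcal{H}_k = \{f : \sum_j f_j^2/\mu_j < \infty\}$ that underlies~\eqref{eq:rkhs_norm_mercer}, and $L^2$ convergence of the series is automatic since $\mu_j^{2r}v_j^2 \le \mu_1^{2r+1} v_j^2/\mu_j$. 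With these points checked, the termwise bound is immediate. Combining~\eqref{eq:source_norm} with Proposition~\ref{prop:worst_case} then gives the budget $B = \mu_1^r\|v\|_{\mathcal{H}_k}$ used in~\eqref{eq:source_main}.
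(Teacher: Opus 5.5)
Your proof is correct and is the paper's own argument: pass to Mercer coordinates so that $(g^*)_j = \mu_j^r v_j$, apply \eqref{eq:rkhs_norm_mercer}, and bound $\mu_j^{2r} \le \mu_1^{2r}$ termwise. The checks that $T_k^r v$ lies in $\mathcal{H}_k$ and the attainment at $v = \phi_1$ are sound additions the paper leaves implicit; note only that your closing remark recovers the outer bound $\mu_1^{r}\|k_{t^*}^\perp\|_{\mathcal{H}_k}$ of \eqref{eq:source_main}, while the sharper middle term there comes from Proposition~\ref{prop:source_worstcase} instead.
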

\begin{proof}
Since $g^* = T_k^r v$, its $L^2$ eigencoefficients are $(g^*)_j = \mu_j^r v_j$ with $v_j = \langle v, \phi_j \rangle_{L^2(P_{\mathbf{X}})}$. By~\eqref{eq:rkhs_norm_mercer},
\begin{equation*}
    \|g^*\|_{\mathcal{H}_k}^2 = \sum_{j \geq 1} \frac{(\mu_j^r v_j)^2}{\mu_j} = \sum_{j \geq 1} \mu_j^{2r-1} (v_j)^2 \leq \mu_1^{2r} \sum_{j \geq 1} \frac{(v_j)^2}{\mu_j} = \mu_1^{2r} \|v\|_{\mathcal{H}_k}^2,
\end{equation*}
where the inequality uses $\mu_j \leq \mu_1$.
\end{proof}
\noindent The ratio $\|g^*\|_{\mathcal{H}_k}^2 / \|v\|_{\mathcal{H}_k}^2 = \sum_j \mu_j^{2r}\,(v_j^2/\mu_j) \big/ \|v\|_{\mathcal{H}_k}^2$ is a weighted average of $\mu_j^{2r}$ over the spectral directions in which $v$ carries energy, with $\mu_1^{2r}$ as the worst case.  The source condition concentrates the target on directions the kernel represents cheaply, controlling the norm factor $\|g^*\|_{\mathcal{H}_k}$ in Proposition~\ref{prop:worst_case}; Proposition~\ref{prop:source_worstcase} below controls the geometric factor $\|k_{t^*}^\perp\|_{\mathcal{H}_k}$ separately. Under the standard normalisation $\mu_1 \le 1$ the factor $\mu_1^{2r}$ is a genuine reduction; for an unnormalised composite kernel it quantifies spectral alignment rather than unconditional norm reduction.

The source condition refines the worst-case learning bias of Proposition~\ref{prop:worst_case} directly rather than only through the norm factor, since over the source class of order $r$ the worst-case bias is governed by a smoothed residual.

\begin{proposition}[Alignment-refined worst-case learning bias] \label{prop:source_worstcase}
Suppose $g^*$ satisfies the source condition of order $r \ge 0$ (Definition~\ref{def:source}), $g^* = T_k^{r} v$ with $\|v\|_{\mathcal{H}_k} \le 1$. Then the deterministic learning bias of Proposition~\ref{prop:worst_case} satisfies
\begin{equation}\label{eq:source_worstcase}
    \Bigl| g^*(\mathbf{X}_{t^*}) - \textstyle\sum_t w_t\, g^*(\mathbf{X}_t) \Bigr|
    \;=\; \bigl| \langle g^*,\, k_{t^*}^\perp \rangle_{\mathcal{H}_k} \bigr|
    \;\le\; \bigl\| T_k^{r}\, k_{t^*}^\perp \bigr\|_{\mathcal{H}_k},
\end{equation}
and the right-hand side is the exact worst case as $g^*$ ranges over $\{T_k^{r} v : \|v\|_{\mathcal{H}_k}\le 1\}$. Writing $a_j = \langle k_{t^*}^\perp,\sqrt{\mu_j}\,\phi_j\rangle_{\mathcal{H}_k}$ and $\nu_j = \mu_j/\mu_1 \in (0,1]$,
\begin{equation}\label{eq:source_spectral}
    \bigl\| T_k^{r} k_{t^*}^\perp \bigr\|_{\mathcal{H}_k}^2
    \;=\; \mu_1^{2r}\sum_{j\ge 1} \nu_j^{\,2r}\, a_j^2
    \;\le\; \mu_1^{2r}\,\bigl\| k_{t^*}^\perp \bigr\|_{\mathcal{H}_k}^2 ,
\end{equation}
with equality iff $k_{t^*}^\perp$ is supported on the leading eigenspace. At $r=0$ the bound equals the constant $\|k_{t^*}^\perp\|_{\mathcal{H}_k}$ of Proposition~\ref{prop:worst_case}.
\end{proposition}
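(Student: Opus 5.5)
The plan is to combine the reproducing-property identity from the proof of Proposition~\ref{prop:worst_case} with the self-adjointness of $T_k^r$ on $\mathcal{H}_k$, and then read off the supremum with Cauchy--Schwarz. The equality $g^*(\mathbf{X}_{t^*}) - \sum_t w_t g^*(\mathbf{X}_t) = \langle g^*, k_{t^*}^\perp\rangle_{\mathcal{H}_k}$ was already established there and needs nothing new.

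\textbf{Setting up the spectral picture.} I would use the Mercer system of Section~\ref{app:spectral_rep}: the functions $e_j = \sqrt{\mu_j}\,\phi_j$ form an orthonormal basis of $\mathcal{H}_k$. Indeed, by~\eqref{eq:rkhs_norm_mercer},
\[
\|e_j\|_{\mathcal{H}_k}^2 = \mu_j/\mu_j = 1,
\]
and polarisation gives orthogonality. In this basis $T_k^r$ acts diagonally, with $T_k^r e_j = \mu_j^r e_j$. Hence $T_k^r$ is a bounded, self-adjoint, positive operator on $\mathcal{H}_k$ with operator norm $\mu_1^r$, and it maps $\mathcal{H}_k$ into itself. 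This last fact is what makes $T_k^r k_{t^*}^\perp$ and $T_k^r v$ meaningful elements of $\mathcal{H}_k$.

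\textbf{The bound.} With $g^* = T_k^r v$, self-adjointness gives
\[
\langle T_k^r v, k_{t^*}^\perp\rangle_{\mathcal{H}_k} = \langle v, T_k^r k_{t^*}^\perp\rangle_{\mathcal{H}_k}.
\]
Cauchy--Schwarz together with $\|v\|_{\mathcal{H}_k}\le 1$ then yields~\eqref{eq:source_worstcase}.

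\textbf{Exactness.} Take $v = T_k^r k_{t^*}^\perp / \|T_k^r k_{t^*}^\perp\|_{\mathcal{H}_k}$ whenever the denominator is nonzero; if it is zero the bound is trivially attained. This choice makes Cauchy--Schwarz tight, so the supremum over the source class equals $\|T_k^r k_{t^*}^\perp\|_{\mathcal{H}_k}$.

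\textbf{The spectral identity~\eqref{eq:source_spectral}.} Expand $k_{t^*}^\perp = \sum_j a_j e_j$ with $a_j = \langle k_{t^*}^\perp, e_j\rangle_{\mathcal{H}_k}$. Then $T_k^r k_{t^*}^\perp = \sum_j \mu_j^r a_j e_j$, and Parseval gives
\[
\sum_j \mu_j^{2r} a_j^2 = \mu_1^{2r}\sum_j \nu_j^{2r} a_j^2.
\]
Since $\nu_j\le 1$, this is at most $\mu_1^{2r}\sum_j a_j^2 = \mu_1^{2r}\|k_{t^*}^\perp\|_{\mathcal{H}_k}^2$. For $r>0$, equality holds iff $a_j = 0$ whenever $\nu_j<1$, that is, iff $k_{t^*}^\perp$ lies in the leading eigenspace. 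At $r=0$ both sides coincide identically, since $T_k^0$ is the identity, which recovers Proposition~\ref{prop:worst_case}.

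\textbf{Main obstacle.} The delicate step is justifying that $\{\sqrt{\mu_j}\phi_j\}$ is a complete orthonormal basis of $\mathcal{H}_k$. This is needed both for $k_{t^*}^\perp$ to expand fully and for $T_k^r$, defined via $L^2(P_{\mathbf{X}})$ coefficients, to coincide with the diagonal operator on $\mathcal{H}_k$. I would invoke the Mercer setting already assumed: a continuous kernel on a compact domain, with all $\mu_j>0$ as stated. Under these conditions $\mathcal{H}_k = T_k^{1/2}L^2(P_{\mathbf{X}})$ is standard (\citealt{steinwart2008support}).

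A residual subtlety is that the reproducing identity holds pointwise, while $k_{t^*}^\perp$ involves evaluations at specific design points. I would note that continuity of the kernel makes the $\mathcal{H}_k$-element representation unambiguous, so no almost-everywhere issue arises.
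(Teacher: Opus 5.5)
Your proposal is correct and follows the paper's proof essentially step for step: the representer identity from Proposition~\ref{prop:worst_case}, self-adjointness of $T_k^{r}$ as a diagonal operator in the basis $\{\sqrt{\mu_j}\,\phi_j\}$, Cauchy--Schwarz with attainment at $v \propto T_k^{r} k_{t^*}^\perp$, and the coordinatewise expansion using $\nu_j \le 1$. Your restriction of the equality clause to $r>0$ is a small improvement on the statement, since at $r=0$ equality holds for every $k_{t^*}^\perp$.

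One correction to your ``main obstacle'' paragraph. Positivity of the $\mu_j$ only gives density of $\mathcal{H}_k$ in $L^2(P_{\mathbf{X}})$. Completeness of $\{\sqrt{\mu_j}\,\phi_j\}$ in $\mathcal{H}_k$ instead needs injectivity of $\mathcal{H}_k \hookrightarrow L^2(P_{\mathbf{X}})$, for example $\operatorname{supp} P_{\mathbf{X}} = \mathcal{X}$, which here must reach the post-treatment inputs. The paper's proof also takes this assumption for granted.
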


\begin{proof}
The first equality is the representer identity used for
Proposition~\ref{prop:worst_case}. In the orthonormal basis $\{\sqrt{\mu_j}\,\phi_j\}_{j\ge1}$ of $\mathcal{H}_k$ the operator $T_k^{r}$ is diagonal with entries $\mu_j^{r}$, hence self-adjoint on $\mathcal{H}_k$, so $\langle T_k^{r} v, k_{t^*}^\perp\rangle_{\mathcal{H}_k} = \langle v, T_k^{r} k_{t^*}^\perp\rangle_{\mathcal{H}_k}$. Maximising over $\|v\|_{\mathcal{H}_k}\le 1$ gives $\|T_k^{r} k_{t^*}^\perp\|_{\mathcal{H}_k}$, attained at $v \propto T_k^{r} k_{t^*}^\perp$. Expanding $k_{t^*}^\perp = \sum_j a_j \sqrt{\mu_j}\,\phi_j$ and applying $T_k^{r}$ coordinatewise yields~\eqref{eq:source_spectral}; the inequality uses $\nu_j\le 1$.
\end{proof}
\noindent Membership in the combined $\mathcal{H}_k$ holds the norm factor $\|g^*\|_{\mathcal{H}_k}$ finite and horizon-free (Section~\ref{app:kernel_sums}). A source condition of order $r > 0$ tightens the geometric factor: by~\eqref{eq:source_spectral}, $\|T_k^{r} k_{t^*}^\perp\|_{\mathcal{H}_k}^2 = \mu_1^{2r}\sum_j \nu_j^{2r}\,a_j^2$ with $\nu_j = \mu_j/\mu_1 \in (0,1]$, so $\sum_j \nu_j^{2r} a_j^2 \le \sum_j a_j^2 = \|k_{t^*}^\perp\|_{\mathcal{H}_k}^2$ is a strict reduction whenever $k_{t^*}^\perp$ has energy outside the leading eigenspace, regardless of whether $\mu_1 \le 1$.

\subsubsection{Convergence Rates}\label{app:convergence_rates}

\paragraph{$L^2$ rate for a single kernel.}

The source condition and the spectral decay together determine how fast the posterior mean converges on the pre-treatment support.  The rate below is the in-support benchmark; extrapolation points, where the fill distance is fixed by the study design, fall outside its scope.
\begin{theorem}[$L^2$ convergence of the GP posterior mean; adapted from \citealt{caponnetto2007optimal}]
\label{thm:l2_rate}
Let $k$ be a continuous positive definite kernel on a compact domain $\mathcal{X} \subset \mathbb{R}^d$ with Mercer eigenvalues $\mu_j \asymp j^{-\beta}$ for $\beta > 1$ and bounded diagonal $\sup_{\mathbf{x} \in \mathcal{X}} k(\mathbf{x}, \mathbf{x}) \le \kappa^2 < \infty$. Suppose $g^*$ satisfies the source condition of order $0 \le r \le 1/2$ (Definition~\ref{def:source}) with $\|v\|_{\mathcal{H}_k} \leq R$, and the noise $\varepsilon_t$ is sub-Gaussian. Let $\hat{g}$ be the GP posterior mean~\eqref{eq:representer_applied} with regularisation set to the rate-optimal scale $\lambda_n \asymp n^{-\beta/((2r+1)\beta + 1)}$. Then
\begin{equation}\label{eq:convergence_rate}
    \|\hat{g} - g^*\|_{L^2(P_{\mathbf{X}})}^2 = O_p\!\left(n^{-\frac{(2r+1)\beta}{(2r+1)\beta + 1}}\right), \quad n = t_0 - 1,
\end{equation}
where the $O_p$ constant depends on $R$, $\sigma^2$, and the kernel hyperparameters.
\end{theorem}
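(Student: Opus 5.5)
The plan is the standard integral-operator (bias--variance) analysis of kernel ridge regression, in the form of \citet{caponnetto2007optimal}, specialised to the GP posterior mean~\eqref{eq:representer_applied}. Rescale the ridge parameter: minimising $\frac1n\sum_t (Y_t-f(\mathbf{X}_t))^2+\lambda\|f\|_{\mathcal{H}_k}^2$ with $\lambda=\sigma^2/n$ gives the estimator. In operator form, $\hat g=(\hat T+\lambda)^{-1}\hat S^*Y$, where $\hat T$ is the empirical covariance operator on $\mathcal{H}_k$ and $\hat S^*Y=\frac1n\sum_t Y_t k(\cdot,\mathbf{X}_t)$. Introduce the population regularised target $g_\lambda=(T_k+\lambda)^{-1}T_k g^*$ and split
\[
\|\hat g-g^*\|_{L^2}\le \|g_\lambda-g^*\|_{L^2}+\|\hat g-g_\lambda\|_{L^2}.
\]
Because the statement concerns an in-support $L^2$ rate, the pre-treatment inputs are treated as draws from $P_{\mathbf{X}}$ for this result. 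One may alternatively take them as a design whose empirical measure concentrates around $P_{\mathbf{X}}$.

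\textbf{Approximation (bias) step.} Use Definition~\ref{def:source} and the Mercer expansion of Section~\ref{app:spectral_rep}. The $L^2$ coefficients of $g^*$ are $\mu_j^{r}v_j$ with $\sum_j v_j^2/\mu_j\le R^2$. Hence
\[
\|g_\lambda-g^*\|_{L^2}^2=\sum_j\Bigl(\tfrac{\lambda}{\mu_j+\lambda}\Bigr)^2\mu_j^{2r}v_j^2\le R^2\sup_{\mu}\frac{\lambda^2\mu^{2r+1}}{(\mu+\lambda)^2}\lesssim R^2\lambda^{2r+1}.
\]
The supremum bound requires $2r+1\le 2$, that is $r\le 1/2$. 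This is exactly where saturation enters and why the theorem restricts $r$.

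\textbf{Estimation (variance) step.} Write $\hat g-g_\lambda=(\hat T+\lambda)^{-1}\bigl[(\hat S^*Y-\hat T g_\lambda)-\lambda g_\lambda+\lambda g_\lambda\bigr]$, which after simplification equals $(\hat T+\lambda)^{-1}\bigl[\hat S^*\boldsymbol\varepsilon+(\hat T-T_k)(g^*-g_\lambda)\bigr]$. Measure this in $L^2$ through $\|T_k^{1/2}\cdot\|_{\mathcal{H}_k}$. Then:
\begin{itemize}
\item[(a)] Show $\|(T_k+\lambda)^{1/2}(\hat T+\lambda)^{-1}(T_k+\lambda)^{1/2}\|\le 2$ with high probability once $n\lambda\gtrsim \mathcal{N}_\infty(\lambda)\log n$. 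This follows from a Bernstein inequality for Hilbert--Schmidt operators, using $\kappa^2$ to bound the summands.
\item[(b)] For the noise term, sub-Gaussianity gives $\|(T_k+\lambda)^{-1/2}\hat S^*\boldsymbol\varepsilon\|_{\mathcal{H}_k}^2\lesssim \sigma^2\mathcal{N}(\lambda)/n$ with high probability. Here the effective dimension is $\mathcal{N}(\lambda)=\sum_j\mu_j/(\mu_j+\lambda)\asymp\lambda^{-1/\beta}$ under $\mu_j\asymp j^{-\beta}$.
\item[(c)] The cross term $(\hat T-T_k)(g^*-g_\lambda)$ is of lower order by the same Bernstein bound combined with the bias estimate.
\end{itemize}
Together these give a variance contribution of order $\lambda^{-1/\beta}/n$.

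\textbf{Balancing step.} Equating $\lambda^{2r+1}$ with $\lambda^{-1/\beta}/n$ gives $\lambda_n\asymp n^{-\beta/((2r+1)\beta+1)}$ and the rate~\eqref{eq:convergence_rate}. The condition $\beta>1$, together with this choice of $\lambda_n$, verifies the sample-size requirement $n\lambda_n\gtrsim\lambda_n^{-1/\beta}\log n$ in (a). I expect step (a) to be the main obstacle: controlling $(\hat T+\lambda)^{-1}$ relative to $(T_k+\lambda)^{-1}$ uniformly, under only $\kappa$-boundedness, requires that sample-size condition, and the logarithmic factors must be absorbed into the $O_p$ statement. A further point to remark on is that $\sigma^2$ plays a double role in the paper, as both noise level and ridge. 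The theorem sets regularisation to $\lambda_n$, so I would state that the ridge is tuned as $n\lambda_n$ rather than fixed at the marginal-likelihood $\hat\sigma^2$.
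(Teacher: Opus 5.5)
The paper gives no argument of its own here. It restates Caponnetto and De Vito's Theorem~1 with $c=2r+1$ and $b=\beta$, so an integral-operator reconstruction like yours is the right route. Your bias step is sound, including where saturation enters at $2r+1\le 2$, and so are the noise bound of order $\sigma^2\mathcal N(\lambda)/n$ and the cross term.

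\textbf{The gap is in step~(a) at the endpoint $r=0$.} The theorem includes this case, and it is the plain-membership case of Proposition~\ref{prop:worst_case}.
\begin{itemize}
\item The condition you verify in the balancing step, $n\lambda_n\gtrsim\lambda_n^{-1/\beta}\log n$, is exactly what a Hilbert--Schmidt Bernstein bound on $(T_k+\lambda)^{-1/2}(\hat T-T_k)(T_k+\lambda)^{-1/2}$ requires. The rank-one summands have HS norm at most $\kappa^2/\lambda$ and second moment at most $\kappa^2\mathcal N(\lambda)/\lambda$, so the deviation is of order $\sqrt{\kappa^2\mathcal N(\lambda)\log(1/\delta)/(n\lambda)}$.
\item At the chosen scale, $n\lambda_n^{1+1/\beta}=n^{2r\beta/((2r+1)\beta+1)}$, which equals $1$ when $r=0$. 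The HS deviation is then only $O(1)$. The bound $\|(T_k+\lambda)^{1/2}(\hat T+\lambda)^{-1}(T_k+\lambda)^{1/2}\|\le 2$ is therefore not secured with probability tending to one, and no $O_p$ statement follows.
\item Your remark that $\beta>1$ verifies the condition is misplaced. $\beta$ does not affect the sign of that exponent; only $r>0$ does. The role of $\beta>1$ is to make $\mathcal N(\lambda)\lesssim\lambda^{-1/\beta}$ finite.
\item The bookkeeping in (a) is also inconsistent. With $\mathcal N_\infty(\lambda)=\sup_{\mathbf x}\|(T_k+\lambda)^{-1/2}k(\cdot,\mathbf x)\|_{\mathcal H_k}^2$, the requirement should read $n\gtrsim\mathcal N_\infty(\lambda)\log(\cdot)$, without the extra factor $\lambda$.
\item You also cannot replace $\mathcal N_\infty(\lambda)$ by $\mathcal N(\lambda)\asymp\lambda^{-1/\beta}$. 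A bounded diagonal alone, without uniformly bounded eigenfunctions, gives only $\mathcal N_\infty(\lambda)\le\kappa^2/\lambda$.
\end{itemize}

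\textbf{The repair is to bound the operator norm rather than the HS norm.} Use the Bernstein inequality for sums of self-adjoint random operators with intrinsic dimension (Tropp; Minsker) on the same summands. Their operator norm is at most $\mathcal N_\infty(\lambda)\le\kappa^2/\lambda$, and their variance is dominated by $\mathcal N_\infty(\lambda)\,T_k(T_k+\lambda)^{-1}$. This gives $\|(T_k+\lambda)^{-1/2}(\hat T-T_k)(T_k+\lambda)^{-1/2}\|\le 1/2$ with probability at least $1-\delta$ once $n\lambda\gtrsim\kappa^2\log(\mathcal N(\lambda)/\delta)$. Because $n\lambda_n=n^{(2r\beta+1)/((2r+1)\beta+1)}$ grows polynomially for every $r\ge 0$, this holds for all large $n$ at each fixed $\delta$, which closes the $r=0$ case. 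For $r>0$ your HS route already suffices.

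Two smaller points:
\begin{itemize}
\item Your identity $\hat S^*Y=\hat T g^*+\hat S^*\boldsymbol\varepsilon$ presupposes $Y_t=g^*(\mathbf X_t)+\varepsilon_t$ with independent, conditionally centred noise, that is, $g=g^*$ as in Proposition~\ref{prop:worst_case}. State this hypothesis explicitly. Here $g^*$ in~\eqref{eq:gstar} is the ball-constrained approximation, not the $L^2$ projection that Caponnetto and De Vito target. Under a universal kernel the estimator is consistent for $g$, so the stated rate to $g^*$ cannot hold once $g\neq g^*$.
\item The intermediate bracket should be $\hat S^*Y-\hat T g_\lambda-\lambda g_\lambda$; as written, the $-\lambda g_\lambda+\lambda g_\lambda$ cancels. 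Your final expression is correct.
\end{itemize}
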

\noindent Theorem~\ref{thm:l2_rate} shows the in-support $L^2$ error is controlled by the pre-treatment length $n$ and the alignment order $r$, each entering the rate exponent. The theorem restates the upper bound of \citet[Theorem~1]{caponnetto2007optimal} in the parameterisation of Definition~\ref{def:source}. The restriction $r \le 1/2$ is the saturation point of ridge-type regularisation, beyond which the penalty bias dominates and the $L^2$ rate ceases to improve \citep{engl1996regularization}. The deterministic worst-case bound of Proposition~\ref{prop:source_worstcase} does not saturate: as a finite-sample bound rather than a rate, it reflects additional smoothness at every order $r \ge 0$.

The rate improves with both $n$ and $r$ and attains the minimax rate at the rate-optimal penalty \citep{caponnetto2007optimal}.  Two features of our estimation strategy place this optimum out of reach: the marginal-likelihood estimate $\hat\sigma^2$ gives an effective penalty $\lambda = \hat\sigma^2/n$ that decays as $O(n^{-1})$, faster than the rate-optimal scale, and the covariate-only bandwidth rule of \citet{kpop} does not adapt to the target's smoothness as the Gaussian-type regime~\eqref{eq:gauss_rate} requires, admitting only logarithmic contraction \citep{ghosal2017fundamentals}.  Both choices serve the design--analysis separation of Section~\ref{subsec:gp_estimation}, and the cost is small at the estimand that matters: the fill distance at a post-treatment point is fixed by the study design, so no in-support rate applies there, and the operative guarantee is the finite-sample bound of Proposition~\ref{prop:worst_case}.

When $k = k_{\text{Gaussian}}$, the eigenvalues fall outside the polynomial regime of Theorem~\ref{thm:l2_rate}, decaying exponentially in $j^{1/d}$, $\mu_j \leq C \exp(-c \, j^{1/d})$ \citep{belkin2018approximation}. The kernel then adapts to a finitely smooth target through its bandwidth rather than through the source order. If $g^* \in W_2^\beta(\mathcal{X})$ for $\beta > d/2$ and the bandwidth shrinks with $n$ at the appropriate rate, the posterior mean attains, up to an arbitrarily small polynomial loss,
\begin{equation}\label{eq:gauss_rate}
    \|\hat{g} - g^*\|_{L^2}^2 = O_p\!\left(n^{-\frac{2\beta}{2\beta + d}}\right),
\end{equation}
minimax-optimal over $W_2^\beta$ \citep{stone1980optimal}. In $d = 1$ with $\beta = 2$ this is $O_p(n^{-4/5})$.

\paragraph{Pointwise behaviour and the limits of extrapolation.} The $L^2$ rate~\eqref{eq:convergence_rate} governs average accuracy across the pre-treatment support; counterfactual prediction instead requires pointwise accuracy at $\mathbf{X}_{t^*}$, which may lie outside it. For interpolation within the support, \citet{kanagawa2025gaussian} show
\begin{equation}\label{eq:posterior_var_rate}
    \mathbb{V}[f(\mathbf{x}) \mid \mathbf{X}_{\text{pre}}, Y_{\text{pre}}] = O\bigl(h_n(\mathbf{x})^{2s - d}\bigr),
\end{equation}
where $s$ is the Sobolev-equivalent smoothness of the kernel and $h_n(\mathbf{x})$ is the local fill distance at $\mathbf{x}$, the radius of the largest gap in the training inputs within a fixed neighbourhood of $\mathbf{x}$. For equally spaced temporal data in $d = 1$, $h_n = O(n^{-1})$, so $\mathbb{V} = O(n^{-(2s-1)})$, contracting faster than any polynomial rate for the Gaussian kernel; with Proposition~\ref{prop:worst_case} the deterministic error at any pre-treatment point is $O(n^{-(s-1/2)})$. For extrapolation, $h_n(\mathbf{X}_{t^*})$ does not shrink with $n$; it is the gap between the pre-treatment frontier and $t^*$, fixed by the study design, so~\eqref{eq:posterior_var_rate} does not apply, and the non-stationary linear component instead makes $\mathbb{V}[f(\mathbf{X}_{t^*}) \mid \cdot]$ grow with temporal distance. For the combined kernel, $T_k$ does not share eigenfunctions with its components and none has a polynomially decaying spectrum (the Gaussian and MacKay periodic kernels are $C^\infty$, the linear kernel finite-rank), so the relevant in-support regime is the Gaussian-type one of~\eqref{eq:gauss_rate} rather than the polynomial-spectrum rate of Theorem~\ref{thm:l2_rate}.

\begin{figure}[htb]
    \centering
    \includegraphics[width=\textwidth]{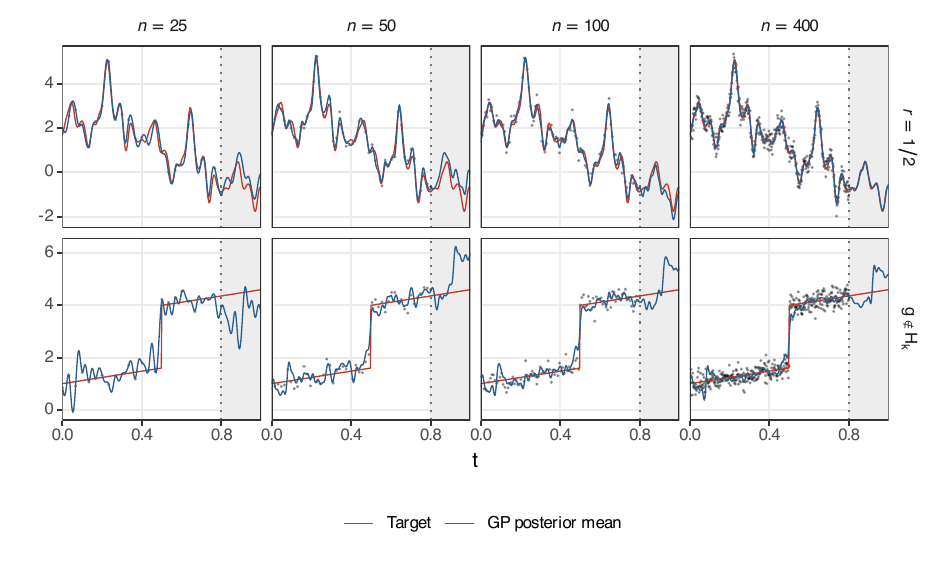}\\[3pt]
    \includegraphics[width=0.6\textwidth]{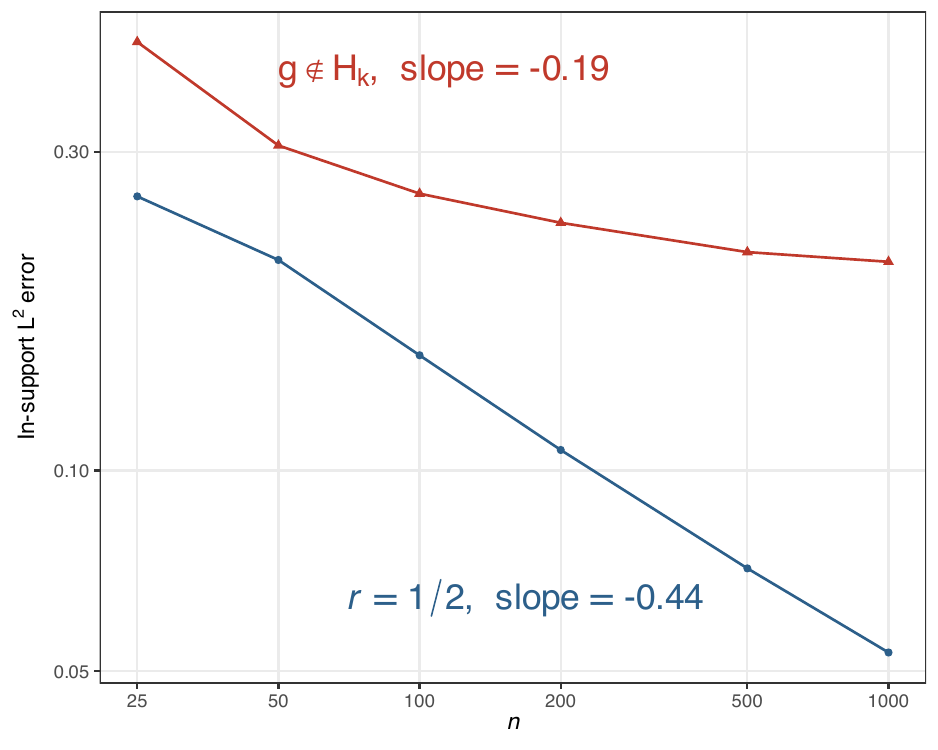}
    \caption{The source condition means faster rates.  In the upper grid, rows fix the target and columns fill in the pre-treatment window, $n = 25, 50, 100, 400$ on $[0, 0.8]$, with the shaded region marking extrapolation; tuning is pipeline-consistent, the bandwidth selected by variance maximisation from the design and $\sigma^2$ fixed at the true noise variance, so the penalty $\sigma^2/n$ falls as the window fills.  In the upper row the target satisfies the source condition at the saturation order $r = 1/2$, drawn from the empirical eigensystem as in Figure~\ref{fig:app_source_draws}; in the lower row it carries a level shift no finite budget contains, since every member of the combined RKHS is continuous.  The posterior mean closes on the target more quickly in the upper row.  The bottom panel reports the in-support $L^2$ error over thirty replications on log-log axes with fitted slopes, the aligned slope near the saturation rate of minus one half and the level-shift slope less than half as steep.  In the shaded region neither row contracts with $n$.}
    \label{fig:app_convergence}
\end{figure}
The posterior mean converges when the target lies in $\mathcal{H}_k$, and faster when it is better aligned; Figure~\ref{fig:app_convergence} shows both, and shows what is left when neither holds. In support, the aligned target is learned quickly and the (discontinuous) level-shift target slowly, the fitted slopes near $-1/2$ and less than half as steep. The slow one never arrives. It sits outside $\mathcal{H}_k$, so its remaining progress is a closer approximation bought at a larger budget, which the shrinking penalty $\sigma^2/n$ permits; hold the budget at $B$ and the gap stops closing and reappears as Term~II. Outside the support neither row improves as the window fills in, and what the band reports there is the extrapolation uncertainty of the model class itself, whose closed form Section~\ref{app:extrap_uncertainty} derives.

\subsection{Extrapolation Uncertainty of the Model Class}
\label{app:extrap_uncertainty}
This section proves the closed form of the extrapolation uncertainty $\omega_B$ of Definition~\ref{def:extrap_uncertainty} and places the model class on the scale of \citet{shen2025engression}.

\begin{corollary}[Closed form of the extrapolation uncertainty]
\label{cor:extrap_uncertainty}
Under the conditions of Corollary~\ref{cor:noiseless}, with $k_{t^*}^\perp$ the
orthogonal projection residual,
\begin{equation}\label{eq:omega_closed_form}
    \omega_B(\mathbf{X}_{t^*}) = 2B\,\|k_{t^*}^\perp\|_{\mathcal{H}_k} = 2B\sqrt{\mathbb{V}_{t^*}}.
\end{equation}
\end{corollary}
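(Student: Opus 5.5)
The plan is to prove the two inequalities $\omega_B(\mathbf{X}_{t^*}) \le 2B\|k_{t^*}^\perp\|_{\mathcal{H}_k}$ and $\omega_B(\mathbf{X}_{t^*}) \ge 2B\|k_{t^*}^\perp\|_{\mathcal{H}_k}$ separately. The second equality, $\|k_{t^*}^\perp\|_{\mathcal{H}_k} = \sqrt{\mathbb{V}_{t^*}}$, is read off Corollary~\ref{cor:noiseless}: with $\sigma^2 \to 0$, identity~\eqref{eq:geometric_decomp} loses its noise summand and the weights become $\mathbf{K}_{\mathrm{pre}}^{-1}\mathbf{k}_{t^*}$, so $k_{t^*}^\perp$ is the orthogonal projection residual onto $\mathcal{S}_{\mathrm{pre}}$.

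\textit{Upper bound.} Take $f, f' \in \mathcal{F}_B$ that agree at every pre-treatment input, and set $u = f - f'$. Agreement means $u(\mathbf{X}_t) = \langle u, k(\cdot,\mathbf{X}_t)\rangle_{\mathcal{H}_k} = 0$ for every $t < t_0$, so $u \perp \mathcal{S}_{\mathrm{pre}}$. By the triangle inequality, $\|u\|_{\mathcal{H}_k} \le 2B$. Decompose $k(\cdot,\mathbf{X}_{t^*}) = P_{\mathcal{S}_{\mathrm{pre}}}k(\cdot,\mathbf{X}_{t^*}) + k_{t^*}^\perp$. The reproducing property and the orthogonality of $u$ to $\mathcal{S}_{\mathrm{pre}}$ then give
\[
f(\mathbf{X}_{t^*}) - f'(\mathbf{X}_{t^*}) = \langle u, k(\cdot,\mathbf{X}_{t^*})\rangle_{\mathcal{H}_k} = \langle u, k_{t^*}^\perp\rangle_{\mathcal{H}_k}.
\]
Cauchy--Schwarz bounds the absolute value by $2B\|k_{t^*}^\perp\|_{\mathcal{H}_k}$. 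Taking the supremum gives the upper bound.

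\textit{Lower bound.} If $k_{t^*}^\perp = 0$ there is nothing to prove, since $\omega_B \ge 0$ trivially. Otherwise, use the pair from Proposition~\ref{prop:minimax} scaled by $B$:
\[
f = B\,k_{t^*}^\perp/\|k_{t^*}^\perp\|_{\mathcal{H}_k}, \qquad f' = -f.
\]
Both have norm exactly $B$, so both lie in $\mathcal{F}_B$. Both vanish at every pre-treatment input because $k_{t^*}^\perp \perp \mathcal{S}_{\mathrm{pre}}$, so they agree there. At the test input,
\[
f(\mathbf{X}_{t^*}) = B\langle k_{t^*}^\perp, k(\cdot,\mathbf{X}_{t^*})\rangle_{\mathcal{H}_k}/\|k_{t^*}^\perp\|_{\mathcal{H}_k} = B\|k_{t^*}^\perp\|_{\mathcal{H}_k},
\]
where the last step uses $\langle k_{t^*}^\perp, P_{\mathcal{S}_{\mathrm{pre}}}k(\cdot,\mathbf{X}_{t^*})\rangle_{\mathcal{H}_k} = 0$. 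The disagreement is therefore $2B\|k_{t^*}^\perp\|_{\mathcal{H}_k}$, the supremum is attained, and the two bounds coincide.

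The only delicate point is the identification of agreement on the finitely many pre-treatment inputs with orthogonality to $\mathcal{S}_{\mathrm{pre}}$, together with the requirement that $k_{t^*}^\perp$ be the \emph{orthogonal} residual rather than the ridge residual. This is why the noiseless hypotheses of Corollary~\ref{cor:noiseless} are needed. Invertibility of $\mathbf{K}_{\mathrm{pre}}$ makes the projection coefficients $\mathbf{K}_{\mathrm{pre}}^{-1}\mathbf{k}_{t^*}$ well defined, and I would verify directly that $\langle k_{t^*}^\perp, k(\cdot,\mathbf{X}_t)\rangle_{\mathcal{H}_k} = k_{t^*,t} - (\mathbf{K}_{\mathrm{pre}}\mathbf{K}_{\mathrm{pre}}^{-1}\mathbf{k}_{t^*})_t = 0$ for each $t < t_0$. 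With the ridge residual this orthogonality fails, and the lower-bound pair would no longer agree on the pre-treatment inputs.
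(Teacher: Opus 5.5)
Your proposal is correct and follows essentially the same route as the paper: set $u = f - f'$, use the triangle inequality and the reproducing property to get $\|u\|_{\mathcal{H}_k} \le 2B$ and $u \perp \mathcal{S}_{\mathrm{pre}}$, reduce $u(\mathbf{X}_{t^*})$ to $\langle u, k_{t^*}^\perp\rangle_{\mathcal{H}_k}$, and apply Cauchy--Schwarz, with equality attained by the pair $\pm B\,k_{t^*}^\perp/\|k_{t^*}^\perp\|_{\mathcal{H}_k}$ of Proposition~\ref{prop:minimax}. The paper phrases the lower bound as the observation that every $u \perp \mathcal{S}_{\mathrm{pre}}$ with $\|u\|_{\mathcal{H}_k} \le 2B$ comes from the admissible pair $(u/2,-u/2)$, which amounts to the same thing as exhibiting your extremal pair directly.
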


\begin{proof}
Write $u = f - f'$ for an admissible pair. Then $\|u\|_{\mathcal{H}_k} \le 2B$ by the triangle inequality, and $u$ vanishes at every pre-treatment input, so $u \perp \mathcal{S}_{\mathrm{pre}}$ by the reproducing property. Conversely, every $u \perp \mathcal{S}_{\mathrm{pre}}$ with $\|u\|_{\mathcal{H}_k} \le 2B$ arises from the admissible pair $(u/2, -u/2)$. Hence
\begin{equation*}
    \omega_B(\mathbf{X}_{t^*})
    = \sup\bigl\{ |\langle u, k_{t^*}^\perp\rangle_{\mathcal{H}_k}| : \|u\|_{\mathcal{H}_k} \le 2B,\; u \perp \mathcal{S}_{\mathrm{pre}} \bigr\}
    = 2B\,\|k_{t^*}^\perp\|_{\mathcal{H}_k},
\end{equation*}
where the first equality uses $u(\mathbf{X}_{t^*}) = \langle u, k(\cdot,\mathbf{X}_{t^*})\rangle_{\mathcal{H}_k} = \langle u, k_{t^*}^\perp\rangle_{\mathcal{H}_k}$ for $u \perp \mathcal{S}_{\mathrm{pre}}$, and the second is Cauchy--Schwarz with equality at $u \propto k_{t^*}^\perp$, the pair $B g_\pm$ of Proposition~\ref{prop:minimax}.
\end{proof}

\noindent The corollary ties the objects of this appendix together. The extrapolation uncertainty is twice the minimax risk of Proposition~\ref{prop:minimax}. Conditioning on an observed series shrinks it: by the same decomposition, if $\rho$ is the norm of the minimum-norm interpolant of $Y_{\mathrm{pre}}$, the data-conditional set of Section~\ref{app:sharpening} has diameter $2\sqrt{B^2 - \rho^2}\, \|k_{t^*}^\perp\|_{\mathcal{H}_k}$ at $\mathbf{X}_{t^*}$, reaching $\omega_B(\mathbf{X}_{t^*})$ only at $\rho = 0$, and Figure~\ref{fig:app_sim_attainment} draws the attaining pairs. The coupling argument of the paragraph \textit{Noise does not restore learnability} extends the necessity direction to noisy training, and is the fixed-design instance of the necessity clause of \citet[Theorem~1(ii)]{shen2025engression} for post-additive noise models. Since the target is the CEF, the operative notion is their mean extrapolability (their Definition~2), which under post-additive noise with a fixed error law coincides with the functional notion used here.

On their scale the class sits between the extremes. Linear functions carry zero uncertainty (their Example~1), $L$-Lipschitz classes grow as $2L\delta$ in the distance $\delta$ from the support (Example~2), and unrestricted monotone classes are already at infinity (Example~3). $\mathcal{F}_B$ has the finite closed form~\eqref{eq:omega_closed_form}: under a stationary kernel $\|k_{t^*}^\perp\|_{\mathcal{H}_k}$ never exceeds $\sqrt{k(\mathbf{X}_{t^*},\mathbf{X}_{t^*})}$, so the profile is bounded and flattens below the Lipschitz rate, and the linear component lifts that ceiling, the horizon profile inheriting the component structure of Section~\ref{app:kernel_sums}. Their extrapolability gain, the difference between the uncertainties of two specifications, transfers by evaluating each at the budget the same trajectory requires, both terms computable from~\eqref{eq:omega_closed_form}. On the \textit{Heller} series of Section~\ref{sec:application} the trend costs $B = 0.96$ under the combined kernel against $B = 3.52$ under a Gaussian-only kernel, a budget gap that the norm growth of Section~\ref{app:spectral_rep} widens with every added pre-treatment year.

\clearpage

\section{Simulation Details}\label{app:sim_details}

This section records the data-generating processes for the calibrated simulation of Section~\ref{sec:simulation} and connects each regime to the error decomposition of Section~\ref{sec:theory}.

\subsection{Common Structure}

We generate monthly outcomes $y_t = g(t) + \varepsilon_t$ for $t = 1, \ldots, n_{\mathrm{pre}} + 12$, with $\varepsilon_t \sim \mathcal{N}(0,\sigma^2)$, treatment beginning at $t_0 = n_{\mathrm{pre}} + 1$, a fixed twelve-month post-treatment horizon, and a constant effect $\tau$ added over that horizon. We write $z_t$ for the time index standardised on the pre-treatment window, and the annual period is twelve. All three regimes share monthly seasonality and a short pre-treatment data to mirror the application of Section~\ref{sec:application}, and all include a small random Gaussian bump
\[
  r(t) = a\exp\!\bigl\{-(t-\mu)^2/(2w^2)\bigr\}, \qquad a \sim U(-0.3,0.3),\; \mu \sim U(8, n_{\mathrm{pre}}-4),\; w \sim U(4,8),
\]
that displaces each trend from a clean closed form. The bump introduces transient local deviations of the kind an applied series would carry, so the estimator cannot exploit exact knowledge of the parametric shape. Throughout, $\beta \sim U(0.8,1.3)$, $A \sim U(0.7,1)$, $\phi \sim U(0,2\pi)$, $\sigma^2 = 0.05$ and $\tau = 0.6$. The noise level gives a signal-to-noise ratio typical of monthly administrative data after unit-variance standardisation, and $\tau = 0.6$ is moderate relative to the pre-treatment standard deviation, large enough to detect but not so large that every method finds it.

\subsection{Three Regimes}

The regimes place the target $g$ at increasing distance from the estimator, from a kernel-based one to shapes it can only approximate, so the simulation traces how coverage degrades as the four error terms of Section~\ref{sec:theory} come into play.

\paragraph{Smooth.} A kernel-based random function:
$$g(t) = 1.5\,\bar f(t), \quad f(t) = \sum_{j=1}^{10} \omega_j\, k(z_t, z_j), \quad \omega_j \sim \mathcal{N}(0,1),\ \ z_j \sim U(z_1, z_{n_{\mathrm{pre}}+12}),$$ with $\bar f$ the pre-treatment standardisation of $f$ and $k$ the Gaussian $+$ periodic $+$ linear kernel of Section~\ref{subsec:gp_estimation} at a fixed length-scale. Because $g$ is a finite linear combination of kernel evaluations, it lies in $\mathcal{H}_k$ with a finite norm, so the approximation error (Term~II) vanishes. The target also satisfies the source condition (Appendix~\ref{app:source_condition_formal}) at a positive order, since the random weights $\omega_j$ do not concentrate on the high-frequency eigenfunctions of the kernel integral operator, and draws from a GP prior satisfy a source condition of order $r = 1/2$ in expectation. The regime therefore isolates Term~III and Term~IV: the only errors are the learning gap the finite pre-treatment data leaves and the propagated noise. The estimator still selects its own length-scale hyperparameter and includes the calendar month as a covariate, so the target is not exactly in its span, and coverage depends on whether the worst-case bound of Proposition~\ref{prop:worst_case} tracks the realised error at realistic sample sizes.

\paragraph{Nonlinear trend.} A concave, decelerating trend with milder seasonality:
\[
  g(t) = \bar m(t) + 0.7\,A\,\sin(2\pi t/12 + \phi) + r(t), \qquad m(t) = 3\beta\bigl\{1 - \exp\!\bigl(-(t-2)_{+}/0.22\,n_{\mathrm{pre}}\bigr)\bigr\},
\]
with $\bar m$ the pre-treatment standardisation of $m$, mimicking a bounded growth path whose increments shrink as the series matures. Many social and economic series exhibit this pattern, for instance adoption curves for new technologies or the approach to a regulatory ceiling. The exponential saturation is smooth and well approximated by the combined kernel on the observed window, but it does not belong to $\mathcal{H}_{\mathrm{lin}}$, so the Aronszajn decomposition~\eqref{eq:rkhs_sum_norm} charges the curvature to the Gaussian component at a cost that grows with the window length. Term~II is therefore nonzero at any fixed budget, and the DGP tests whether the interval remains calibrated when the target sits near but not inside the unit ball. The reduced seasonal amplitude ($0.7\,A$ versus $1.3\,\beta$ in the third regime) keeps the periodic component from dominating, so the extrapolation challenge falls mainly on the trend.

\paragraph{Saturating trend.} A strong annual cycle above a levelling trend:
\[
  g(t) = 1.3\,\beta\,\sin(2\pi t/12 + \phi) + 0.9\,\beta\,\tanh(0.9\, z_t) + r(t),
\]
where seasonality dominates and the trend flattens at both ends. This is the most demanding regime for extrapolation. The $\tanh$ sigmoid is again outside $\mathcal{H}_{\mathrm{lin}}$, and the large seasonal amplitude forces the periodic component to carry substantial norm, so both legs of the Aronszajn bound~\eqref{eq:combined_bound_main} are loaded. The regime mirrors series in which a strong cyclical pattern sits above a level that has largely stabilised, such as seasonal background-check volumes in states with mature handgun markets. With trend and seasonality both large, the budget $B$ required to contain $g$ in the RKHS ball is the highest of the three regimes, and the test is whether the GP interval widens enough to remain calibrated even when the counterfactual is expensive in RKHS norm.

\subsection{Design Choices}

The three DGPs are designed to be ordered by the distance from the estimator's function class, and Figure~\ref{fig:app_sim_samplesize} suggests that the theoretical ordering shows up empirically. In the \textit{Smooth} regime, where Terms~I and~II are absent, coverage sits near or above nominal at every pre-treatment length, the outcome the worst-case bound of Proposition~\ref{prop:worst_case} predicts. In the \textit{Nonlinear trend} and \textit{Saturating trend} regimes, Term~II is positive and the budget is larger, so the GP runs conservative, widening its interval beyond what the realised error requires. The conservatism is a feature rather than a failure: the interval reports the worst case over the RKHS ball rather than the realised target, and a target that sits inside the ball but near its boundary should produce coverage above, not at, the nominal rate.

The estimator in each regime follows the same pipeline as the application, with the length-scale selected by the variance-maximisation rule, and $\sigma^2$ then estimated by marginal likelihood. No regime gives the estimator access to the true hyperparameters, so the simulation tests the full estimation pipeline rather than an oracle version.

The GP uses the Gaussian$+$Periodic$+$Linear kernel on the monthly index with the calendar month entered as a categorical covariate and $95\%$ intervals. The three comparators are (i)~a segmented regression carrying a linear pre-trend and an annual harmonic, (ii)~\texttt{C-ARIMA}, a counterfactual ARIMA model \citep{menchetti2023combining}, and (iii)~\texttt{CausalImpact}, a Bayesian structural time-series model \citep{brodersen2015inferring}. The harmonic rather than eleven month indicators keeps the segmented fit estimable at the shortest record, and \texttt{C-ARIMA} and \texttt{CausalImpact} receive the annual period the GP is given. 

\section{Additional Empirical Examples}\label{app:additional_examples}

Beyond the \textit{Heller} application of the main text, we apply the method to two further settings. The All-Women Police Stations example adds a daily series with strong weekly seasonality and a spillover-inclusive estimand, and the Ukraine example pushes the design to a severe case, a military invasion whose aftermath is hard to separate from the initial shock, where the placebo check flags a pre-trend that the analysis then carries into bounded conclusions.

\subsection{All-Women Police Stations and Gender-based Crime}\label{app:awps}

We reanalyse \cite{jassal2020gender}, who examines whether AWPS improve women's access to justice in India. On August 28, 2015, 20 AWPS opened simultaneously across Haryana state, a universal treatment affecting all 285 police stations. Using daily gendered crime registration rates per 100{,}000 population from January 2015 to August 2017 (8 months pre-treatment, 2 years post-treatment; Figure~\ref{fig:rawdata_awps}), the original segmented regression analysis found a 27--49\% reduction at standard stations without a system-wide increase, consistent with bureaucratic deflection as standard stations redirect women to AWPS.

\begin{figure}[hbt!]
    \centering
    \includegraphics[width=0.9\textwidth]{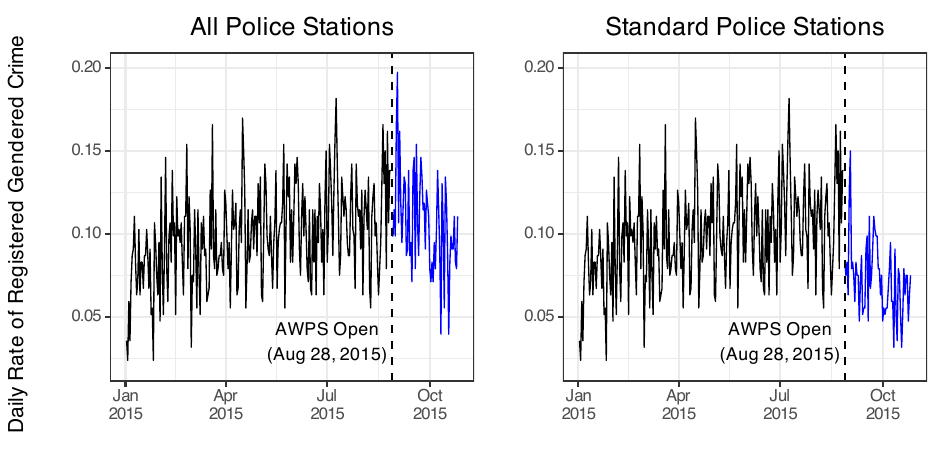}
    \caption{Raw Outcome Data for the AWPS Application. \textit{Note: } Daily gendered crime registration rates per 100{,}000 population. The vertical dashed line marks the simultaneous opening of the All-Women Police Stations (August 28, 2015).}
    \label{fig:rawdata_awps}
\end{figure}

We employ the Gaussian$+$Periodic$+$Linear kernel with 7-day periodicity and day-of-week covariates. Following the original study, we examine effects at both standard stations and system-wide, restricting analysis to two months post-treatment to limit the scope for concurrent events that would threaten Assumption~\ref{assum:sufficiency}. The abrupt opening date supports no anticipation for stations; Appendix~\ref{app:assumptions} takes up the residual concern of advance media coverage.

\begin{figure}[hbt!]
    \centering
    \includegraphics[width=0.9\textwidth]{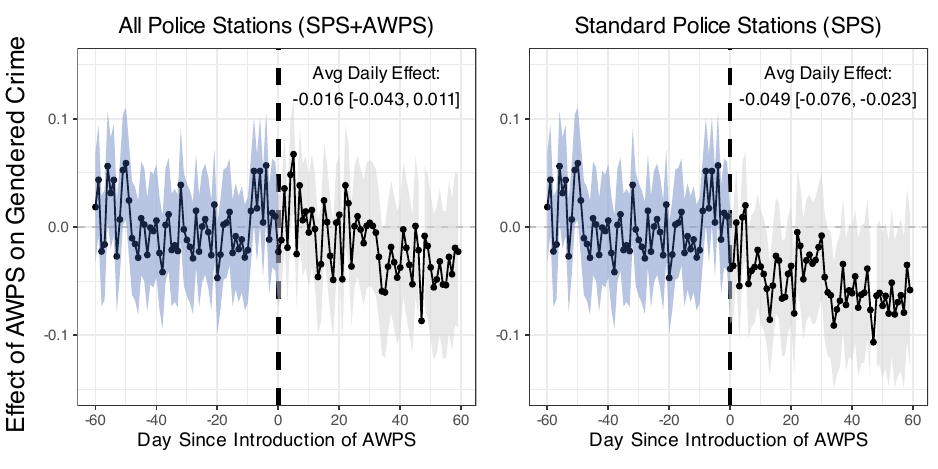}
    \caption{Impact of All-Women Police Stations on Gendered Crime Registration per 100{,}000 \citep{jassal2020gender}. \textit{Note:} Left panel: daily treatment effects on gendered crime registrations across all police stations. Right panel: effects at Standard Police Stations (SPS) only. Pre-AWPS placebo estimates shown in blue. The introduction of AWPS did not significantly affect system-wide registration rates (left) but led to a substantial reduction at SPS (right).}
    \label{fig:jassal}
\end{figure}

Figure~\ref{fig:jassal} shows contrasting patterns. System-wide registration rates show no significant change (left panel), while standard stations exhibit substantial reductions (right panel). The average two-month effect at standard stations is $-0.049$ registrations per 100{,}000 population, with a $95\%$ interval of $[-0.076,\, -0.023]$, a decrease of at least $23\%$ from baseline. The null system-wide total is more consistent with redirection to AWPS than with net deterrence, which would lower the system-wide count. The reduction at standard stations intensifies over time, a dynamic that several gradual processes could produce, including increasing redirection as the new stations become established. Segmented regression, summarising the post-treatment period as a single level shift, cannot reveal this pattern.

Under the matched one-day-ahead protocol the GP covers close to the nominal rate across the 60 pre-treatment points, while the segmented alternatives fall far short (Appendix~\ref{app:empirical_examples}).

\subsection{The Russian Invasion of Ukraine}\label{app:ukraine}

We reanalyse \citet{damann2024women}, who study how the Russian invasion of Ukraine on 24 February 2022 affected gendered patterns in politicians' public engagement. The original study fits linear ITS models with Newey-West standard errors to the Facebook posting of 469 Ukrainian politicians (81 women, 388 men) from November 2021 to June 2022 (Figure~\ref{fig:ukraine_raw}). We restrict to a one-month post-treatment window to limit the scope for components of $\mathbf{U}$ taking values they never took before $t_0$, and fit a separate GP per politician on pre-invasion data.

\begin{figure}[ht!]
    \centering
    \includegraphics[width=0.9\textwidth]{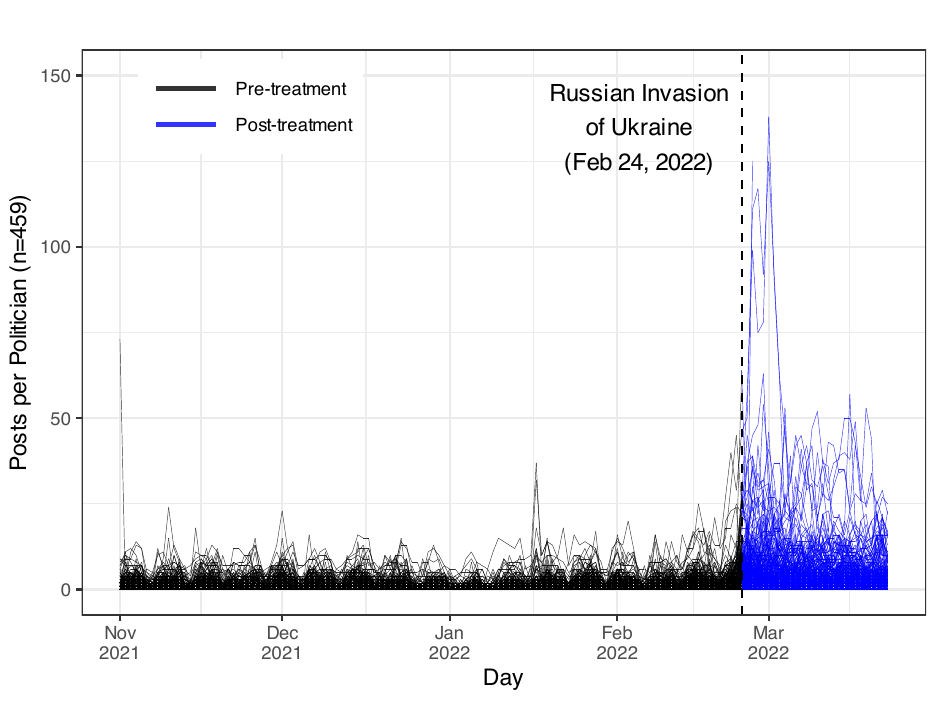}
    \caption{Daily Facebook Posts by Ukrainian Politicians. \textit{Note: } Daily Facebook posts of politicians from November 2021 to March 2022. Individual politician trajectories shown, pre-treatment (black) and post-treatment (blue). The vertical dashed line marks the Russian invasion on February 24, 2022. One politician is dropped for improved visualisation.}
    \label{fig:ukraine_raw}
\end{figure}

\begin{figure}[ht!]
    \centering
    \includegraphics[width=0.9\textwidth]{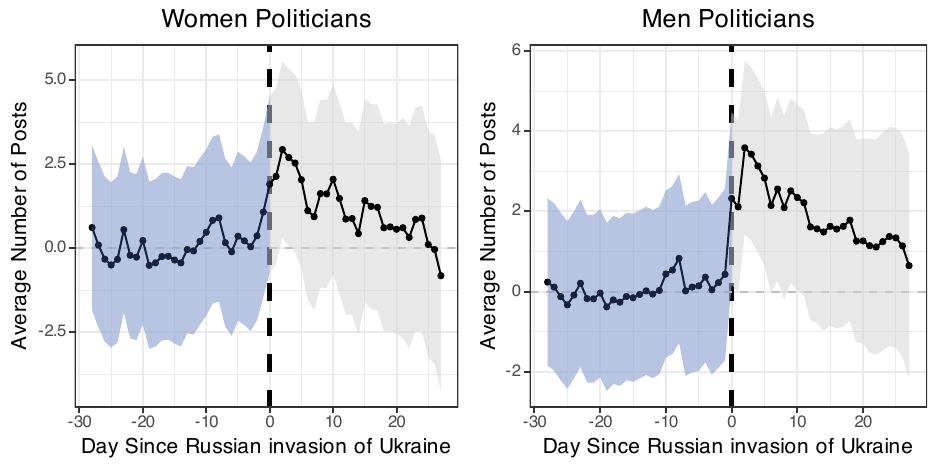}
    \caption{Daily Average Treatment Effects by Gender. \textit{Note: } Estimated daily treatment effects with 95\% prediction intervals. Left panel shows effects for women politicians; right panel for men. A one-month pre-treatment period serves as the placebo (blue). While both groups show increased engagement, men exhibit larger effects.}
    \label{fig:ukraine_effect}
\end{figure}

Our estimates align with the original findings: both groups increased posting after the invasion, men more so (Figure~\ref{fig:ukraine_effect}). Here the placebo check does the falsification work of Section~\ref{subsec:placebo_check}, surfacing a slight upward pre-trend for women that signals a possible violation of Mean Sufficiency. The pre-trend biases the women's counterfactual upward and so understates their own effect, which is therefore a lower bound. The gender gap is a separate matter: understating the women's effect widens the estimated men-minus-women difference, so the ``men more so'' comparison is best read as an upper bound rather than a conservative one.

\subsection{Identifying Assumption Plausibility in the Running Examples}\label{app:assumptions}
Section~\ref{sec:identification} of the main text states the identifying assumptions in general form. Here we read each one across our three applications, asking what each assumption requires, what would break it, and why we judge it credible (or not). The three settings stress the assumptions differently, and reading them together shows where the design is most and least comfortable. This is the substantive work the placebo checks (Section~\ref{subsec:placebo_check}) cannot do on their own.

\paragraph{Consistency.} This requires that each unit's treatment is a single well-defined intervention switching on at a known date. All three meet it, since the \textit{Heller} ruling, the AWPS openings, and the invasion each take force on one administratively or historically fixed date. The one subtlety is the AWPS standard stations, which are not treated directly but through the altered institutional environment around them, so their estimand is the total effect on a standard station under the new regime, inclusive of spillovers such as bureaucratic deflection. Consistency would break if ``treatment'' bundled several coincident changes, leaving the effect unattributable to a single intervention.

\paragraph{No Anticipation.} This requires that pre-treatment outcomes are undisturbed by the coming treatment, so $Y_{it} = Y_{it}(0)$ for $t < t_0$ and $g_i$ is learned from genuinely untreated data. Because the counterfactual is the continuation of the pre-treatment regime (Section~\ref{sec:identification}), expectations held at a stable level do not violate the assumption, and what breaks it is behaviour that shifts as the event approaches. \citet{bertoli2026analyzing} show that such anticipatory drift can threaten even seemingly unexpected events, when background public awareness or structural signals alter behaviour during the lead-up period, and that its severity depends on the counterfactual scenario the researcher chooses. At post-treatment periods the covariate equality $\mathbf{X}_{it^*}(1) = \mathbf{X}_{it^*}(0)$ is additionally required; it holds by construction when $\mathbf{X}_{it}$ consists of time and calendar indicators, and substantive covariates require the researcher to justify that they are not downstream of treatment.

For \textit{Heller}, the live case traces back to the March 2008 oral arguments, three months ahead of the decision, which could have moved purchasing in anticipation. We treat the ruling date itself as unanticipated because the 5--4 outcome was not a foregone conclusion, and the strictness of D.C.'s handgun ban structurally prevented preemptive legal purchasing, so the pre-treatment series remains uncontaminated by anticipatory behaviour. An analyst worried about partial anticipation elsewhere can set $t_0$ at the oral-argument date as a robustness check. For AWPS the abrupt opening gives station personnel no reason to redirect women beforehand, and the residual concern, advance media coverage, is mitigated because the eight-month pre-treatment window largely predates public awareness. For the invasion the preceding military build-up could have moved posting in anticipation, which is pre-treatment drift that the placebo checks probe directly and that we return to under Mean Sufficiency.

\paragraph{Mean Sufficiency (Assumption~\ref{assum:sufficiency}).} This is the assumption the design leans on most heavily. It requires that, conditional on observed covariates, no other time-varying factor shifts the mean untreated outcome, so $g_i$ carries across $t_0$. It breaks when a component of $\mathbf{U}$ takes values after $t_0$ that it never took before (Term~I, Section~\ref{sec:theory}). Measurement error that is mean-zero given $(\mathbf{X}_{it}, \mathbf{U}_{it})$ joins $\varepsilon_{it}$ and widens the intervals without biasing them; a stable recording convention joins the regime $g_i$ learns and a constant miscount differences out of $\tau_{it^*}$. A change in the measurement process near $t_0$ is a component of $\mathbf{U}_{it}$ taking values it never took before, the same failure. For \textit{Heller} the leading threat is political anxiety following Obama's November 2008 victory, removed by truncating the window at four months. For AWPS the parallels are concurrent shifts in recording practice or seasonal dynamics beyond the periodic kernel, whose scope the two-month window keeps narrow. The invasion is the hardest case; we restrict to one month and carry the detected women's pre-trend into bounded conclusions (Section~\ref{app:ukraine}). If treatment changed how observables predict the outcome, that is a treatment effect operating through behaviour, not a violation of an assumption restricting untreated potential outcomes.

\section{Additional Results for the Empirical Applications} 
\label{app:empirical_examples}

This section reports the placebo comparison against segmented regression for the \textit{Heller} and AWPS applications, and adds three further diagnostics for \textit{Heller}: cumulative effects, residual analysis, and a placebo outcome on long gun purchases.

\begin{figure}[hbt!]
    \centering
    \includegraphics[width=0.9\textwidth]{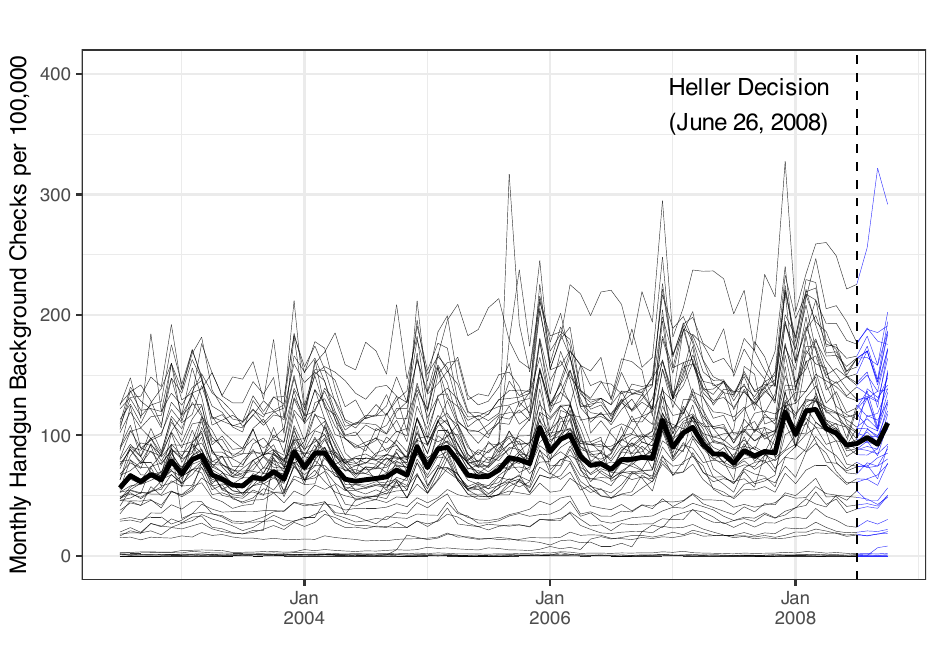}
    \caption{Raw Outcome Data for the \textit{Heller} Application. \textit{Note: } Monthly handgun background checks per 100{,}000 population from the FBI's National Instant Criminal Background Check System (NICS), for each state and the nationwide average. The vertical dashed line marks the \textit{Heller} decision (June 26, 2008).}
    \label{fig:rawdata_heller}
\end{figure}

\subsection{Out-of-Sample Predictive Performance with Placebo Checks: Comparison to Segmented Regression}
Segmented regression is the workhorse ITS model in applied works \citep{bernal2017interrupted}. To compare it with the GP on equal terms, we hold both to the one-step-ahead placebo of Section~\ref{subsec:placebo_check} on the aggregated national mean series, fitting on periods $1$ through $t$ and recording whether the $95\%$ interval covers $Y_{t+1}$. The segmented specification is
\begin{equation*}
    Y_{t} = \beta_0 + \beta_1 I\{t \geq t_0^m\} + \beta_2 (t - t_0^m) + \beta_3 I\{t \geq t_0^m\}(t - t_0^m) + f(t) + \varepsilon_{t},
\end{equation*}
where $t_0^m$ is a placebo treatment time, $I$ is the indicator taking the value one when $t \geq t_0^m$, and $f(t)$ carries polynomial terms up to cubic order together with month dummies. The coefficient $\beta_1$ is the placebo treatment effect, and we use HC1 robust standard errors.

\begin{figure}[hbt!]
    \centering
    \includegraphics[width=0.9\textwidth]{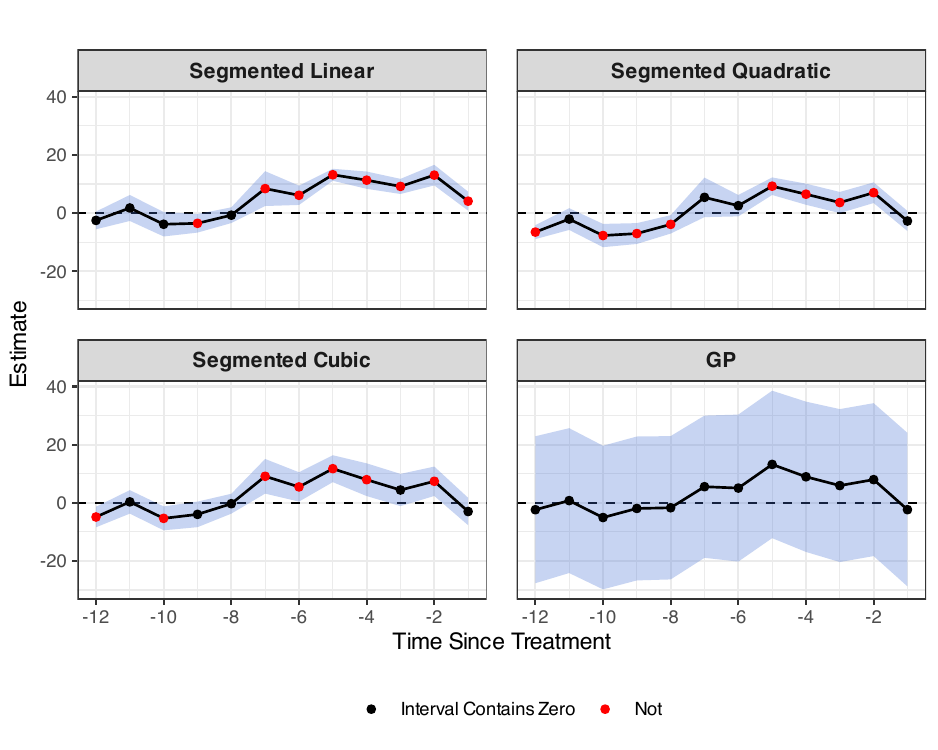}
    \caption{Placebo Check Performance: GP and Segmented Regression for \textit{Heller} example. \textit{Note: } The one-month-ahead placebo check results for 1 year, comparing GP against segmented regression with linear, quadratic, and cubic time trends. Points indicate placebo treatment effects with 95\% prediction intervals, coloured by whether the interval contains zero (black) or not (red).}
    \label{fig:placebo_comparison_heller}
\end{figure}

Figures~\ref{fig:placebo_comparison_heller} and~\ref{fig:placebo_comparison_jassal} report the comparison. For \textit{Heller} the GP intervals cover at or above the nominal $95\%$, while the segmented regressions cover $33$ to $42\%$ of placebo periods, rising slightly as the trend order increases; equivalently, they exclude zero at $58$ to $67\%$ of those periods, the figure quoted in Section~\ref{sec:application}. The AWPS application of Section~\ref{app:awps} shows the same pattern on a daily series, with the GP covering $90\%$ of placebo points against $35$ to $38\%$ for the segmented regressions.

\begin{figure}[hbt!]
    \centering
    \includegraphics[width=0.9\textwidth]{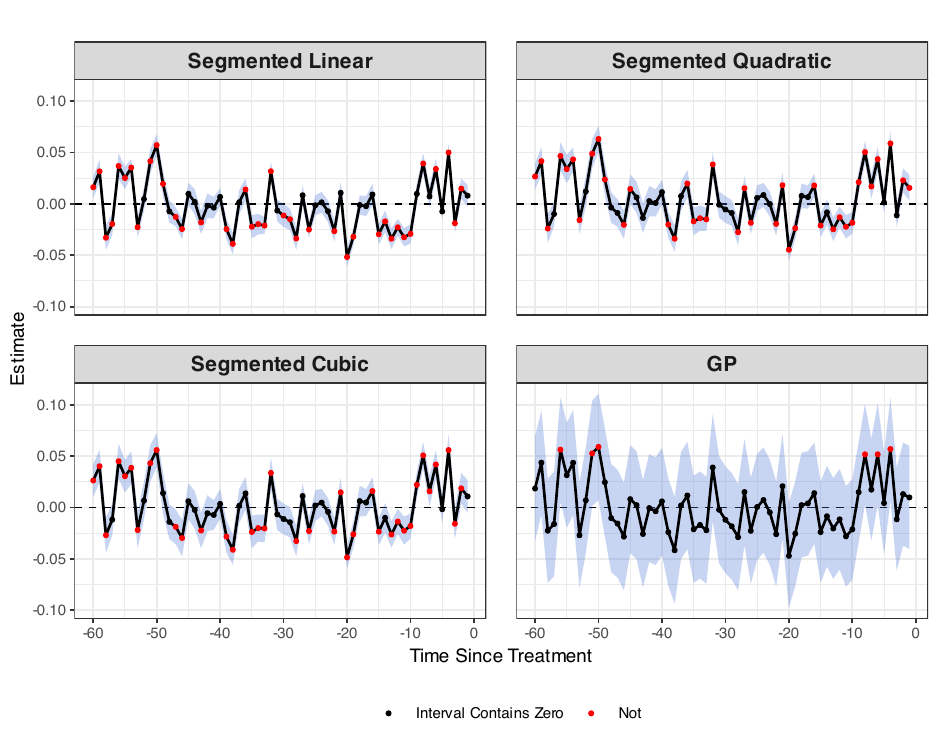}
    \caption{Placebo Check Performance: GP and Segmented Regression for \cite{jassal2020gender} example. \textit{Note: } The one-day-ahead placebo check results for 3 months, comparing GP against segmented regression with linear, quadratic, and cubic time trends. Points indicate placebo treatment effects with 95\% prediction intervals, coloured by whether the interval contains zero (black) or not (red).}
    \label{fig:placebo_comparison_jassal}
\end{figure}
The gap follows from the trend form rather than from any single tuning choice. A polynomial trend fixes the shape of the counterfactual in advance, so its interval reflects only the uncertainty in the fitted coefficients and not the uncertainty about whether that shape extrapolates. The raw series move in ways linear, quadratic, and cubic trends do not track (Figure~\ref{fig:rawdata_heller}), so out-of-sample the realised value often falls outside an interval built on the wrong shape, and a researcher reading these intervals would mistake ordinary fluctuation for a treatment effect. The GP's Gaussian, periodic, and linear kernel spans a wider set of admissible shapes and widens its interval as the horizon lengthens (Section~\ref{sec:theory}), which is what holds its placebo coverage near nominal.

\clearpage

\subsection{Cumulative National Average Effects}
The per-period national average effects of Section~\ref{sec:application} can be accumulated to summarise how the response builds over the post-treatment window. Figure~\ref{fig:heller_cumulative} plots the cumulative national average effect of the \textit{Heller} ruling on handgun background checks, summing the per-period effects and their uncertainty across months. The band widens as the horizon lengthens, since each additional month extrapolates the counterfactual one step further beyond the pre-treatment series, so the growth reflects accumulating extrapolation uncertainty rather than instability in the estimate.

\begin{figure}[hbt!]
    \centering
    \includegraphics[width=0.8\textwidth]{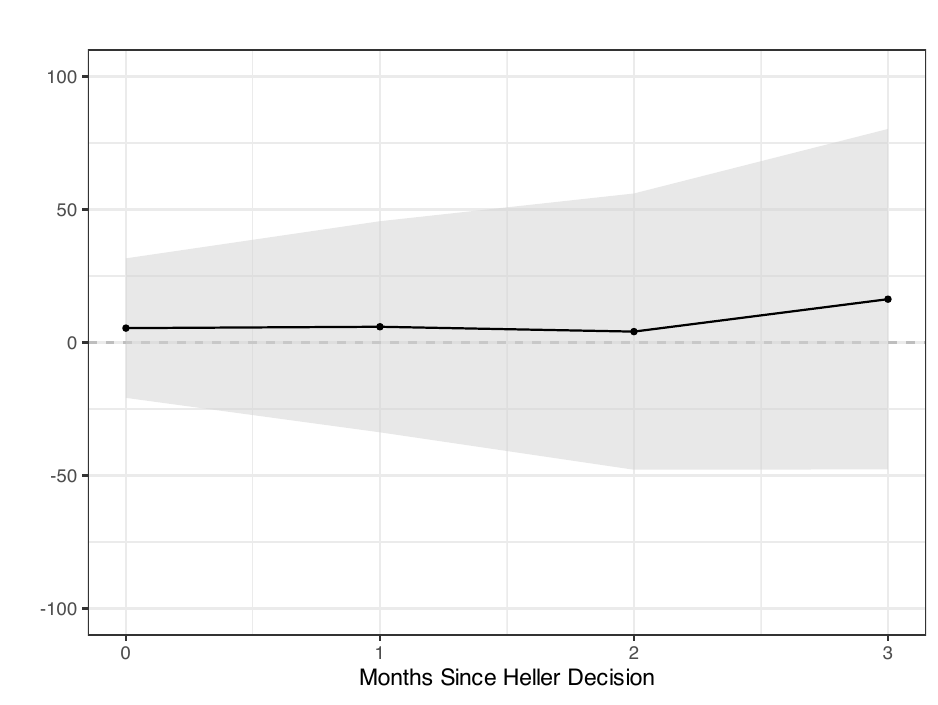}
    \caption{Cumulative National Average Effects. \textit{Note: } The plot shows the estimated national cumulative effects of the \textit{Heller} ruling on handgun background checks, discussed in Section~\ref{sec:application} of the main text.}
    \label{fig:heller_cumulative}
\end{figure}

\clearpage

\subsection{Residual Analysis}
A natural concern is that the estimated effects are an artefact of unusually noisy data around the 2008 treatment period rather than a response to the ruling. To check whether the pre-treatment fit deteriorates as the cutoff approaches, we fit a single GP to the entire series for each state and examine the residuals \citep{felton2023}. This diagnostic differs from the placebo checks, which assess out-of-sample prediction through iterative one-step-ahead forecasting; here we ask whether in-sample fit is stable across the study period.

Figure~\ref{fig:residuals_heller} shows the result. The residual variance stays stable through the pre-treatment period, with no sign of growing prediction error as June 2008 approaches. The one conspicuous exception, Louisiana in September 2005, reflects Hurricane Katrina, which made landfall in late August, rather than model failure; we read it as a substantive shock of the kind the placebo discussion flags rather than as a sign of instability. This stability supports reading the post-treatment deviations as treatment rather than noise.

\begin{figure}[hbt!]
    \centering
    \includegraphics[width=0.8\textwidth]{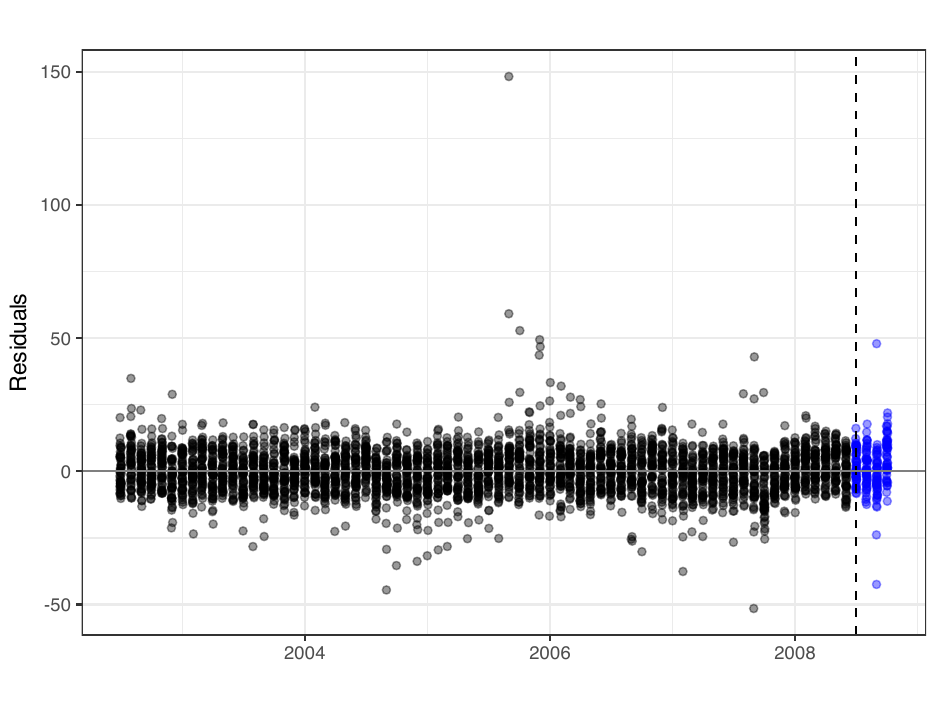}
    \caption{Residuals from Full-Period GP Fit. \textit{Note: } Residuals from a single GP model fitted to the entire time series for each state. Each point represents the ``residuals'', i.e., difference between observed and fitted values. The vertical dashed line indicates the treatment onset.}
    \label{fig:residuals_heller}
\end{figure}

\clearpage

\subsection{A Pseudo-Placebo Outcome: Long Gun Purchases}
\textit{Heller} struck down a ban on handguns, not long guns, so background checks for long guns are an outcome the ruling should not move. This makes them a placebo outcome, the cross-outcome counterpart to the temporal placebo \citep[cf.][]{felton2023}. Figure~\ref{fig:longgun} runs the same analysis for long gun checks that Figure~\ref{fig:heller_atts} runs for handguns. The state-specific effects cluster around zero with no clear geographic or political pattern, and the national effects fluctuate around zero without trend. A null response where the legal scope predicts none is what a clean placebo outcome should show, and it strengthens the causal reading of the handgun results.

\begin{figure}[hbt!]
    \centering
    \includegraphics[width=1\textwidth]{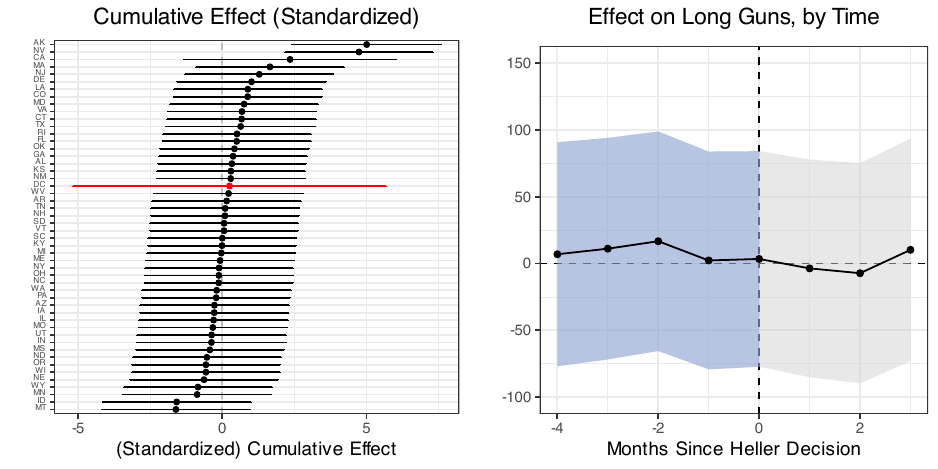}
    \caption{Effects of Heller on Long Gun Background Checks. \textit{Note: } State-specific (standardised) cumulative treatment effects (left panel) for the four-month period and national average treatment effects over time ($\hat{\bar{\tau}}_{t^*}$, right panel) for long gun background checks following \textit{Heller}. In contrast to handgun purchases, long gun checks show no systematic response to the decision, with effects centred near zero across both states and time periods.}
    \label{fig:longgun}
\end{figure}

\section{Details on Covariate Processing and Kernel}
\label{app:kernel_details}

\paragraph{Covariate scaling.} We centre the continuous covariates and the outcome and scale each to unit variance within every unit, so the kernel reads all continuous coordinates on a common scale and its behaviour does not depend on the units in which variables were recorded \citep{hainmueller2014kernel, doudchenko2016balancing}. Posterior means and intervals are then mapped back to the original outcome scale, so estimated effects and their uncertainty are reported in the natural units of the outcome. Categorical coordinates enter the same distance through a separate convention.

\paragraph{Categorical covariates.} Categorical covariates, such as the month and day-of-week indicators used in our applications, need different treatment from the continuous case \citep{johannemann2019sufficient}. Ordinary regression drops one level to avoid collinearity, but a kernel reads the covariates through distances, and dropping a level would make those distances asymmetric across categories, since the omitted level would sit at the origin and appear closer to every other level than the remaining levels are to one another. We instead one-hot encode without a reference level, writing $\mathbf{e}(\ell)$ for the indicator vector of level $\ell$; any two distinct levels are then equidistant, $\|\mathbf{e}(a) - \mathbf{e}(b)\|^2 = 2$, the symmetry appropriate for unordered categories with no natural baseline. We then rescale each indicator by $\sqrt{1/2}$, so that a mismatch on a categorical covariate contributes
\begin{equation}
    \bigl\| \sqrt{\tfrac{1}{2}}\,\mathbf{e}(a) - \sqrt{\tfrac{1}{2}}\,\mathbf{e}(b) \bigr\|^2
    = \tfrac{1}{2}\,(1 + 1) = 1,
    \qquad a \neq b,
    \label{eq:onehot_scale}
\end{equation}
the same amount as a one-standard-deviation difference in a standardised continuous covariate \citep{kpop}. Without this adjustment a categorical mismatch would weigh twice as much as such a difference, and a covariate with many levels would dominate the Euclidean distance, driving the Gaussian kernel towards zero for most pairs and distorting which observations the kernel treats as similar. The rescaling fixes the relative weight of categorical and continuous coordinates in the same spirit as the normalisation that balances the additive kernel components in Section~\ref{subsec:gp_estimation}, so that functions encoding category-specific shifts retain a moderate norm $\|g\|_{\mathcal{H}_k}$ (Appendix~\ref{app:rkhs_theory}) and the worst-case bound of Proposition~\ref{prop:worst_case} stays informative.

\paragraph{Length-scale hyperparameter $b$.} \citet{kpop} select the length-scale $b$ by maximising the variance of the off-diagonal entries of the kernel matrix. The length-scale governs how fast covariance decays with distance and so trades bias against variance, with small $b$ yielding localised kernels that track fine patterns but risk overfitting and large $b$ yielding smoother kernels that generalise but can mask local heterogeneity. We assess whether our findings depend on this rule by re-estimating all effects across a broad range of $b$ around the data-driven choice.

The findings are stable. The large positive D.C.\ effect and the flat national trend both persist across $b$ (Figure~\ref{fig:sense_b_combined}), and the state-specific distribution keeps its shape, a sharp peak at zero with D.C.\ as the outlier (Figure~\ref{fig:sens_b_state}). The localised D.C.\ effect and the national null are therefore not artefacts of hyperparameter selection.

\begin{figure}[hbt!]
    \centering
    \includegraphics[width=1\textwidth]{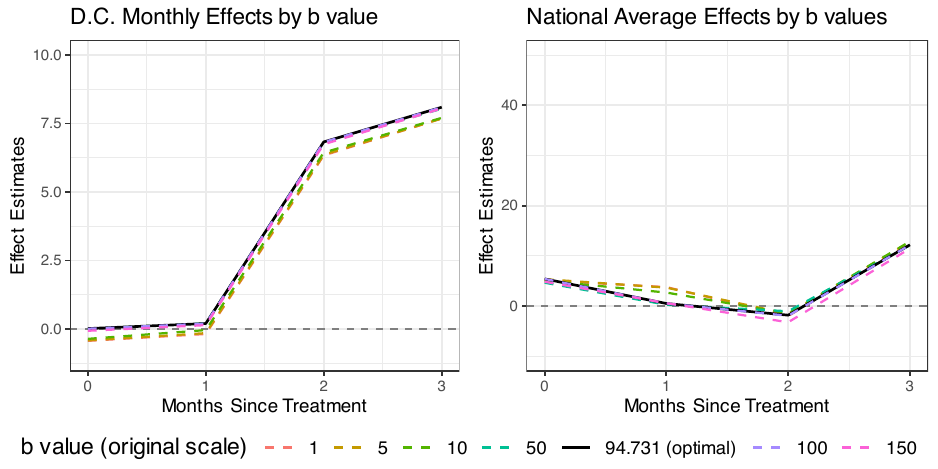}
    \caption{Sensitivity Analysis for Hyperparameter $b$. \textit{Note: } The left panel shows the estimated monthly treatment effect for Washington D.C. The right panel shows the estimated national average ($\hat{\bar{\tau}}_{t^*}$). For D.C., the large, positive effect is robust and stable across a broad range of $b$ values. For the national average, the flat, null trend is also robust. This demonstrates that our findings are not artefacts of hyperparameter selection. The solid black line in each plot indicates $\hat{b}$ chosen via variance-maximisation of the kernel matrix. Prediction intervals are omitted for visual clarity.}
    \label{fig:sense_b_combined}
\end{figure}

\begin{figure}[hbt!]
    \centering
    \includegraphics[width=0.8\textwidth]{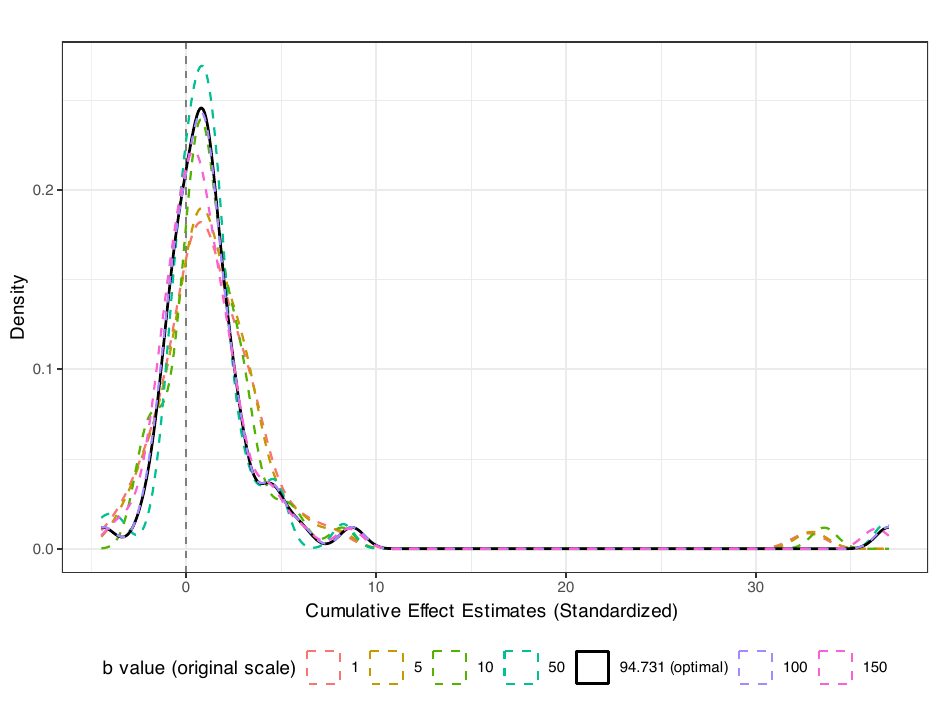}
    \caption{State-Specific Treatment Effect Estimates Across $b$ Values. \textit{Note: } Distribution of state-specific cumulative treatment effects estimated using different $b$ values given kernel choice. The optimal length-scale ($\hat{b} = 94.73$, solid black line) is compared against alternative $b$ selections (dashed coloured lines). Despite varying length-scale choices, the distributions demonstrate the stability of effect estimates across different $b$ parameter values. Intervals are dropped for improved visualisation.}
    \label{fig:sens_b_state}
\end{figure}

\paragraph{Period hyperparameter $p$.}
Periodicity corresponds to translation invariance $f(x) = f(x+p)$, so the periodic kernel reaches maximum correlation when $|x - x'|$ is a multiple of $p$, telling the model to expect similar values every $p$ units. The parameter should match the natural cycle in the data's time units, for example $p = 12$ for monthly data with annual seasonality, $p = 52$ for weekly data, and $p = 7$ for daily data with weekly cycles. When the period is unknown, it can be selected automatically by detrending the pre-treatment series with local regression, computing the periodogram by FFT, taking the frequency of maximum power, and converting it to $p$ \citep{bloomfield2004fourier}.

\clearpage

\begin{singlespacing}
\putbib[bib]
\end{singlespacing}

\end{bibunit}

\end{document}